\providecommand{\forlics}[1]{}
\providecommand{\podc}[1]{}
\providecommand{\disc}[1]{} %
\providecommand{\arxiv}[1]{#1}
\renewcommand\footnotetextcopyrightpermission[1]{} %
\newtheorem{proposition}{Proposition}
\newtheorem{theorem}{Theorem}
\newtheorem{corollary}{Corollary}[proposition]
\lstdefinestyle{mystyle}{
    backgroundcolor=\color{white},   
    basicstyle=\footnotesize\ttfamily, %
    commentstyle=\color{brown},
    keywordstyle=\color{magenta}, %
    stringstyle=\color{teal},
    numberstyle=\tiny\color{codegray},
    breakatwhitespace=false,
    breaklines=false,                 
    captionpos=b,
    keepspaces=true,                 
    numbers=left,
    numbersep=5pt,
    showspaces=false,                
    showstringspaces=false,
    showtabs=false,                  
    tabsize=2,
    numbers=none, %
    xleftmargin=1ex,
    frame = single,
    escapeinside={(*}{*)} %
}
\newcommand\Raw{{\bf Raw}\xspace}
\newcommand\Fair{{\bf Fair}\xspace}
\newcommand\Sure{{\bf Sure}\xspace}
\newcommand\AlwQ{{\bf Alw-Q}\xspace}
\newcommand\QAlw{{\bf Q-Alw}\xspace}
\newcommand\PAlwQ{{\bf P-Alw-Q}\xspace}
\newcommand\PQAlw{{\bf PQ-Alw}\xspace}
\newcommand\Alw{{\bf Alw}\xspace}
\newcommand\PQDur{{\bf PQ-Dur}\xspace}
\newcommand\PQExtraDur{{\bf PQ-Extra-Dur}\xspace}
\newcommand\EachVote{{\bf Each-Vote}\xspace}
\newcommand\SomeLearn{{\bf Some-Learn}\xspace}
\newcommand\EachLearn{{\bf Each-Learn}\xspace}
\newcommand\SomeExec{{\bf Some-Exec}\xspace}
\newcommand\EachExec{{\bf Each-Exec}\xspace}
\newcommand\Resp{{\bf Resp}\xspace}
\providecommand{\msg}[1]{\textcolor{teal}{\co{#1}}}
\newcommand{\kw}[1]{\textcolor{magenta}{\co{#1}}}
\newcommand\AT{\kw{at}\xspace}
\newcommand\ALW{\kw{alw}\xspace}
\newcommand\EVT{\kw{evt}\xspace}
\newcommand\EACH{\kw{each}\xspace}
\newcommand\SOME{\kw{some}\xspace}
\newcommand\HAS{\kw{has}\xspace}
\newcommand\DURING{\kw{during}\xspace}
\newcommand\LASTS{\kw{lasts}\xspace}
\newcommand\AFTER{\kw{after}\xspace}
\title{What's~Live? Understanding~Distributed~Consensus\thanks{\fund}}
\title{What's~Live?~Understanding~Distributed~Consensus}
\title[What's Live? Understanding Distributed Consensus]{
What's Live? Understanding Distributed Consensus}
\newcommand\fund{
  This work was supported in part by NSF under grants 
  CCF-1414078 and %
  CCF-1954837 %
  and ONR under grant
  N000142012751. %
}
\author{Saksham Chand \Hex{8} Yanhong A. Liu\Vex{1}\\
  {\small Computer Science Department, Stony Brook University,
  Stony Brook, NY 11794, USA}\\
  {\footnotesize \tt\{schand,liu\}@cs.stonybrook.edu}\Vex{-2}}
\date{}
\author{Saksham Chand}
\email{schand@cs.stonybrook.edu}
\author{Yanhong A. Liu}
\email{liu@cs.stonybrook.edu}
\affiliation{%
  \department{Computer Science Department}
  \institution{Stony Brook University}
  \city{Stony Brook}
  \state{New York}
  \postcode{11794}
  \country{USA}
}
\author{Saksham Chand}{Stony Brook University}{schand@cs.stonybrook.edu}{}{}
\author{Yanhong A. Liu}{Stony Brook University}{liu@cs.stonybrook.edu}{}{}
\keywords{Distributed consensus, liveness, formal specification, hierarchy
  of properties} %
\begin{document}
\maketitle

\notes{ %
We show that all properties specified that too weak, either assuming there will be no failures ...
We further propose a stronger property than all these properties, and show that Paxos Made Moderately Complex, when extended with failure detection, satisfies this property.
} 
\providecommand{\abstractText}{%

Distributed consensus algorithms such as Paxos have been studied
extensively. They all use the same definition of safety. Liveness is
especially important in practice despite well-known theoretical
impossibility results. However, many different liveness properties and
assumptions have been stated, and there are no systematic comparisons for
better understanding of these properties.

This paper systematically studies and compares different liveness
properties stated for over 30 prominent consensus algorithms and variants.
We introduce a precise high-level language and formally specify these
properties in the language.
We then create a hierarchy of liveness properties combining two hierarchies
of the assumptions used and a hierarchy of the assertions made, and compare
the strengths and weaknesses of algorithms that ensure these properties.
Our formal specifications and systematic comparisons led to the discovery
of a range of problems in various stated liveness properties, from too weak
assumptions for which no liveness assertions can hold, to too strong
assumptions making it trivial to achieve the assertions.
We also developed TLA+ specifications of these liveness properties, and we
use model checking of execution steps to illustrate liveness patterns for
Paxos. 
}

\begin{abstract}
\abstractText
\end{abstract}

\pagestyle{plain}
\thispagestyle{plain}
\hypersetup{
    linkcolor=red,
    citecolor=green,
    filecolor=cyan,
    menucolor=red,
    urlcolor=cyan,
}

\section{Introduction}

Distributed systems are pervasive, where processes work with each other via
message passing.  For example, this article is available to the current
reader owing to some distributed system.  At the same time, distributed
systems are highly complex, due to their unpredictable asynchronous nature
in the presence of failures---processes can fail and later recover, and
messages can be lost, delayed, reordered, and duplicated.
This paper considers the critical problem of distributed consensus---a set of processes trying to agree on a value or a continuing sequence of values%
---under these possible failures.

Distributed consensus under these possible failures is essential in
services that must maintain a state, including many services provided by
companies like Google with, e.g., the Chubby distributed lock
service~\cite{burrows2006chubby} and the BigTable distributed storage
system~\cite{chang2008bigtable}.
This is because such services must use replication to tolerate failures
caused by machine crashes, network outages, etc.  Replicated servers must
agree on the state of the service or the sequence of operations that have
been performed---for example, a customer order has been placed and paid but
not yet shipped---so that when some servers become unavailable, the
remaining servers enable the service to function correctly.

Many algorithms and variations have been created for this consensus
problem, including the most prominent ones listed in
Table~\ref{tab-algo}. These algorithms share similar ideas, with
Paxos~\cite{lamport1998part} becoming the most well-known name and the
focus of many studies.
These algorithms are required to be safe in that only a single value or a
single sequence of values have been agreed on at any time, and the value or
values agreed on are among the values proposed by the processes.
Proving safety of these algorithms has been of utmost interest 
since time immemorial---with varying degrees of rigor, from informal proof
sketches to fully formal machine-checked proofs.

However, liveness of these algorithms---certain desired progress has been
made in reaching agreement---is not always given the same kind of
attention. For example, out of the 31 algorithms listed in
Table~\ref{tab-algo}, only 18 (58\%) of them mention liveness
(Table~\ref{tab-live-summary}) and only 11 (35\%) of them give some kind of
liveness proof (Column ``Proofs'' of Table~\ref{tab-live-summary}).  This
is intriguing because real-world implementations of consensus algorithms
are required to be live,
a.k.a.\ ``responsive'', despite the FLP impossibility
results~\cite{fischer1985impossibility}.  Furthermore, prima facie, it is sometimes
unclear or ambiguous what exactly liveness means for these algorithms and
under what exact conditions they are claimed to satisfy those properties.

\mypar{This paper}
This paper considers consensus algorithms and variants listed in
Table~\ref{tab-algo} and examines and assembles their liveness properties
as stated in the literature, 
including the assumptions under which the properties are satisfied.
We formally specify and precisely compare these liveness properties 
and present our analysis for all of them (summarized in
Table~\ref{tab-live-summary}).
We categorize the assumptions into two classes: (1) link assumptions,
stating what is assumed about the communication links, and (2) server
assumptions, stating what is assumed about the servers in the system.  

We identify three different kinds of link assumptions, seven different
kinds of server assumptions, and six different kinds of liveness assertions
capturing different kinds of progress.
We relate different assumptions and assertions by creating three
hierarchies: a 3-element total order for link assumptions, a 7-element
partial order for server assumptions that includes two diamonds, and a
6-element partial order for liveness assertions that includes a diamond in
the middle.
Together they form an overall hierarchy of liveness properties, which
includes all the liveness properties stated for the algorithms and variants
considered.

We introduce a precise high-level language and formally specify all these
assumptions and assertions in the language.  In fact, it was the precise
language and formal specifications that allowed us to establish the precise
relationships among all the assumptions and assertions.
The overall hierarchy
not only helps in understanding liveness better, but also helps in finding
tighter results.
It led us to the discovery of a range of problems in various stated
liveness properties, from lacking assumptions or too weak assumptions under
which no liveness assertions can hold, to too strong assumptions making it
trivial or uninteresting to satisfy the liveness assertions.  For example,
\begin{itemize}
\item EPaxos~\cite{moraru2013there} only assumes that eventually, there is
  always some quorum of servers that is non-faulty (what we call \AlwQ),
  but this is too weak for reaching consensus, because each such quorum
  might not stay non-faulty long enough to make progress.

\item Zab~\cite{junqueira2011zab}, Paxso-EPR~\cite{padon2017paxos}, and
  some others assume that eventually, there is a server P and a quorum Q of
  servers, such that always, P is non-faulty and is the primary and Q is
  non-faulty (what we call \PQAlw).  Chubby-Live~\cite{chandra2007paxos}
  even assumes that eventually, all servers are always non-faulty.  Such
  strong assumptions make it trivial to reach consensus.

\end{itemize}
This also led us to the question of what the right or best liveness
properties are among the many possible combinations of assumptions and
assertions.  In fact, many protocols assert stronger properties than
necessary.

We also developed TLA+ specifications of these liveness properties. We
show, in Appendix~\ref{app-mc}, that model checking execution steps using
TLC can illustrate liveness patterns for single-value Paxos on up to 4
proposers and 4 acceptors in a few hours, but becomes too expensive for
multi-value Paxos or more processes. 
Appendices~\ref{appendix-live-vs}, \ref{appendix-live-vr},
\ref{appendix-live-paxos}, \ref{appendix-live-fd},
and~\ref{appendix-live-others} show the liveness properties and assumptions
statements that we extracted and assembled from all the algorithms and
variants considered.

\section{Preliminaries}
\label{sec-preliminaries}

This section introduces the terminology used in this paper, the algorithms
and variants considered, and the high-level structure of these algorithm
and variants.

\mypar{Distributed systems and failures considered}
A distributed system is a set of processes each operating on its local data
and communicating with other processes by sending and receiving messages.
A process may crash and may later recover, and a message may be lost,
delayed, put out of order, and/or duplicated. A process is said to be
\defn{non-faulty} if ``it eventually performs the actions that it should,
such as responding to messages''~\cite{lamport2006fast}.

\mypar{Distributed consensus}
The basic consensus problem, called single-value consensus, is to ensure
that a single value is chosen by non-faulty processes from among the set of
values proposed by the processes. The more general consensus problem,
called multi-value consensus, is to choose a single sequence of values,
instead of a single value.

\mypar{Terminology} 
We identify two types of processes: \defn{servers}, which execute some
consensus algorithm and together provide consensus as a service, and
\defn{clients}, which use this service. In case of multi-value consensus,
clients send values to servers for them to order, and each server maintains
a \defn{log}, which is a sequence of chosen values.  The indices of entries in the log
are called \defn{slots}.

A server is said to be \defn{primary} if it is explicitly elected by the
algorithm to be the single primary or leader among all servers in the
system. Some other names for primary that appear in the literature are
president~\cite{lamport1998part},
leader~\cite{lamport2006fast,prisco00revisit,kirsch2008paxos,junqueira2011zab,ongaro2014search},
and coordinator~\cite{mao2008mencius,druagoi2016psync}. In general, the
defining property of the primary is that it is the only server that can
propose values to other servers. Note that some consensus algorithms do not use
a primary.

A \defn{quorum system} $Qs$ is a set of subsets, called quorums, of servers
such that any two quorums overlap, that is,
$\forall Q_1, Q_2 \in Qs : Q_1 \cap Q_2 \neq \emptyset$. The most commonly
used quorum system $Qs$ takes any majority of servers as an element in
$Qs$. A quorum is said to be \textit{non-faulty} if each of its servers is
non-faulty, and \textit{faulty} otherwise.

\notes{
algorithms: vr, paxos, ...
discuss some grouping: 
passive replication vs active replication - sc: we can discuss primary vs non-primary based consensus, because very few of these algorithms actually carry out replication. Most are limited to consensus.

variants: as specified in various formal and informal settings.
executable, or not.
}

\disc{
\providecommand{\tableAlgoA}{0.13}
\providecommand{\tableAlgoB}{0.685}
\providecommand{\tableAlgoC}{0.23}
}
\podc{
\providecommand{\tableAlgoA}{0.12}
\providecommand{\tableAlgoB}{0.61}
\providecommand{\tableAlgoC}{0.2}
}
\arxiv{
\providecommand{\tableAlgoA}{0.135}
\providecommand{\tableAlgoB}{0.69}
\providecommand{\tableAlgoC}{0.16}
}
\forlics{
\providecommand{\tableAlgoA}{0.1}
\providecommand{\tableAlgoB}{0.6}
\providecommand{\tableAlgoC}{0.2}
}
\providecommand{\tableAlgo}{%
\begin{tabular}{@{\arxiv{\Hex{-3}}\disc{\Hex{-3}}\podc{~~}}r@{~}|@{~}p{\tableAlgoA\textwidth}@{\,}|@{~}p{\tableAlgoB\textwidth}@{\,}|@{~}p{\tableAlgoC\textwidth}@{}}
~~  & Name & Description & Language used %
\\
\hline\hline
\rownumber & VS-ISIS & Reliable group communication, Birman-Joseph 1987~\cite{birman1987reliable} & English (items)\\
\rownumber & VS-ISIS2 & Virtual synchrony, Birman-Joseph 1987~\cite{birman1987exploiting} & English\\
\rownumber & EVS & Extended virtual synchrony for network partition, Amir et al 1995~\cite{amir1995replication,amir1995totem} & pseudocode\\
\rownumber & Paxos-VS & Virtually synchronous Paxos, Birman-Malkhi-van Renesse 2012~\cite{birman2010virtually} & pseudocode\\
\rownumber & Derecho & Virtually synchronous state machine replication, Jha et al 2019~\cite{jha2019derecho} & pseudocode\\
\hline\hline

\rownumber & VR & Viewstamped replication, Oki-Liskov 1988~\cite{oki1988viewstamped} & pseudocode (coarse)\\
\rownumber & VR-Revisit & VR revisited, Liskov 2012~\cite{liskov2012viewstamped} & English (items)\\
\hline\hline

\rownumber & Paxos-Synod & Paxos in part-time parliament, Lamport 1998~\cite{lamport1998part} & TLA\,\cite{lamport1994temporal} (single-value)\\
\rownumber & Paxos-Basic & Single-value Paxos, Lamport 2001~\cite{lamport2001paxos} & English (items)\\
\rownumber & Paxos-Fast & Single-value Paxos with replicas proposing, Lamport 2006~\cite{lamport2006fast} & English\,\,(items), TLA+\,\cite{chaudhuri2010tla+}\\
\rownumber & Paxos-Vertical & Single-value Paxos with external starting of leader election,
Lamport-Malkhi-Zhou 2009~\cite{lamport2009vertical} & PlusCal~\cite{lamport2009pluscal}\\
\hline\hline

\rownumber & CT & Single-value consensus with crash failures, Chandra-Toueg 1996~\cite{chandra1996unreliable} & pseudocode\\
\rownumber & ACT & Single-value consensus in crash-recovery model, Aguilera-Chen-Toueg 2000~\cite{aguilera2000failure} & pseudocode\\
\hline\hline

\notes{
\rownumber & GIRAF & Liveness in generalized round-based algorithm framework, Keidar-Shraer 2006~\cite{keidar2006timeliness,keidar2006timelinessTR} & pseudocode, IOA\\
\hline\hline
}

\rownumber & Paxos-Time & Paxos with time analysis, De Prisco-Lampson-Lynch 2001~\cite{prisco00revisit} & IOA\,\cite{lynch1988introduction} (single-value)\\

\rownumber & Paxos-PVS & Single-value Paxos for proof, Kellom{\"a}ki 2004~\cite{kellomaki2004annotated} & PVS~\cite{owre1992pvs}\\
\hline

\rownumber & Chubby & Paxos in Google's Chubby lock service, Burrows 2006~\cite{burrows2006chubby} & English (partial items)\\

\rownumber & Chubby-Live & Chubby in Paxos made live, Chandra-Griesemer-Redstone 2007~\cite{chandra2007paxos} & English\\

\rownumber & Paxos-SB & Paxos for system builders, Kirsch-Amir 2008~\cite{kirsch2008paxos} & pseudocode\\

\rownumber & Mencius & Paxos with leaders proposing in turn, Mao et al 2008~\cite{mao2008mencius} & English (items)\\

\rownumber & Zab & Yahoo/Apache's Zookeeper atomic broadcast, Junqueira-Reed-Serafini 2011~\cite{junqueira2011zab} & English (items)\\

\rownumber & Zab-FLE & Zab with fast leader election, %
Medeiros 2012~\cite{medeiros2012zookeeper} & pseudocode\\

\rownumber & EPaxos & Egalitarian Paxos, Moraru-Andersen-Kaminsky 2013~\cite{moraru2013there} & pseudocode\\% Even in the full version(http://www.cs.cmu.edu/~imoraru/epaxos/tr.pdf) garbage collection, state reduction, etc is not mentioned.

\rownumber & Raft & Consensus in RAMCloud, Ongaro-Ousterhout 2014~\cite{ongaro2014search} & pseudocode\\%C++ impl and TLA+ spec mentioned in paper but not presented in paper

\rownumber & Paxos-Complex & Paxos made moderately complex, van Renesse-Altinbuken 2015~\cite{van2015paxos} & pseudocode, Python\\
\hline

\rownumber & Raft-Verdi & Raft for proof using Coq, 
Wilcox et al 2015~\cite{wilcox2015verdi} & Verdi~\cite{wilcox2015verdi}\\

\rownumber & IronRSL & Paxos in Microsoft's IronFleet for proof, Hawblitzel et al 2015~\cite{hawblitzel2015ironfleet} & Dafny~\cite{leino2010dafny}\\

\rownumber & Paxos-TLA & Paxos for proof using TLAPS, Chand-Liu-Stoller 2016~\cite{chand2016formal}& TLA+\\

\rownumber & LastVoting-PSync & Single-value Paxos in Heard-Of model for proof, %
Dr{\u{a}}goi-Henzinger-Zufferey 2016~\cite{druagoi2016psync} & PSync~\cite{druagoi2016psync}\\

\rownumber & Paxos-EPR & Paxos in effectively propositional logic for proof, Padon et al 2017~\cite{padon2017paxos} & Ivy~\cite{padon2016ivy}\\

\rownumber & Paxos-Decon & Paxos deconstructed, %
Garcia et al 2018~\cite{garcia2018paxos,garcia2018paxos-arxiv}& Scala/Akka~\cite{akka2016doc}\\

\rownumber & Paxos-High & Paxos in high-level executable specification, Liu-Chand-Stoller 2019~\cite{liu2019moderately} & DistAlgo~\cite{liu2017clarity}\\
\hline
\end{tabular}
}

\newcounter{magicrownumbers}
\newcommand\rownumber{\stepcounter{magicrownumbers}\arabic{magicrownumbers}}
\begin{table*}[htp]
\disc{\Vex{-3}}
\small
\centering
\tableAlgo
\disc{\Vex{-4}}
\caption{Distributed consensus algorithms and variants, and languages used to express them.\podc{\Vex{-4}}\disc{\Vex{-2}}}
\label{tab-algo}
\end{table*}

\mypar{Algorithms and variants}
Table~\ref{tab-algo} lists 31 prominent algorithms and variants for
distributed consensus in the literature, together with the languages in
which they are expressed.
\begin{itemize}

\item The first group (1-5) comprises systems and algorithms based on what
  is known as Virtual Synchrony~\cite{birman1987exploiting}, the first
  algorithm
  for reliable group communication using broadcast primitives.

\item The second group (6-7) is for what is called Viewstamped
  Replication~\cite{oki1988viewstamped}, the earliest algorithm that has a
  primary-backup architecture.

\item The third group (8-11) is for Paxos~\cite{lamport1998part} and core
  variants, the earliest algorithm based on symmetric leader election and
  state-machine replication. Basic Paxos solves the problem of single-value
  consensus. Multi-Paxos, hereafter simply called Paxos, solves the problem
  of multi-value consensus.

\item The fourth group (12-13) is for two core classes of single-value
  consensus algorithms with failure detection, one where servers follow
  crash failure model (12) and the other where servers follow crash-recover
  failure model (13).

\item The fifth group (14-31) is for other algorithm variations. The
  earliest two (14-15) and latest seven (25-31) in this group are variants
  specified formally, in a language with machine-supported syntax and
  semantics checking, and with proof supports or compilation and execution.

\end{itemize}
Note that the similarity and differences among the names do not necessarily
reflect those in the algorithms or variants.  For example, Paxos-VS differs
from Paxos much more than Raft differs from Paxos, whereas Chubby and Raft
are very similar despite having completely different names.

\mypar{High-level algorithm structure} At a (sufficiently) high-level, all
of the algorithms in Table~\ref{tab-algo} are quorum-based consensus
algorithms that execute in \defn{rounds}. A round, also known as a ballot,
view, epoch, and term, is identified by a unique number, and is initiated
by at most one server. Multiple rounds may run simultaneously. 
\begin{itemize}

\item A server initiates a new round once it thinks that the current round
  has died, i.e., stopped making progress. Different algorithms use
  different conditions to decide that %
  a round is considered dead. 

\item A server \defn{learns} a value upon either receiving \defn{votes}
  from a quorum of servers for that value in some round or a message from
  some other server informing it about the value learned by that server. 

\item If the server is maintaining a state machine, it then \defn{executes}
  the learned value on the state machine. 

\item The \defn{result} of executing a value is sent to the client as a
  \defn{response} to the value it initially \defn{requested}.
\end{itemize}

\section{Language}

To describe properties, especially liveness properties, of distributed
systems, we introduce a precise high-level language. It supports direct use
of sets and predicate logic to capture system state and state properties, and
extends them with time and temporal expressions to capture properties
related to time.

\mypar{System state}
The following set and predicate model processes and process failures in the
system:
\begin{itemize}

\item \co{servers} --- the set of all servers in the system.  
This set may be dynamically updated, because servers can be added or removed.

\item \co{clients} --- the set of all clients in the system. 
Clients send values to servers for the servers to form consensus on.

\item %
\co{\p{p}.nf} --- true iff process \p{p} is non-faulty. %

\end{itemize}
To model messages, there are two built-in history variables for each
process \p{p}:
\begin{itemize}

\item \co{\p{p}.sent} --- the sequence of all messages sent by \p{p},
  optionally including the receiver process with each message.

\item \co{\p{p}.received} --- the sequence of all messages received by
  \p{p}, optionally including the sender process with each message.

\end{itemize}
For ease of reading, we use \co{\p{p}.sent \p{m} to \p{p2}} to mean \co{(\p{m},\p{p2}) in \p{p}.sent}, and use \co{\p{p}.received} \co{\p{m} from \p{p2}} to mean \co{(\p{m},\p{p2}) in \p{p}.received}. The clauses \co{to \p{p2}} and \co{from \p{p2}} are optional.

For consensus algorithms, the following variables and predicates are used:
\begin{itemize}
\item \co{quorums} --- the set of quorums of servers used by the algorithm.

\item \co{values} --- the set of proposed
  values from which a value or a sequence of values can be agreed on.

\item \co{\p{p}.is\_primary} --- whether \p{p} is currently the primary server.

\item \co{rounds} --- the total-ordered set of round identifiers.  Each uniquely determines a server and its current attempt at being the unique primary among all servers.

\end{itemize}\Vex{-1}

\mypar{Set expressions and quantifications}
\notes{
We use patterns for matching messages:
\begin{itemize}

\item A pattern can be used to match a message, in \co{sent} and
  \co{received}, and by a \co{receive} definition.  A constant value, such as
  \co{'response'}, or a previously bound variable, indicated with prefix
  \co{=}, in the pattern must match the corresponding components of the
  message.  An underscore matches anything.  Previously unbound variables
  in the pattern are bound to the corresponding components in the matched
  message.

  For example, \co{received ('response',=n,\_) from a} matches
  every triple in \co{received} whose first two components are
  \co{'response'} and the value of \co{n}, and binds \co{a} to the sender.
\end{itemize}
}
We use queries for expressing computations at a high level.
A query can be a comprehension, aggregation, or %
quantification over sets or sequences.
\begin{itemize}

\item A comprehension, also called a set former,
\arxiv{\co{\{\p{e}:~\p{v\sb{1}} in \p{s\sb{1}}, ..., \p{v\sb{k}} in \p{s\sb{k}}, \p{cond}\}},}%
\forlics{
\begin{code}
\quad \{\p{e}:~\p{v\sb{1}} in \p{s\sb{1}}, ..., \p{v\sb{k}} in \p{s\sb{k}}, \p{cond}\},
\end{code}
}
returns the set of values of \co{\p{e}} for all combinations of
  values of variables that satisfy all \co{\p{v_i} in \p{s\sb{i}}} clauses
  and condition \co{\p{cond}}.
  
\item An aggregation, \co{\p{agg} \p{s}}, where \co{\p{agg}} is an
  aggregation operator such as \co{count} or \co{max}, returns the value of
  applying \co{\p{agg}} to the set value of \co{s}.
  
\item A universal quantification,
\arxiv{\co{\EACH \p{v\sb{1}} in \p{s\sb{1}}, ...,
    \p{v\sb{k}} in \p{s\sb{k}} \HAS \p{cond}},}%
\forlics{
\begin{code}
\quad \EACH \p{v\sb{1}} in \p{s\sb{1}}, ...,
    \p{v\sb{k}} in \p{s\sb{k}} \HAS \p{cond},
\end{code}
}
returns true iff for each
  combination of values of variables that satisfy all \co{\p{v\sb{i}} in
    \p{s\sb{i}}} clauses, \co{\p{cond}} holds.

\item An existential quantification,
\arxiv{\co{\SOME \p{v\sb{1}} in \p{s\sb{1}},
    ..., \p{v\sb{k}} in \p{s\sb{k}} \HAS \p{cond}},}%
\forlics{
\begin{code}
\quad \SOME \p{v\sb{1}} in \p{s\sb{1}},
    ..., \p{v\sb{k}} in \p{s\sb{k}} \HAS \p{cond},
\end{code}
}
returns true iff for
  some combination of values of variables that satisfy all \co{\p{v\sb{i}}
    in \p{s\sb{i}}} clauses, \co{\p{cond}} holds.
When the query returns true, variables \p{v_1}, \m{\ldots}, \p{v_k} are bound to
  a combination of satisfying values, called a witness.
  
\end{itemize}\Vex{-1}

\mypar{Time and time scope}
There is a built-in notion of time, a number that ranges 
from the start of time, \co{0}, to unbounded time later, \co{inf}.  We omit the unit for time because the discussion is general and any unit can be used.

A time interval denotes all times between two times.
Time intervals are represented using common notation for intervals,
For example, \co{(t1,t2]} is the time interval from \co{t1} and \co{t2} excluding \co{t1} but including \co{t2}, and \co{[0,inf)} is the entire range of time.
A duration is the difference between two times.

To specify temporal properties, each expression has a time parameter specifying when the value of the expression is taken.
\begin{itemize}
\item \co{\p{e} \AT \p{t}} --- the value of expression \p{e} at time \p{t}.
This includes %
the case that \p{e} is a Boolean-valued condition. %

\item \co{\p{e} \AT{} .} --- for when the time parameter is implicit, 
where ``\co{.}'' (dot) denotes the implicit time in the scope of \p{e}.
For an outermost expression, the implicit time is \co{0}.

\end{itemize}
In any scope, the time parameter distributes to subexpressions, that is,
\co{\AT} is distributive.  For example, \co{(x + y) \AT t} \m{=} \co{(x \AT
  t) + (y \AT t)}.

\mypar{Temporal expressions}
There are two most important temporal operations:
\co{\ALW}, read as ``always'', and \co{\EVT}, read as ``eventually'',
defined as follows, where \p{t} is a fresh variable not used in the scope
of the expression being defined.
\begin{itemize}
\item \co{\ALW \p{cond} \m{=} \EACH \p{t} in [., inf) \HAS (\p{cond} \AT
    \p{t})}.  That is, \p{cond} holds at each time starting from the time
  in scope.

\item \co{\EVT \p{cond} \m{=} \SOME \p{t} in [., inf) \HAS (\p{cond} \AT
    \p{t})}.  That is, \p{cond} holds at some time starting from the time
  in scope.
\end{itemize}
The implicit time can be made explicit by following the definitions.  
For example, for an expression at the outermost scope, we have the following:
\begin{code}
    \ALW \EVT \p{cond} \m{=} (\ALW (\EVT \p{cond})) \AT 0
                  \m{=} \EACH t in [0,inf) \HAS ((\EVT \p{cond}) \AT t)
                  \m{=} \EACH t in [0,inf) \HAS ((\SOME t2 in [t, inf) \HAS \forlics{
                                             }(\p{cond} \AT t2)) \AT t)
\end{code}
For convenience, we define operator \co{\DURING}, where again \p{t} is a fresh variable:
\begin{itemize}
\item \co{\p{cond} \DURING \p{intvl} 
    \m{=} \EACH \p{t} in \p{intvl}:~\p{cond} \AT \p{t}}
\end{itemize}
We also define operators \co{\LASTS} and \co{\AFTER}:
\begin{itemize}

\item \co{\p{cond} \LASTS \p{dur} \m{=} \p{cond} \DURING [., .+\p{dur}]}

\item \co{\p{cond} \AFTER \p{dur} \m{=} \p{cond} \DURING (.+\p{dur}, inf)}

\end{itemize}\Vex{-1}

\mypar{Shorthand for non-faulty processes}
Finally, we define a shorthand for stating whether a set \p{ps} of
processes are all non-faulty, where \p{p} is a fresh variable:
\begin{code}
    \p{ps} nf \m{=} \EACH \p{p} in \p{ps} \HAS \p{p}.nf
\end{code}

\notes{
define, not good
unchanged(var) \DURING dur = \EACH t in [., .+dur] \HAS var \AT t = var \AT .

\begin{code}
    \p{cond} \AFTER \p{d} \m{=} \p{cond} \DURING (.+\p{d}, inf)
\end{code}
not yet used for consensus:
\begin{code}
    \p{cond} since \p{t} \m{=} \p{cond} \DURING [\p{t}, .]
    \p{cond} until \p{t} \m{=} \p{cond} \DURING [., \p{t})
\end{code}
after, since, until can be ambiguous: after a time or a duration. use time and dur.

yet no use of temporal logic operators: next p, p until q

operator \co{leadsto}, read as ``leads to'':
\begin{code}
    \p{cond} leadsto \p{cond2} \m{=} \p{cond} implies (\EVT \p{cond2}) 
\end{code}
operator \co{infoft}, read as ``infinitely often'', meaning also ``continually'', vs ``continuously'' for \co{\ALW}:
\begin{code}
    infoft \p{cond} \m{=} \ALW \EVT \p{cond}
\end{code}

not used: action and trigger:
set timer, timeout
} %

\section{Liveness specification and liveness hierarchy}
\label{sec-live-hierarchy}

We formally specify the liveness properties for all the algorithms and
variants listed in Table~\ref{tab-algo} as stated by their authors,
including assumptions about the communication links and about the servers,
which we call link assumptions and server assumptions, respectively.
We create a total order of link assumptions, a partial order of server
assumptions, and a partial order of liveness properties.  Elements in these
three hierarchies can be straightforwardly combined to form a single
liveness hierarchy.

Table~\ref{tab-link} shows the total order of link assumptions, in
increasing strength from top to bottom.
Figure~\ref{fig-live-hierarchy} shows the hierarchies for server
assumptions (on the left) and liveness assertions (on the right).
A solid arrow from node $A$ to node $B$ denotes that $A$ implies $B$, that
is, if $A$ is true, then $B$ must be true; we say that $A$ is
\defn{stronger} than $B$.
A dashed arrow (only from \Resp to \EachExec on the right) denotes that if
$A$ is true, then $B$ may be true.
Note that weaker assumptions are better, and stronger assertions are
better.  To relate to
Lamport's notion of weak and strong fairness~\cite{lamport1994temporal},
note that the assumptions made by weak fairness are stronger than the
assumptions made by strong fairness.

The total order for link assumptions is proved precisely in
Theorem~\ref{thm-link}.  The partial orders for server assumptions and
liveness assertions are argued informally but can be proved precisely in a
similar fashion as for link assumptions.

\subsection{Liveness assumptions}

\Vex{-2}
\mypar{Link assumptions} 
Each algorithm makes some assumptions about the communication links used,
forming three different kinds of link assumptions, which are used to
separate the algorithms.
\begin{enumerate}

\item \Raw{} --- messages sent may or may not be received.
  This includes algorithms and variants that do not discuss liveness
  propositions, or do not precisely describe the assumptions made about
  links in their liveness discussion.  For example, VR-Revisit only states
  that some processes ``are able to
  communicate''~\cite[Section~8]{liskov2012viewstamped}.

\item \Fair{} --- all links between servers are \defn{fair}, that is ``if a
  correct process repeatedly sends a message to another correct process, at
  least one copy is eventually delivered''~\cite{van2015paxos}.
We also include algorithms and variants that assume no communication failures in this category under the assumption that copies of a same message are identical, meaning that retransmitting a message does not change the set of sent messages. With this view, fair links assumptions states that every %
message in the set of sent messages is eventually delivered---which
coincides with no communication failure. %

\item \Sure{} --- the time between a message being sent and the message
  being received has a known upper bound. Note that this implies that every
  sent message is received. This differs from fair links by having a known
  upper bound.

\end{enumerate}

Figure~\ref{fig-link-formal} shows the three kinds of link assumptions
specified formally, where the known bound for \Sure is captured by the
explicit parameter \m{D}.  It is easy to see precisely that the difference
between \Raw and \Fair is the outermost existential vs.\ universal
quantification, and the difference between \Fair and \Sure is the eventual
condition vs.\ the known duration.\m{\!}
\begin{figure*}
\begin{lstlisting}
Raw     = some p1.sent m to p2 has evt p2.received m from p1

Fair    = each p1.sent m to p2 has evt p2.received m from p1

Sure(D) = each p1.sent m to p2 has (p2.received m from p1 after D)
\end{lstlisting}\podc{\Vex{-2}}\disc{\Vex{-2}}\arxiv{\Vex{-1}}
\notes{alternatives/extended for proofs:
Raw     = some p1.sent m to p2 at t has 
                 (some d in [0, inf) has p2.received m from p1 at t+d)
                 
Fair    = each p1.sent m to p2 at t has 
                 (some d in [0, inf) has p2.received m from p1 at t+d)
                 (some t2 in [t, inf) has p2.received m from p1 at t2)

Sure(D) = each p1.sent m to p2 at t has 
                 (some d in [0, D] has p2.received m from p1 at t+d)
                 (some t2 in [t, t+D] has p2.received m from p1 at t2)
}
\notes{
todo: 
d could start at 0 not after 0

possibly add direct temporal actions later:
def fair\_links():
    eventually(always(
        each(p1 in processes(), p2 in processes(), has=
            implies(p1.alive and p2.alive,\
                    leads_to(infinitely_often(p1.send(m, to= p2)),
                             p2.received(m, from_= p1)))
                             #m from p1 in p2.received))) 
                             #change all to ideal later
    )))
}
\caption{Link assumptions expressed precisely.\podc{\Vex{-1}}\disc{\Vex{-1}}}
\label{fig-link-formal}
\end{figure*}

Table~\ref{tab-link} shows the algorithms and variants in
Table~\ref{tab-algo} grouped into three categories based on the link
assumptions they stated.  We see that almost half of the 31 algorithms and
variants fall under \Raw.
\providecommand{\tablinkwidth}{\forlics{34}\podc{92}\disc{84}\arxiv{79}ex}
\begin{table}
\small
\centering
\begin{tabular}{@{}c@{\,}||@{~}p{\tablinkwidth}@{~}|@{\,}c@{}}
    Assumption & \multicolumn{1}{c|@{\,}}{Algorithms and Variants} & Count
    \\\hline\hline
    \Raw
    & 
    VS-ISIS, VS-ISIS2, Derecho, VR, VR-Revisit, Paxos-Basic, Paxos-Vertical, Paxos-PVS, Chubby, Zab-FLE, Raft, Paxos-TLA, Raft-Verdi, Paxos-Decon, Paxos-High 
    & 15
    \\\hline
    \Fair
    &
    EVS, Paxos-VS, Paxos-Fast, CT, ACT, Paxos-SB, Mencius, EPaxos, Paxos-Complex, Paxos-EPR 
    & 10
    \\\hline
    \Sure
    & 
    Paxos-Synod, Paxos-Time, Chubby-Live, Zab, IronRSL, LastVoting-PSync
    & 6
    \\\hline
\end{tabular}
\podc{\Vex{-2}}\disc{\Vex{-2}}\arxiv{\Vex{-1}}
\caption{Assumptions about links used in algorithms and variants in Table~\ref{tab-algo}.\disc{\Vex{-4}}}
\label{tab-link}
\end{table}

Theorem~\ref{thm-link} shows that the three kinds of link assumptions form
a total order.  It is equally easy to show that, for any duration \m{D1}
shorter than or equal to \m{D2}, \Sure(\m{D1}) implies \Sure\nolinebreak(\m{D2}).
\begin{theorem}
  \label{thm-link}
  \Fair implies \Raw, and \Sure implies \Fair.
\end{theorem}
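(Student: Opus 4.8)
The plan is to unfold each implication to the defining quantifiers of Figure~\ref{fig-link-formal}, together with the definitions of \EVT, \DURING, and \AFTER, and then reduce each to an elementary monotonicity fact: a universal quantification entails the corresponding existential one over a nonempty domain, and membership in a subinterval entails membership in an enclosing interval.

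First I would handle \Fair implies \Raw. The two formulas share verbatim the same body, \EVT \co{p2.received m from p1}, and the same quantification domain --- the triples \co{(p1,p2,m)} satisfying \co{p1.sent m to p2} --- differing only in the outermost quantifier, \EACH versus \SOME. So the implication is exactly the fact that a universal (\EACH) statement entails the corresponding existential (\SOME) one whenever the domain is nonempty. The one subtlety here is vacuity: if no message is ever sent, then \Fair holds vacuously whereas \Raw is false. I would therefore carry the standing assumption that at least one message is sent, which holds in any running protocol, and under it the implication is immediate.

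Next I would handle \Sure(\m{D}) implies \Fair. Here the outermost quantifier \EACH and the domain of sent messages already coincide, so it suffices to show, for each fixed sent message, that the body of \Sure entails the body of \Fair. By the definitions of \AFTER and \DURING, the body of \Sure asserts that \co{p2.received m from p1} holds at every time in the tail interval \co{(.+D, inf)}; by the definition of \EVT, the body of \Fair asserts that the same predicate holds at some time in \co{[., inf)}. Since \m{D \geq 0}, the interval \co{(.+D, inf)} is a nonempty subinterval of \co{[., inf)}, so a predicate holding at every point of it holds in particular at some point of the enclosing interval. This yields the body implication, and monotonicity of \EACH over the shared domain then lifts it to the full formulas.

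The difficulty is precision rather than depth. The step from \Sure to \Fair rests entirely on \co{(.+D, inf)} being both nonempty --- which holds because time is unbounded above by \co{inf}, so for instance \co{.+D+1} lies in it --- and contained in \co{[., inf)}, which holds because \m{D \geq 0}; I would verify both explicitly so that the passage from \DURING over the subinterval to \EVT over the superinterval is airtight. The other point needing care is the vacuity caveat for \Fair implies \Raw, which I would make an explicit hypothesis rather than leave implicit.
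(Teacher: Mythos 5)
Your proof is correct, and for \Fair implies \Raw it is the same as the paper's: the paper likewise reduces this to universal-implies-existential over the sent messages, carrying the same non-vacuity caveat (phrased there as ``When any message is sent''). For \Sure implies \Fair the two arguments diverge in how the body of \Sure is unfolded. The paper rewrites all three properties so that their bodies are existentials over the delay: \Raw and \Fair become ``some \m{d} in $[0,\infty)$ has received at \m{t+d}'', and \Sure becomes ``some \m{d} in $[0,D]$ has received at \m{t+d}''; the second implication is then just domain monotonicity of the existential, $[0,D]\subseteq[0,\infty)$. You instead keep the literal definition \co{\AFTER} $=$ \co{\DURING} over $(.+D,\infty)$, so \Sure's body is a \emph{universal} over the tail interval, and you invoke the fact that a universal over a nonempty subinterval entails an existential over the enclosing interval. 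Both are one-step monotonicity arguments, but yours is more faithful to the paper's stated semantics of \co{\AFTER}: the paper's unfolding of ``after \m{D}'' as ``at some time within \m{D}'' matches its prose intent (a known upper bound on delivery time) yet is not literally what its \co{\AFTER} definition says---it is justified only because \co{received} is a monotone history predicate (once received, always received), an assumption the paper's proof leaves implicit. So your route closes a small semantic gap the paper glosses over, at the cost of the (trivial) extra check that $(.+D,\infty)$ is nonempty and contained in $[.,\infty)$.
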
\disc{\Vex{-4}}
\begin{proof}
  By definitions of \Raw, \Fair, and \Sure, and definitions of \co{\EVT}
  and \co{\AFTER}, we have the following:
\begin{code}
\arxiv{  }Raw    \arxiv{
  } = \SOME p1.sent m to p2 \AT t \HAS \SOME d in [0,inf)\forlics{
              } \HAS (p2.received m from p1 \AT t+d)
\arxiv{  }Fair   \arxiv{
  } = \EACH p1.sent m to p2 \AT t \HAS \SOME d in [0,inf)\forlics{
              } \HAS (p2.received m from p1 \AT t+d)
\arxiv{  }Sure(D) \arxiv{
  } = \EACH p1.sent m to p2 \AT t \HAS \SOME d in [0,D]  \forlics{
              } \HAS (p2.received m from p1 \AT t+d)
\end{code}
When any message is sent, top-level universal quantification (\co{\EACH})
in \Fair implies the existential quantification (\co{\SOME}) in \Raw, thus
\Fair implies \Raw.  
Some \co{d} in \co{[0,D]} in \Sure implies some \co{d} in \co{[0,inf)} in
\Fair, thus \Sure implies \Fair.
\end{proof}

\mypar{Server assumptions}
The server assumptions form a 7-element partial order, as described
below.  Figure~\ref{fig-server-formal} shows formal specifications of the server
assumptions.  Figure~\ref{fig-live-hierarchy}-left shows
the hierarchy they form.

\begin{enumerate}

\item \AlwQ{} --- eventually, there is always some quorum of servers that
  is non-faulty.  Note that no fixed quorum is non-faulty. This is the most
  basic server assumption because in quorum-based consensus algorithms, the
  system is said to be \defn{down} if no quorum is non-faulty. In other
  words, this assumption states that eventually, the system is always
  \defn{up}.  This is called Strong-L1 in~\cite{kirsch2008paxos}. We
  include EPaxos in this category.

\item \QAlw{} --- eventually, there is some quorum of servers that is always
  non-faulty.
  This is called Weak-L1 in~\cite{kirsch2008paxos} and is used to argue
  liveness of Paxos-SB.  This assumption is stronger than \AlwQ because it
  assumes a \textit{fixed} non-faulty quorum. We also include EVS, CT, ACT,
  and Paxos-SB in this category.
\notes{
  Note that the liveness assumptions of CT and ACT state that eventually, a
  majority of servers is \textit{good}, where good is defined differently
  for each. Prima facie, this looks to be \AlwQ, however, we categorize
  them in \QAlw due to lemmas~\ref{lem-alwq-qalw-crash}
  and~\ref{lem-alwq-qalw-crash-recovery} from
  Section~\ref{sec-live-assm-lemmas}, respectively.  }

\item \PAlwQ{} --- eventually, there is a server P such that, always, P is
  non-faulty and is the primary and there is some quorum Q of servers that
  is non-faulty.
  \PAlwQ is stronger than \AlwQ because it assumes a fixed primary that is
  eventually always non-faulty.
  If we assume that primary is always part of the non-faulty quorum, then
  \PAlwQ can be interpreted as that after some time, it always holds that
  some quorum containing the primary is non-faulty. We include Paxos-VS,
  Derecho and VR-Revisit in this category.

\item \PQAlw{} --- eventually, there is a server P and a quorum Q of
  servers, such that always, P is non-faulty and is the primary and Q is
  non-faulty.
  This assumption is stronger than \QAlw because it assumes a fixed primary
  and is stronger than \PAlwQ because it assumes a fixed non-faulty quorum.
  Zab's liveness is proved using this. Paxos-Fast, IronRSL, and Paxos-EPR
  use a slight generalization of this assumption which substitutes a single
  non-faulty quorum with a set of non-faulty quorums.

\item \Alw{} --- eventually, all servers are always non-faulty, i.e.,
  non-faulty forever from that time on. This is obviously the strongest
  assumption.  This assumption is used while checking liveness of
  Chubby-live. The authors begin by testing for only safety while their
  failure injector injects random failures into the system. After some
  time, the failure injector is stopped and the system is given time to
  fully recover. Once recovered, they do not inject any more failures and
  check for liveness violations.
The purpose of their liveness test is to check that the system does not deadlock after a sequence of failures.
ACT gives liveness performance guarantees assuming this.
\end{enumerate}

\begin{figure*}
\begin{lstlisting}
Alw-Q     = evt alw some q in quorums has q nf

Q-Alw     = evt some q in quorums has alw q nf

P-Alw-Q   = evt some p in servers has
                alw (p.nf and p.is_primary and some q in quorums has q nf)

PQ-Alw    = evt some p in servers, q in quorums has
                alw (p.nf and p.is_primary and q nf)

Alw       = evt alw servers nf

PQ-Dur(D) = evt some p in servers, q in quorums has 
                ((p.nf and p.is_primary and q nf) lasts D)
                 
PQ-Extra-Dur(D1,D2) =
some t in [0,inf) has (
(each t2 in [t,t+D1+D2] has (servers at t2 = servers at t)) and
(some p in servers has ((p.nf and p.is_primary) during [t+D1,t+D1+D2])) and
(some q in quorums has (q nf during [t,t+D1+D2])) )
\end{lstlisting}
\podc{\Vex{-2}}\disc{\Vex{-2}}\arxiv{\Vex{-1}}
\notes{for old name, with alternatives/extended for proofs:
Evt-Alw-Q

Q-Evt-Alw   = \SOME q in quorums \HAS \EVT \ALW q nf

P-Evt-Alw-Q = \SOME p in servers \HAS \EVT \ALW
                (p.nf and p.is_primary and \SOME q in quorums \HAS q nf)
            = \EVT \SOME p in servers \HAS \ALW \SOME q in quorums \HAS
                (p.nf and p.is_primary and q nf)
            = \EVT \ALW (primary.nf and \SOME q in quorums \HAS q nf)?

PQ-Evt-Alw  = \SOME p in servers, q in quorums \HAS \EVT \ALW
                (p.nf and p.is_primary and q nf)
            = \SOME q in quorums \HAS \EVT \ALW (primary.nf and q nf)?
            = \EVT \SOME q in quorums \HAS \ALW (primary.nf and q nf)?

Evt-Alw     = \EVT each p in servers \HAS \ALW p.nf

PQ-Dur (D)
= \EVT (primary.nf \LASTS D and \SOME q in quorums \HAS ((q nf) \LASTS D)))
= \SOME t in [0, inf) \HAS
    ((primary.nf and \SOME q in quorums \HAS (q nf)) \DURING [t, t+D])

PQ-Extra-Dur (T, D)
= \SOME ... \HAS
    ... and
    (primary.nf \DURING [t+T, inf)) and # \AFTER T if \EVT, not \SOME t 
    ...

def PQ-Extra-Dur():
    eventually(
        during(unchanged(processes), duration= T1 + T2) and\
        after(leader != null, T1) and\
        during(\SOME(q in Quorums, \HAS= \EACH(p in q, \HAS= p.is_nf)),
            from_= T1, for_= T2)
    )
}
\caption{Server assumptions expressed precisely.\disc{\Vex{-1}}}%
\label{fig-server-formal}
\end{figure*}

The next two assumptions differ from the ones described above in the sense
that these two assume that eventually some set of processes remains
non-faulty for some duration of time, not eventually always.

\begin{enumerate}
\setcounter{enumi}{5}
\item \PQDur{} --- eventually, there is a server P and a quorum Q such that
  P and Q are non-faulty and P is the primary, for some (sufficiently long)
  duration of time. \PQDur is the equivalent of \PQAlw but with time
  analysis. Paxos-Time and LastVoting-PSync use this assumption.

\item \PQExtraDur.
Lamport states the following about liveness of Paxos-Synod in~\cite{lamport1998part}:

\begin{displayquote}
  \textit{Presidential selection requirement}:~
  If no one entered or left the Chamber, then after $T$ minutes exactly one
  priest in the Chamber would consider himself to be the president.

  \textit{Liveness property}:~
  If the presidential selection requirement were met, then the complete
  protocol would have the property that if a majority set of priests were
  in the chamber and no one entered or left the Chamber for $T + 99$
  minutes, then at the end of that period every priest in the Chamber would
  have a decree written in his ledger.
\end{displayquote}

Using $D1$ and $D2$ instead of $T$ and $99$, respectively, we can summarize
the assumptions as having a duration of time lasting $D1 + D2$ time units
such that (1) the set of non-faulty servers does not change in the duration,
(2) after $D1$ time units, primary is selected,
and (3) at least a fixed quorum of servers is non-faulty during the entire
duration - this is because ``no one enters or leaves the chamber'' and ``a
majority of priests were in the chamber ... for $D1 + D2$ minutes''.

\PQExtraDur is weaker than \Alw because it allows some servers to be faulty
in the pertinent duration.
\PQExtraDur is stronger than \PQDur, but is not comparable to \PQAlw---it
is weaker in using limited durations instead of \co{\ALW}, and it is
stronger in requiring that no server fails during $D1 + D2$ whereas \PQAlw
allows for servers to fail as long as P and Q are non-faulty

\end{enumerate}

\notes{
\subsection{Liveness assumption lemmas}
\label{sec-live-assm-lemmas}

\begin{lemma}\label{lem-alwq-qalw-crash}
\AlwQ implies \QAlw in crash failure model.
\end{lemma}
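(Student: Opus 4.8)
The key observation is that the statement is a quantifier swap in disguise. Unfolding the definitions as in the proof of Theorem~\ref{thm-link}, \AlwQ reads $\exists t_1\,\forall t_2\ge t_1\,\exists q\in\co{quorums}$ ($q$ non-faulty at $t_2$), whereas \QAlw reads $\exists t_1\,\exists q\in\co{quorums}\,\forall t_2\ge t_1$ ($q$ non-faulty at $t_2$): the inner $\forall\,\exists$ becomes $\exists\,\forall$. Since $\forall t\,\exists q$ does not imply $\exists q\,\forall t$ in general, the entire content of the lemma is that the crash failure model supplies exactly the extra structure that legitimizes this swap. The plan is therefore to make that structure explicit and then read off the uniform witness.

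First I would record the defining feature of the crash model: a server never recovers, so the time-indexed predicate \co{p.nf} is \emph{downward closed} in time---if \co{p.nf} holds at a time $t$, it holds at every earlier time, equivalently the set of faulty servers only grows. Because a quorum is non-faulty exactly when all of its members are, this downward closure lifts to quorums: if $q$ is non-faulty at $t$, then $q$ is non-faulty at all $t'\le t$. Consequently, writing $S_t=\{q\in\co{quorums}:q\text{ non-faulty at }t\}$, the family $\{S_t\}$ is non-increasing, i.e.\ $t\le t'$ implies $S_{t'}\subseteq S_t$.

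Next I would invoke finiteness. There are finitely many servers, hence finitely many quorums, so the non-increasing map $t\mapsto S_t$ takes only finitely many values and is therefore eventually constant. Concretely, from the \AlwQ hypothesis fix $t_1$ with $S_{t_2}\neq\emptyset$ for all $t_2\ge t_1$, and set $L=\bigcap_{t\ge t_1}S_t$. For each quorum $q\notin L$ choose a witness time $t_q\ge t_1$ with $q\notin S_{t_q}$; taking $t^*$ to be the maximum of these finitely many $t_q$ (and $t^*\ge t_1$) gives $S_t=L$ for every $t\ge t^*$, by monotonicity. Since $S_{t^*}\neq\emptyset$, the stable set $L$ is non-empty, and choosing any $q^*\in L$ yields a single quorum that is non-faulty at every $t\ge t^*$---precisely the witness $\exists q\,\forall t_2$ demanded by \QAlw.

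I expect the only genuine obstacle to be stating the monotonicity of the crash model cleanly, since the paper reads \co{p.nf} temporally but leaves the crash-only behavior of this predicate assumed rather than axiomatized; I would spell out the intended reading (a crashed server stays faulty, so non-faulty status is lost but never regained). The argument also quietly uses finiteness of \co{quorums}, which I would flag explicitly, because it is exactly the ingredient---together with downward closure---that fails to transfer to the crash-recovery model treated in the companion lemma, where a quorum may drop out of $S_t$ and re-enter later, breaking both the nestedness and the eventual stabilization. Everything after establishing the nested, eventually-constant family $\{S_t\}$ is routine.
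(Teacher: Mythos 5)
Your proof is correct, but it takes a genuinely different route from the paper's. The paper argues by contradiction: it supposes \AlwQ holds while \QAlw fails, posits a counterexample execution of a specific shape---two quorums $Q1$, $Q2$ and times $t1 < t2$ with $Q1$ non-faulty and $Q2$ faulty at $t1$, and the roles swapped at $t2$---and then observes that some server of $Q2$, being faulty at $t1$, must still be faulty at $t2$ in the crash model, a contradiction. Your argument is direct and constructive: you lift crash-monotonicity (non-faultiness is downward closed in time) to the level of quorums, obtain the non-increasing family $S_t$ of non-faulty quorums, use finiteness of \co{quorums} to stabilize it, and read off the witness $q^* \in L$. What your approach buys is completeness and explicitness: the paper's proof treats one particular two-quorum alternation as if it were the general counterexample, whereas the true negation of \QAlw only says that every quorum becomes faulty after every time point---turning that into a contradiction requires exactly the ``take the max over finitely many death times'' step you perform, so the paper's argument silently relies on the finiteness you flag. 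What the paper's approach buys is brevity and a vivid picture of why the implication can fail outside the crash model (the two-quorum swap is precisely the behavior that crash-recovery permits). Both proofs rest on the same single substantive fact, that in the crash model a faulty server stays faulty forever; your write-up makes the surrounding bookkeeping rigorous where the paper leaves it as a sketch.
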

\begin{proof}
We prove this by contradiction. Because \AlwQ does not imply \QAlw, there exists a system and an execution in which, \AlwQ holds but \QAlw does not. An execution that satisfies this assumption is: there exist quorums $Q1$ and $Q2$ and times $t1$ and $t2$, $t1 < t2$, such that at time $t1$, $Q1$ is non-faulty but $Q2$ is faulty and at time $t2$, $Q2$ is non-faulty but $Q1$ is faulty. Because $Q1 \neq Q2$, this satisfies \AlwQ but not \QAlw.

Because $Q2$ is faulty at time $t1$, some server $q \in Q2$ is faulty at $t1$. From crash failure model, we conclude that $q$ must be faulty at $t2$ because $t2 > t1$. This contradicts our assumption that $Q2$ is non-faulty at $t2$.
\end{proof}

\notes{
\begin{lemma}\label{lem-alwq-qalw-crash-recovery}
\AlwQ implies \QAlw if eventually, it is always the case that a quorum exists such that every server in the quorum is either always non-faulty or eventually always non-faulty.
\end{lemma}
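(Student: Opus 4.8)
The plan is to prove \QAlw constructively, by exhibiting a single fixed quorum that is non-faulty from some time onward; such a pair (quorum, time) is exactly the witness that \QAlw demands. The first step is to unpack the extra hypothesis. At any time $t$, a server that is always non-faulty is a fortiori eventually always non-faulty, so the disjunction ``always non-faulty or eventually always non-faulty'' collapses to the single property ``eventually always non-faulty''; the two disjuncts merely record the two physical cases, a server that is currently up and stays up versus one that is currently down but recovers and then stabilizes. Furthermore, ``eventually always non-faulty'' evaluated at time $t$ does not in fact depend on $t$, since the stabilization witness can always be pushed beyond $t$. Hence the hypothesis ``eventually, always, some quorum exists all of whose servers are always or eventually always non-faulty'' is equivalent to the time-independent statement that there is a quorum $Q^{\ast}$ every server of which is eventually always non-faulty.

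The second step is the quantifier swap from the $\forall t\,\exists Q$ shape of \AlwQ to the $\exists Q\,\forall t$ shape of \QAlw; this is the heart of the argument and plays the role that monotonicity of faultiness played in the crash-only version of this lemma. For each server $p \in Q^{\ast}$ fix a stabilization time $t_p$ after which $p$ is permanently non-faulty. Because a quorum is a finite set of servers, $T \;=\; \max_{p \in Q^{\ast}} t_p$ is well defined and finite, and for every $t \geq T$ all servers of $Q^{\ast}$ are simultaneously non-faulty, so $Q^{\ast}$ is non-faulty at each such $t$. Taking $Q^{\ast}$ as the fixed quorum and $T$ as the witness time yields \QAlw.

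It remains to locate the hypothesis \AlwQ in this argument. Under the reading above the recovery hypothesis already supplies the fixed good quorum $Q^{\ast}$, so \AlwQ serves only to certify that the system is genuinely up and is not otherwise load-bearing. If instead one reads the recovery hypothesis weakly, with the witnessing quorum allowed to vary with $t$ and the recovery property asserted only of that time's witness, then \AlwQ becomes essential: since the set of quorums is finite, \AlwQ furnishes, for all large $t$, a non-faulty quorum at $t$, and pigeonhole then yields one fixed quorum that is non-faulty at infinitely many times, whereupon the recovery hypothesis upgrades ``non-faulty infinitely often'' to ``eventually always non-faulty'' for that quorum, again giving \QAlw. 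I would settle on the weak reading precisely so that \AlwQ does real work.

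The step I expect to be the main obstacle is the quantifier swap, and concretely the claim that the maximum $T$ of the per-server stabilization times is finite. This is immediate when $Q^{\ast}$ is a finite set, but the paper permits the server population to be updated dynamically, so the proof must lean on the standing fact that each individual quorum is a finite set of servers even if the overall set of servers or of quorums grows over time; only then is $T$ a maximum over finitely many $t_p$. Pinning down this finiteness convention, and fixing the precise formal reading of the recovery hypothesis that makes \AlwQ load-bearing, are the two things I would nail down before writing the details.
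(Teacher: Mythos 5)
Your main argument (the first two paragraphs) is correct, and it takes a genuinely different---and more successful---route than the paper's. The paper's own proof of this lemma is an abandoned fragment (the lemma and its proof sit in a commented-out notes block): it tries to mirror the contradiction proof of the crash-model lemma, positing quorums $Q_1, Q_2$ and times $t_1 < t_2$ with $Q_1$ non-faulty and $Q_2$ faulty at $t_1$, and $Q_2$ non-faulty and $Q_1$ faulty at $t_2$, and it breaks off exactly after the sentence ``some server $q \in Q_2$ is faulty at $t_1$''---the point where the crash model's monotonicity (faulty once, faulty forever) would be invoked, which is precisely the step that is unavailable under crash-recovery. Your direct construction repairs this: you replace monotonicity by per-server stabilization times $t_p$, use finiteness of the quorum to form $T = \max_{p \in Q^{\ast}} t_p$, and exhibit $(Q^{\ast}, T)$ as the witness for \QAlw. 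Your route also avoids a looseness in the paper's setup: assuming the negation of \QAlw takes the form of two alternating quorums is not the general form of that negation, whereas your argument never needs to negate anything.

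One genuine flaw, though it is confined to your third paragraph: under your ``weak'' reading, \AlwQ still does no work. Since ``eventually always non-faulty'' is time-independent (as you yourself establish), a single time $t \geq t_0$ and its witness $Q(t)$ already give a fixed quorum all of whose servers are eventually always non-faulty, and your main argument finishes from there; no pigeonhole is needed. Moreover, the pigeonhole route as you state it has a hole: the quorum that \AlwQ plus pigeonhole furnishes as non-faulty at unboundedly many times need not coincide with any witness quorum of the recovery hypothesis, so that hypothesis cannot be used to upgrade ``non-faulty infinitely often'' to ``eventually always non-faulty'' for it---in a crash-recovery model a quorum can be non-faulty at unboundedly many times while every one of its servers fails infinitely often. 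The honest conclusion is the one your first reading already reached: the stated side condition is strong enough to imply \QAlw by itself, making \AlwQ redundant under either reading, which is very likely why the authors left this lemma and its half-finished proof commented out rather than completing it.
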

\begin{proof}
We prove this by contradiction as well. Like the proof of~\ref{lem-alwq-qalw-crash}, we again assume an execution in which there exist quorums $Q1$ and $Q2$ and times $t1$ and $t2$, $t1 < t2$, such that at time $t1$, $Q1$ is non-faulty but $Q2$ is faulty and at time $t2$, $Q2$ is non-faulty but $Q1$ is faulty. Because $Q1 \neq Q2$, this satisfies \AlwQ but not \QAlw.

Because $Q2$ is faulty at time $t1$, some server $q \in Q2$ is faulty at $t1$.
\end{proof}
}
}

\subsection{Liveness assertions}

The liveness assertions form a 6-element partial order, as described
below; for brevity, we omit slots from the description below.
Figure~\ref{fig-live-formal} shows formal specifications of the liveness assertions; Figure~\ref{fig-live-formal-multi} shows the
corresponding assertions with slots for multi-value consensus.
Figure~\ref{fig-live-hierarchy}-right shows the hierarchy these liveness
assertions form.

\notes{
Assertions with nf: at start of second line: add "q nf and" for 3 Each-'s
and "p.nf and" for 2 Some-'s. 
}
\begin{figure*}
\begin{lstlisting}
Each-Vote  = evt some r in rounds, q in quorums, v in values has
                 each p in q has p.voted (r,v)

Some-Learn = evt some p in servers, v in values has p.learned (v)

Each-Learn = evt some q in quorums, v in values has
                 each p in q has p.learned (v)

Some-Exec  = evt some p in servers, v in values has p.executed (v)

Each-Exec  = evt some q in quorums, v in values has
                 each p in q has p.executed (v)

Resp       = evt each c in clients has some v in values has
                 c.received ('resp',v)
\end{lstlisting}
\podc{\Vex{-2}}\disc{\Vex{-2}}\arxiv{\Vex{-1}}
\notes{
Each-Vote = \EVT \EACH p in quorums \HAS \SOME v in values \HAS q voted (v)

Some-Learn  = \EVT \SOME p in servers \HAS \SOME v in values \HAS p.decided (v)

Each-Learn  = \EVT \EACH q in quorums \HAS \SOME v in values \HAS q decided (v)

Some-Exec = \EVT \SOME p in servers \HAS \SOME v in values \HAS p.executed (v)

Each-Exec = \EVT \EACH q in quorums \HAS \SOME v in values \HAS q executed (v)

Resp      = \EACH c in clients \HAS 
              \EACH c.sent ('request', v) \HAS \EVT c.received ('resp', v)
}
\caption{Liveness assertions expressed precisely for single-value consensus. 
For a server \co{p}, round \co{r}, and value \co{v},
\co{~p.voted (r,v)} is true iff \co{p} voted for value \co{v} in round \co{r},
\co{~p.learned (v)} is true iff \co{p} learned that value \co{v} was chosen,
\co{~p.executed (v)} is true iff \co{p} has executed \co{v} in its local copy of the replicated state machine.
For a client \co{c} and value \co{v}, \co{~c.received (\msg{'resp'},v)} is true iff \co{c} received value \co{v} in response.\Vex{-1}
}
\label{fig-live-formal}
\end{figure*}\podc{\Vex{-2}}\disc{\Vex{-2}}

\notes{
}

\begin{enumerate}
\item \EachVote{}
  --- eventually in some round, each server of some quorum sends a vote for the same value in that round. %
  \EachVote is crucial for Paxos and its variants because Paxos' safety relies on the invariant that if a quorum votes on the same value in some round, then only that value can be proposed in subsequent rounds.
  \EachVote property is formally verified for Paxos-EPR, as well as single-value Paxos and Stoppable Paxos~\cite{malkhi2008stoppable} in the proof system Ivy~\cite{padon2016ivy}. %
    
\item \SomeLearn{} --- eventually, some non-faulty server learns a value.
  This is stronger than \EachVote, because the sent votes might not be received by any server.
  In Paxos-Fast, it is informally proven that Paxos-Basic %
  and Paxos-Fast satisfy this property.
  We consider this, not \EachVote, as the most basic property that any live
  implementation of any consensus algorithm is expected to guarantee.
  Figure~\ref{fig-RaftEachVote} shows that Raft admits behaviors of
  infinite length in which \EachVote is eventually always satisfied but
  \SomeLearn is never satisfied.
  \begin{figure}
    \centering
    \begin{tikzpicture}[node distance=8mm,>=stealth',auto]
    
      \tikzstyle{logSlot1}=[rectangle,thick,draw=black,fill=blue!40,minimum size=4mm]
      \tikzstyle{logSlot2}=[logSlot1,fill=magenta!40]
      \tikzstyle{primary}=[very thick]
      \tikzstyle{logRepl}=[thick,-latex,bend right,draw=red]
      			  
      \tikzstyle{every label}=[red]
    
      \begin{scope}
        \node (s1) {S1};
        \node (s2) [below of=s1] {S2};
        \node (s3) [below of=s2] {S3};
      \end{scope}
    
      \begin{scope}[xshift=1.5cm]
        \node [logSlot1] (t1s1) [primary]    {1};
        \node [logSlot1] (t1s2) [below of=t1s1] {1};
        \node [logSlot2] (t1s3) [below of=t1s2] {3};
        \node (t1cap) [below of=t1s3] {(a)};
      \end{scope}
      
      \begin{scope}[xshift=3.0cm]
        \node [logSlot1] (t2s1) [opacity=0.5] {1};
        \node [logSlot1] (t2s2) [below of=t2s1] {1};
        \node [logSlot2] (t2s3) [below of=t2s2,primary] {3};
        \node (t2cap) [below of=t2s3] {(b)};
      \end{scope}
    
      \begin{scope}[xshift=4.5cm]
        \node [logSlot1] (t3s1) {1};
        \node [logSlot2] (t3s2) [below of=t3s1] {3};
        \node [logSlot2] (t3s3) [below of=t3s2,primary] {3};
        \node (t3cap) [below of=t3s3] {(c)};
        
        \draw [logRepl] (t3s3.east) to [bend right=45] (t3s2.east);
      \end{scope}
    
      \begin{scope}[xshift=6.0cm]
        \node [logSlot1] (t4s1) [primary] {1};
        \node [logSlot1] (t4s2) [below of=t4s1] {1};
        \node [logSlot2] (t4s3) [below of=t4s2, opacity=0.5] {3};
        \node (t4cap) [below of=t4s3] {(d)};
        
        \draw [logRepl] (t4s1.east) to [bend left=45] (t4s2.east);
      \end{scope}
    
    \end{tikzpicture}\podc{\Vex{-2}}
    \caption{Demonstration of Raft satisfying \EachVote but not
      \SomeLearn. The system involves three servers, S1, S2, and S3. The
      primary is marked with a thick border. No value has been committed
      yet. A faded server is currently failed. (a) The initial state of the
      server logs. S1 has proposed value 1 which S2 has received and
      written in its log thus satisfying \EachVote for value 1. Assume that
      S1 fails before realizing the majority, thus \SomeLearn does not
      satisfy. (b) S3 becomes the new primary. (c) S3 enforces its log onto
      S2 replacing S1's proposed value. Now \EachVote is satisfied for
      value 3. (d) Before \SomeLearn could be satisfied, S3 fails and S1
      becomes the new leader. It enforces its log onto S2. Thus, \EachVote
      is satisfied for value 1 coming back to configuration (a).}
    \label{fig-RaftEachVote}
  \end{figure}
    
\item \EachLearn{} --- eventually, each server of some quorum learns the
  same value.  Obviously, \EachLearn is stronger than \SomeLearn. This
  property is informally proven for Paxos-Synod and formally proven for
  LastVoting-Psync.
    
\item \SomeExec{} --- eventually, some non-faulty server executes a value.
  This is stronger than \SomeLearn because a value has to be learned before
  it can be executed. \SomeExec and \EachLearn do not compare because
  execution details differ between algorithms. While some would allow
  execution immediately after learning, others might require replication to
  occur before execution. It is claimed that Paxos-SB and EPaxos satisfy
  this property, and formally verified in Dafny that IronRSL satisfies this
  property. %
    
\item \EachExec{} --- eventually, each server of some quorum executes the same
  value. Obviously, this is stronger than \SomeExec and \EachLearn. It is
  informally proven that EVS satisfies this property. In Paxos-Time, IOA
  specifications of single-value Paxos and Paxos are informally proven to
  satisfy this property. A proof sketch is given for Zab satisfying this
  property.
    
\item \Resp{} --- eventually, each client request is responded to. This
  property is stronger than \SomeExec because client expects the result of
  executing the requested value. \Resp may or may not be stronger than
  \EachExec. For example, normal case execution of Paxos-VS satisfies
  \EachExec before \Resp, that is, \Resp implies \EachExec for
  Paxos-VS. Meanwhile, normal case execution of VR-Revisit satisfies
  \SomeExec and then \Resp but not \EachExec, that is, \Resp implies
  \SomeExec but not \EachExec for VR-Revisit. Hence, the dashed arc from
  \Resp to \EachExec.
  \notes{Birman et al~\cite{birman2010virtually} claim that Paxos-VS %
    satisfies this property. Liskov and Cowling~\cite{liskov2012viewstamped} informally argue that VR-Revisit %
    satisfies this. Chandra et al~\cite{chandra2007paxos} test that Chubby-Live %
    ensures this. van Renesse and Altinbuken~\cite{van2015paxos} claim this property for their Paxos-Complex. Hawblitzel et al~\cite{...} give a machine-checked proof which verifies that their Iron...
    satisfies this property.
  }
\end{enumerate}

\begin{figure*}
\begin{lstlisting}
Each-Vote(n) = evt each s in 1..n has
                   some r in rounds, q in quorums, v in values has
                   each p in q has p.voted (r,s,v)

Some-Learn(n)= evt each s in 1..n has
                   some p in servers, v in values has p.learned (s,v)

Each-Learn(n)= evt each s in 1..n has
                   some q in quorums, v in values has
                   each p in q has p.learned (s,v)

Some-Exec(n) = evt each s in 1..n has
                   some p in servers, v in values has p.executed (s,v)

Each-Exec(n) = evt each s in 1..n has
                   some q in quorums, v in values has
                   each p in q has p.executed (s,v)

Resp         = evt each c in client, v in values has
                   c.sent ('req', v) implies c.received ('resp', v, res(v))
\end{lstlisting}
\podc{\Vex{-2}}\disc{\Vex{-2}}\arxiv{\Vex{-1}}
\caption{
Liveness assertions expressed precisely for multi-value consensus.
For a server \co{p}, round \co{r}, slot \co{s}, and value \co{v},
\co{~p.voted (r,s,v)} is true iff \co{p} voted for value \co{v} in slot \co{s} in round \co{r},
\co{~p.learned (s,v)} is true iff \co{p} learned that value \co{v} was
chosen in slot \co{s},
\co{p.executed (s,v)} is true iff \co{p} has executed \co{v} in its local copy of the replicated state machine in slot \co{s}.
For a client \co{c} and value \co{v}, \co{~c.sent (\msg{'req'}, v)} is true
iff \co{c} requested \co{v} to be executed, 
\co{~c.received (\msg{'resp'}, v, res(v))} is true iff \co{c} received a
response that value \co{v} was executed with result \co{res(v)}.
}
\label{fig-live-formal-multi}
\end{figure*}

For multi-value consensus, the liveness assertions are specified precisely
in Figure~\ref{fig-live-formal-multi}. %
Compared with assertions for single-value consensus in
Figure~\ref{fig-live-formal}, the changes are:
\begin{itemize}
\item for each assertion except for \co{Resp}, add (1) parameter \co{n} for
  the assertion, (2) clause \co{each s in 1..n} on the first line, and (3)
  component \co{s} before \co{v} to the arguments of \co{sent} and
  \co{executed} on the last line.

\item for assertion \co{Resp}, add 
  (1) clause \co{v in values} on the first line,
  (2) the second line giving the condition for implication, and
  (3) the last component \co{res(v)} to the argument of \co{received} 
  on the last line. \co{res(v)} is the result of executing value \co{v}.
\end{itemize}

\notes{
original and others tried:
\begin{lstlisting}[linewidth=92ex]
Each-Vote(s)= evt some q in quorums, r in rounds, v in values has # param s
                  each p in q has p.sent ('vote',r,s,v)           # component s
Each-Vote   = evt each s in Nat has                               # each clause
                  some q in quorums, r in rounds, v in values has
                  (some p in q has p.sent ('vote',r,s,v) implies  # all
                   each p in q has p.sent ('vote',r,s,v))         # component s
Each-Vote   = evt some s has Each-Vote(s) and                     # start naive to derive
                  each s has sent (vote s) implies evt Each-Vote(s)
            = evt some s has sent (vote S) and
                  ... (same as above)
            = evt some sent('vote',r,s,v) and
                  each sent('vote',r,s,v) has 
                  evt some q in quorums has each p in q has p.sent('vote',r,s,v)
            = evt some r in rounds, s in Nat, v in values has .sent('vote',r,s,v) and
                  each r in rounds, s in Nat, v in values has 
                  (sent('vote',r,s,v) implies 
                   evt some q in quorums has each p in q has p.sent('vote',r,s,v))
            = ... # continued in the notes below, but abandoned as explained in the notes

Some-Learn(s) = evt some p in servers, v in values has    # parameter s
                  (p.nf and p.sent ('learn',s,v))      # component s

Each-Learn(s) = evt some q in quorums, v in values has    # parameter s
                  each p in q has p.sent ('learn',s,v) # component s
Each-Learn    = evt each s in Nat has                     # each clause
                  some q in quorums, v in values has
                  some p in q has p.sent ('learn',s,v) and # all
                  each p in q has p.sent ('learn',s,v) # component s

# below is still single value, and includes safety by having some v before each c 
Resp        = evt some v in values has each c in clients has
                  c.received ('resp', v)
\end{lstlisting}
} %
\notes{
continue from notes for derivation above:

\EachVote below doss not look right: \EACH sent vote gets a quorum to vote the same?
Better with that second eventually, but we are already in an \EVT context.

It is also not satisfied by Paxos-Complex.
It is also not implied by \SomeExec(n) or \EachExec(n).

Do the theorems that use \EachVote need to say anything about slots?
If not, just use the parameterized version.

Conclusion: Better/simpler just use the up-to-n version for all.

\begin{lstlisting}[linewidth=92ex]
Each-Vote   = evt some r in rounds, s in Nat, v in values, p in servers has
                  p.sent('vote',r,s,v) and
                  each r in rounds, s in Nat, v in values has 
                      some p in servers has p.sent ('vote',r,s,v) implies 
                      evt some q in quorums has each p in q has
                          p.sent ('vote',r,s,v)

(*\notes{
Some-Learn    = evt some r in rounds, s in Nat, v in values, p in servers has
                  p.sent('vote',r,s,v) and
                  each r in rounds, s in Nat, v in values has 
                      some p in servers has p.sent ('vote',r,s,v) implies 
                      evt some p in servers has p.sent ('learn',s,v) 
Wrong because decide does not have round.
so this enforces that every <r,s,v> triple that gets voted on gets decided. 
So if voted(r1, s, v1) and voted(r2, s, v2), then v1 has to be equal to v2, otherwise the algorithm is not live. For eg, Raft may not satisfy this.
So, this may not be too restrictive.
}*)
Some-Learn    = evt some r in rounds, s in Nat, v in values, p in servers has
                  p.sent('vote',r,s,v) and
                  each s in Nat has 
                      some p in servers has p.sent ('vote',_,s,_) implies 
                      evt some p in servers, v in values has p.sent ('learn',s,v)

Each-Learn    = evt some r in rounds, s in Nat, v in values, p in servers has
                  p.sent('vote',r,s,v) and
                  each s in Nat has 
                      some p in servers has p.sent ('vote',_,s,_) implies 
                      evt some q in quorums has each p in q has
                          p.sent ('learn',s,v)
\end{lstlisting}
}%

\newcommand{\mybullet}{\textbullet\hspace{0.25em}}
\begin{figure*}[htp]
\small\disc{\Vex{-4}}
\begin{center} 
\begin{tikzpicture}[node distance=1.5cm,
    every node/.style={fill=orange!10,
                        rectangle split,
                        rectangle split parts=2,
                        rounded corners,
                        draw=black,
                        minimum width={18ex},
                        text width={15ex},
                        minimum height=1cm,
                        font=\sffamily,
                        align=left},
    align=left,
    property/.style={fill=blue!10},
    invisible/.style={opacity=0},
    ]

  \node (Alw-Q) %
    {
    \textbf{Alw-Q}
    \nodepart{second}
    \mybullet{} Paxos-SB (Strong L1)\\
    \mybullet{} EPaxos
    };

\node [invisible] (dummy3) [below = of Alw-Q] {};
  
  \node (Q-Alw) [left = -8ex of dummy3]
    {
    \textbf{Q-Alw}
    \nodepart{second}
    \mybullet{} EVS\\
    \mybullet{} CT\\
    \mybullet{} ACT\\
    \mybullet{} Paxos-SB (Weak L1)
    };

  \node (P-Alw-Q) [right = -8ex of dummy3]
    {
    \textbf{P-Alw-Q}
    \nodepart{second}
    \mybullet{} Paxos-VS\\
    \mybullet{} Derecho\\
    \mybullet{} VR-Revisit\\
    };

  \node (PQ-Alw) [below = 10ex of dummy3]
    {
    \textbf{PQ-Alw}
    \nodepart{second}
    \mybullet{} Paxos-Fast\\
    \mybullet{} Zab\\
    \mybullet{} IronRSL\\
    \mybullet{} Paxos-EPR
    };

  \node (Alw) [below = 6ex of PQ-Alw]
    {
    \textbf{Alw}
    \nodepart{second}
    \mybullet{} Chubby-Live
    };

  \node (PQ-Extra-Dur) [left = 4ex of Alw]
    {
    \textbf{PQ-Extra-Dur}
    \nodepart{second}
    \mybullet{} Paxos-Synod
    };

  \node (PQ-Dur) [left = 4ex of PQ-Alw]
    {
    \textbf{PQ-Dur}
    \nodepart{second}
    \mybullet{} Paxos-Time\\
    \mybullet{} LastVoting-PSync
    };

  \node [invisible] (dummy2) %
    [right = 16ex of P-Alw-Q] {};
  
  \node [property] (P) %
    [above = 22ex of dummy2]
    {
    \textbf{Each-Vote}
    \nodepart{second}
    \mybullet{} Paxos-EPR
    };
  
  \node [property] (Some-Learn) [below = 4ex of P]
    {
    \textbf{Some-Learn}
    \nodepart{second}
    \mybullet{} Paxos-Fast
    };
    
    \node [property] (Each-Learn) [left = -8ex of dummy2]
    {
    \textbf{Each-Learn}
    \nodepart{second}
    \mybullet{} Paxos-Synod\\
    \mybullet{} CT\\
    \mybullet{} ACT\\
    \mybullet{} LastVoting-PSync
    };

   \node [property] (Some-Exec) [right = -8ex of dummy2]
    {
    \textbf{Some-Exec}
    \nodepart{second}
    \mybullet{} Paxos-SB\\
    \mybullet{} EPaxos\\
    \mybullet{} IronRSL
    };

  \node [property] (Each-Exec) [below = 11ex of dummy2]
    {
    \textbf{Each-Exec}
    \nodepart{second}
    \mybullet{} EVS\\
    \mybullet{} Paxos-Time\\
    \mybullet{} Paxos-SB\\
    \mybullet{} Zab
    };
  
  \node [property] (Resp) [below = 4ex of Each-Exec]
    {
    \textbf{Resp}
    \nodepart{second}
    \mybullet{} Paxos-VS\\
    \mybullet{} VR-Revisit\\
    \mybullet{} Chubby-Live\\
    \mybullet{} Paxos-Complex\\
    \mybullet{} IronRSL
    };

  \draw[<-] (PQ-Alw) -- (Alw);
  \draw[<-] (Q-Alw) -- (PQ-Alw);
  \draw[<-] (Alw-Q) -- (Q-Alw);
  \draw[<-] (PQ-Extra-Dur) -- (Alw);
  \draw[<-] (PQ-Dur) -- (PQ-Alw);
  \draw[<-] (PQ-Dur) -- (PQ-Extra-Dur);
  \draw[<-] (P-Alw-Q) -- (PQ-Alw);
  \draw[<-] (Alw-Q) -- (P-Alw-Q);

  \draw[<-] (P) -- (Some-Learn);
  \draw[<-] (Some-Learn) -- (Some-Exec);
  \draw[<-] (Some-Learn) -- (Each-Learn);
  \draw[<-] (Some-Exec) -- (Each-Exec);
  \draw[<-] (Each-Learn) -- (Each-Exec);
  \draw[<-] (Some-Exec.-56) [rounded corners] |- (Resp);

  \draw[dashed,<-] (Each-Exec) -- (Resp);

  \end{tikzpicture}
\end{center}
\podc{\Vex{-2}}\disc{\Vex{-2}}\arxiv{\Vex{-1}}
\caption{Hierarchy of assumptions about servers (left) and hierarchy of liveness assertions (right) of 18 algorithms, as stated by the authors of the algorithms, comprising of (i) the 16 algorithms and variants in the last two rows in Table~\ref{tab-link}, and (ii) 2 algorithms from the first row---Derecho and VR-Revisit. Others in Table~\ref{tab-algo} do not state liveness assumptions or assertions. There are 17 entries on the left, because Mencius and Paxos-Complex do not provide server assumptions, and Paxos-SB describes two variants of assumptions. There are 18 entries on the right, but Derecho and Mencius only mentions "liveness" without explanations, IronRSL proves two properties, and Paxos-SB %
  states two properties.
} %
\label{fig-live-hierarchy}
\end{figure*}

\section{Liveness properties and analysis results}
\label{sec-anal}

This section describes liveness properties stated for the algorithms and
variants considered, 
and their proofs and discussions given by their respective
authors.  We then present our analysis of these liveness properties and of
the best possible liveness properties.

\subsection{Liveness properties stated and their proofs given}

We organize liveness properties by the assumptions and assertions stated by
the authors, for all the algorithms and variants shown in
Figure~\ref{fig-live-hierarchy}, 18 total.  Table~\ref{tab-live-summary}
displays these in columns ``Assumptions'' and ``Assertions''.
Column ``Proofs'' states the kinds of proofs of liveness properties given
by the authors of the algorithms and variants.  We categorize the proofs
into the following four kinds:
\begin{enumerate}

\item \textbf{Sketch} --- a proof sketch is given explaining the idea but
  many details are omitted.
\item \textbf{Prose} --- a proof in ordinary prose is given. The proof is
  usually provided in paragraphs and may omit some details.
\item \textbf{Systematic} --- a proof written in the step-wise hierarchical
  structure~\cite{lamport2012write}.  The proof is more systematic than
  ordinary prose proofs.
\item \textbf{Formal ($T$)} --- a proof that is machine-checked, using the
  proof system $T$.

\end{enumerate}
Note that 7 of the 18 algorithms and variants do not have proofs of any
kind given with the stated properties.

\providecommand{\propnameref}[1]{\nameref{#1} (Prop~\ref{#1})}
\providecommand{\cornameref}[1]{\nameref{#1} (Cor~\ref{#1})}
\providecommand{\tableSummaryExistText}{
\begin{tabular}{@{\arxiv{\Hex{-2.7}}\disc{\Hex{-2}}}r@{\,}|@{\,}p{\arxiv{13.5}\disc{17.4}ex}@{} || @{\,}l@{\,}|@{\,}l@{\,} | @{\,}p{15.5ex}@{\,} | @{\,}p{11ex}@{\,} || @{\,}p{32.5ex}@{}}
  & \multirow{2}{*}{~~~~Name} & \multicolumn{2}{c|@{\,}}{Assumptions} 
  & \multirow{2}{*}{~~~Assertions} & \multirow{2}{*}{~~\,Proofs} 
  & \multirow{2}{*}{~~~~~~~~~~~~Analysis}\\
  \cline{3-4}
  & & ~Link & ~~~~~~~Server & &\\
  \hline\hline
    
  3 & EVS & \Fair & \QAlw & \EachExec & Systematic 
    & \propnameref{prop-live-sb}\\
  4 & Paxos-VS & \Fair & \PAlwQ & \Resp & - 
    & \propnameref{prop-live-paxos-vs}\\%unsure? None from \Raw (Prop~\ref{prop-live-paxos-vs})\\
  5 & Derecho & \Fair & \PAlwQ & ``progress'' & - & lacking assertions\\
  \hline\hline
  7 & VR-Revisit & \Raw & \PAlwQ & \Resp & Prose 
    & \propnameref{prop-live-raw}\\%too strong,saw impossible,too week link assump \\
  \hline\hline
  8 & Paxos-Synod & \Sure & \PQExtraDur & \EachLearn\m{\!}+ & Prose
    & assuming primary\\
  10 & Paxos-Fast & \Fair & \PQAlw & \SomeLearn & Systematic
     & \cornameref{prop-live-pqalw-too-strong}\\
  \hline\hline
  12 & CT & \Fair & \QAlw & \EachLearn & Systematic 
     & \propnameref{thm-live-act}\\
  13 & ACT & \Fair & \QAlw & \EachLearn & Systematic 
     & \propnameref{thm-live-act}\\
  \hline\hline
  14 & Paxos-Time & \Sure & \PQDur & \EachExec\m{\!}+ & Systematic 
     & assuming primary\\
  \hline
  17 & Chubby-Live & \Sure & \Alw & \Resp & - 
     & \cornameref{prop-live-pqalw-too-strong}, \newline trivial from \Alw\\
  18 & Paxos-SB & \Fair & \QAlw & \SomeExec & - & \propnameref{prop-live-sb}\\
  19 & Mencius & \Fair & - & ``liveness'' & - & lacking assumptions\\
  20 & Zab & \Sure & \PQAlw & \EachExec & Sketch 
     & \cornameref{prop-live-pqalw-too-strong}\\
  22 & EPaxos & \Fair & \AlwQ & \SomeExec & - 
     & \propnameref{prop-live-SomeDec-imp}\\% strong, too weak server assumption, impossible (Thm 1)\\
  24 & Paxos-Complex  & \Fair & - & \Resp & - 
     & lacking assumptions, \newline \propnameref{thm-live-complex}\\
  \hline
  26 & IronRSL & \Sure & \PQAlw & \SomeExec, \Resp & Formal (Dafny) 
     & \cornameref{prop-live-pqalw-too-strong}\\
  28 & LastVoting-PSync & \Sure & \PQDur & \EachLearn\m{\!}+ & Formal (PSync) 
     & assuming primary\\
  29 & Paxos-EPR & \Fair & \PQAlw & \EachVote & Formal (Ivy) 
     & \cornameref{prop-live-pqalw-too-strong}, \newline \propnameref{prop-live-eachvote-weak}\\
  \hline
\end{tabular}
}
\newcommand{\tableSummaryExistCaption}{%
Assumptions, assertions, and proofs, as provided by the respective authors, together with our analysis, about the liveness properties for 18 algorithms, comprising of (i) the 16 algorithms and variants in the last two rows in Table~\ref{tab-link}, and (ii) 2 algorithms from the first row---Derecho and VR-Revisit.
Others in Table~\ref{tab-algo} do not provide assumptions, assertions, or proofs about liveness.\\
``-'' in Assumptions and Proofs indicates that the information is not provided;
for proofs, it means that the property is only claimed to follow from the assumptions, but no proof was given.\\
``+'' in Assertions indicates that the calculation of the bound after which the assertion would be satisfied is also given.
}
\begin{table*}[htp]
  \small
  \centering
  \tableSummaryExistText
  \disc{\Vex{2}}
  \caption{\tableSummaryExistCaption\disc{\Vex{-4}}}
  \label{tab-live-summary}
\end{table*}

\subsection{Analysis of liveness properties}

We present our analysis results showing the range of issues from lacking
assumptions or too weak assumptions for which no liveness assertions can
hold, to too strong assumptions making it uninteresting to achieve the
assertions.
Column ``Analysis'' of Table~\ref{tab-live-summary} summarizes these results.
For example, 
\begin{itemize}
\item EPaxos assumes \AlwQ, that is, eventually, there is always some
  quorum of servers that is non-faulty.  By
  Proposition~\ref{prop-live-SomeDec-imp}, this is too weak for reaching
  consensus, because each such quorum might not stay non-faulty long enough
  to make progress.

\item Zab assume \PQAlw, that is, eventually, there is a server P and a
  quorum Q of servers, such that always, P is non-faulty and is the primary
  and Q is non-faulty.  Chubby-Live even assumes \Alw, that is, eventually,
  all servers are always non-faulty.  By
  Corollary~\ref{prop-live-pqalw-too-strong}, this is too strong, making it
  trivial to reach consensus.
\end{itemize}

\begin{proposition}[\EachVote insufficient]
\label{prop-live-eachvote-weak}
\EachVote is not strong enough as a liveness assertion.
\end{proposition}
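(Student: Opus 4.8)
The plan is to make precise the informal claim that \EachVote fails to capture genuine progress by showing that it does not imply even the most basic meaningful assertion in the hierarchy, namely \SomeLearn. Since the paper identifies \SomeLearn as the most basic property that any live consensus implementation is expected to guarantee, an assertion that can hold while \SomeLearn perpetually fails is, by that standard, not strong enough. Thus the goal reduces to exhibiting an algorithm and an infinite execution in which \EachVote is satisfied --- in fact eventually always satisfied --- yet \SomeLearn never holds.

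First I would contrast the semantics of the two assertions. \EachVote requires only that eventually, in some round, every member of some quorum casts a vote for the same value; \SomeLearn requires that some server actually learns a value. The gap I want to exploit is that votes are \emph{sent} messages that need never be collected by any single server into a received quorum --- especially under the \Raw link assumption that Raft uses --- so a quorum can vote without anyone ever learning. Moreover, since a server's vote is permanently recorded in its history, once some quorum has voted for a value in a round that witness persists, and hence \EachVote, once true, stays true.

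Next I would invoke the witness execution, which is exactly the Raft behavior depicted in Figure~\ref{fig-RaftEachVote}: a primary gets a quorum to vote for its value, but before any server collects that quorum and learns, the primary fails; a new primary is then elected and overwrites the pertinent log entry with a vote for a different value, again obtaining a quorum, returning the system to a configuration equivalent to the start. Iterating this cycle forever gives an infinite execution in which \EachVote holds from the first cycle onward (eventually always), while no server ever accumulates a quorum of received votes for any value, so \SomeLearn is never satisfied.

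The main obstacle will be arguing rigorously that this cycle is a legitimate, admissible infinite execution rather than an artifact: each step --- the vote, the leader change, the log overwrite --- must be permitted by Raft's transition rules and consistent with the \Raw link assumption, and the injected failures must be allowed by the failure model. I would then draw the conclusion at the right level of generality: the counterexample shows \EachVote does not imply \SomeLearn, and since every assertion other than \EachVote implies \SomeLearn (it sits directly above only \EachVote in Figure~\ref{fig-live-hierarchy}-right), \EachVote implies none of the stronger assertions either. This is precisely the sense in which \EachVote is not strong enough.
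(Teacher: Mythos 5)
Your proposal is correct and takes essentially the same approach as the paper: both rest on the Raft execution of Figure~\ref{fig-RaftEachVote}, in which \EachVote is (eventually always) satisfied by a quorum voting for a value, while repeated leader changes overwrite those votes so that \SomeLearn never holds. Your write-up simply makes explicit what the paper leaves informal---the reduction to \SomeLearn as the benchmark and the persistence/admissibility of the infinite counterexample run.
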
\disc{\Vex{-4}}
\begin{proof} %
  \EachVote is insufficient to be considered as an important assertion for liveness---a sentiment even echoed in the liveness properties proved by Lamport in both Paxos-Synod and Paxos-Fast. %
As an example, consider Raft, which may satisfy \EachVote with a quorum voting for a value \m{v}, but not reaching consensus on \m{v}.  This happens with view changes, where a different value may be chosen as shown in Figure~\ref{fig-RaftEachVote}.
\end{proof}

\begin{proposition}[None from \Raw]
\label{prop-live-raw}
No algorithm can satisfy any liveness assertion in
Figures~\ref{fig-live-formal} and~\ref{fig-live-formal-multi} under \Raw.
\end{proposition}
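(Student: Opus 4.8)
The plan is to reduce the six-way claim to a single weakest assertion and then defeat that one with an adversarial execution. By the liveness hierarchy in Figure~\ref{fig-live-hierarchy} (right), every assertion implies \EachVote: we have \SomeLearn \(\Rightarrow\) \EachVote, while \EachLearn, \SomeExec, \EachExec, and \Resp all imply \SomeLearn (transitively). Thus \EachVote is the weakest of the six, and the same containments hold slot by slot for the parameterized multi-value assertions in Figure~\ref{fig-live-formal-multi}, so that those reduce to the single-value case by fixing one slot. By contraposition, a single execution in which \EachVote is false makes all six assertions false at once. Hence it suffices to show: for every algorithm there is an execution in which \Raw holds but \EachVote fails.

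First I would fix an arbitrary algorithm, let it run, and have the scheduler deliver exactly one message---say the first message the algorithm sends---and drop every other message forever. Using the expanded form of \Raw from the proof of Theorem~\ref{thm-link}, \Raw asserts only the existence of one sent message that is eventually received; the single delivered message witnesses this existential, so \Raw holds in this execution. (If the algorithm never sends a message, then \Raw is false and the implication is vacuous; such an algorithm makes no progress and is not a genuine consensus algorithm, so I would set this degenerate case aside.)

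Next I would argue that \EachVote fails here. \EachVote requires that eventually some quorum \(Q\) has every member vote for one common value \(v\) in one common round \(r\). The crux is that a quorum cannot reach such a coordinated vote on the strength of a single delivery: each member of \(Q\) must learn the round and value before voting for \((r,v)\), and one delivery conveys information to at most one recipient. Made precise through the multi-phase structure that safety forces on every algorithm in Table~\ref{tab-algo}, a non-proposer server votes in round \(r\) only after that round's proposer has itself gathered messages from a whole quorum (the prepare/promise phase), so assembling even one quorum of matching votes demands many successful deliveries, not one. With all but one message dropped no such chain completes, at most the lone recipient is affected, and no quorum ever votes in common; hence \EachVote is false, and with it all six assertions.

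The main obstacle is exactly this last step: ruling out that one delivery could suffice. The delicate case is a small system whose quorums have only two servers, together with a protocol in which the proposer records its own vote locally; there, one delivery to the single remaining member might naively seem to complete the quorum. The plan is to close this gap by appealing to safety---in a safe consensus algorithm no server, proposer included, commits a vote for \((r,v)\) until round \(r\)'s proposal is established, which requires the proposer to have already received a quorum of prepare/promise responses, i.e.\ strictly more than one delivery. Turning this into a clean, quantitative lower bound on required deliveries that holds uniformly across the varied algorithms of Table~\ref{tab-algo} is the part I expect to demand the most care.
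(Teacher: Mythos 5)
Your proposal diverges from the paper's proof at the very first move, and the divergence opens a genuine gap. The paper reads \Raw the way it is described informally ("messages sent may or may not be received"): the links guarantee nothing, so the adversary may drop \emph{every} message. Its proof is then immediate: in the execution where nothing is ever received, no server can vote, learn, or execute (in the round-structured, quorum-based algorithms considered, each of these actions is triggered by receiving a message) and no client receives a response, so all six assertions in Figures~\ref{fig-live-formal} and~\ref{fig-live-formal-multi} fail simultaneously, with no need to single out \EachVote as the weakest. You instead insist on making the formal existential form of \Raw true, which forces your adversary to deliver one message, and then you must prove that a single delivery can never complete a quorum's matching votes. That is exactly the step you flag as delicate, and it is in fact false.

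The claim you rely on---that safety forces every algorithm to complete a prepare/promise phase (hence many deliveries) before any server may vote---does not hold. A standard and perfectly safe optimization of Paxos lets the proposer of the lowest (initial) round skip phase~1 entirely, since there are no lower rounds whose chosen values could be missed; and nothing prevents that proposer from also acting as an acceptor that records a vote for its own proposal locally, with no message needed. Concretely: take three servers with majority quorums; server~1 votes for $(0,v)$ locally and sends its phase-2a message to servers 2 and 3; the adversary delivers exactly one of these (this is also the first message sent, matching your construction) and drops everything else. Server~2 votes for $(0,v)$, so the quorum $\{1,2\}$ has matching votes: \EachVote holds, your existential reading of \Raw holds, and your adversarial execution fails to defeat this algorithm. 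So the uniform ``lower bound on required deliveries'' you hope to extract does not exist, and under your reading of \Raw the reduction target \EachVote is arguably not refutable at all by link failures alone. The repair is to adopt the paper's reading of \Raw, under which the all-messages-lost execution is admissible and the entire difficulty---including your worry about the never-sends-a-message case---disappears.
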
\disc{\Vex{-4}}
\begin{proof}
  Under \Raw all messages can be lost and therefore none of the liveness
  assertions in Figures~\ref{fig-live-formal}
  and~\ref{fig-live-formal-multi} can be satisfied.
\end{proof}

\begin{proposition}[All from \QAlw single]
\label{thm-live-act}
There is an algorithm that solves single-value consensus and satisfies all
liveness assertions in Figure~\ref{fig-live-formal} under \Fair and \QAlw.
\end{proposition}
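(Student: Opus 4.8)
The plan is to prove this existence statement by exhibiting a concrete witness and checking the six assertions of Figure~\ref{fig-live-formal} against it. The natural witnesses are exactly the two algorithms that Table~\ref{tab-live-summary} maps to this proposition, CT and ACT, which solve single-value consensus using an eventually-accurate failure detector; I would take either one, completed with two small steps---applying the decided value to a local single-slot state machine, and returning its result to the clients. The first thing to check is that \Fair and \QAlw really do instantiate the operating conditions these algorithms need: \QAlw gives a fixed quorum $Q$ and a time after which every member of $Q$ is non-faulty forever (the stable correct majority), while \Fair guarantees that any message repeatedly sent between such members is eventually delivered.

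Rather than attack all six assertions directly, I would reduce the work using the assertion hierarchy of Figure~\ref{fig-live-hierarchy}. In the witness algorithm a value is executed only after it is learned, and it is learned only after a quorum has voted for it, so the hierarchy's stronger-than arcs all hold: \EachExec implies \SomeExec and \EachLearn, \SomeExec (or \EachLearn) implies \SomeLearn, and \SomeLearn implies \EachVote. Hence \EachExec alone yields five of the six assertions. Because the arc from \Resp to \EachExec is only dashed, \Resp is not obtained this way and must be proved separately. The proposition therefore reduces to establishing \EachExec and \Resp for the witness under \Fair and \QAlw.

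For \EachExec I would invoke the termination and agreement guarantees of CT/ACT: every server of the stable quorum $Q$ eventually decides, and all deciding servers decide the same value $v$. Decision is the learning step, so every member of $Q$ learns $v$ (this already gives \EachLearn); since each member of $Q$ is non-faulty forever, each then applies $v$ to its single-slot state machine, giving \EachExec witnessed by $Q$ and $v$. For \Resp I would have each server that executes $v$ repeatedly send a response carrying the result of $v$ to every client; by \Fair every such message is eventually delivered, so each client that remains available to receive eventually gets a response, establishing \Resp.

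I expect the main obstacle to be the termination argument underlying \EachExec, namely justifying that CT/ACT decide under only \Fair and \QAlw. The delicate part is the rotating-coordinator liveness argument: one must show that eventually a round occurs whose coordinator lies in $Q$ and is no longer falsely suspected, so that a majority adopts and acknowledges its estimate and a decision is reached. Since CT and ACT are stated relative to a failure-detector oracle, the key check is that the eventual accuracy that oracle requires is exactly supplied by the stable correct majority of \QAlw, with \Fair ensuring adequate delivery among its members; granting this, I would discharge the termination argument by appealing to the published systematic correctness proofs recorded in Column ``Proofs'' of Table~\ref{tab-live-summary}. The remaining pieces---the execution and response extensions and the hierarchy bookkeeping---are routine.
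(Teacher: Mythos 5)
Your proposal is correct and takes essentially the same route as the paper: the paper also uses ACT as the witness (Table~\ref{tab-live-summary} lists both CT and ACT here), cites Aguilera et al.'s Theorem~5 for \EachLearn, extends the algorithm so that processes execute upon learning and send response messages to clients with \Fair guaranteeing delivery, and invokes the hierarchy for the remaining assertions. The only cosmetic differences are that you pivot the hierarchy bookkeeping on \EachExec rather than \EachLearn, and you surface the failure-detector/termination issue explicitly before discharging it by the same appeal to the published systematic proof that the paper makes.
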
\disc{\Vex{-4}}
\begin{proof}[Proof sketch]
This is due to ACT. %
Aguilera et al~\cite[Theorem 5]{aguilera2000failure} proves that ACT 
satisfies \EachLearn.
By the hierarchy in Figure~\ref{fig-live-hierarchy}, it also satisfies
\EachVote and \SomeLearn.

ACT does not specify processes executing values, and nor does it specify
clients.  However, it is straightforward to extend ACT to have them.  If a
process can both learn and execute values, then the process executes upon
learning them. Otherwise, upon learning a value, processes send a
\msg{'learned'} message with the value to all the processes that execute
values.  \Fair link assumption states that these messages will be
eventually received.  Therefore, \SomeExec and \EachExec follow from
\SomeLearn and \EachLearn, respectively.

ACT can be further extended by adding clients. Upon executing a value,
processes send a \msg{'resp'} message with the executed value to all the
clients. \Fair and \SomeExec then imply \Resp.
\notes{
straightforward to do: same process then exec, diff proc: send and receive and exec. \SomeExec and \EachExec %
Also, this algorithm does not discuss clients
and therefore \Resp, but it is easy to extend the algorithm to achieve
\Resp.  send to client after execute, fair links.  }
\end{proof}

\begin{proposition}[All from \QAlw]
\label{prop-live-sb}
There is an algorithm that claims to solve consensus and satisfies all
liveness properties shown in Figure~\ref{fig-live-formal-multi} under \Fair
and \QAlw.
\end{proposition}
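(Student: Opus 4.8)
The plan is to reduce this multi-value statement to the single-value result already established in Proposition~\ref{thm-live-act}. The key observation is that, up to the outermost \co{\EVT}, each of the five conjunction-form assertions in Figure~\ref{fig-live-formal-multi}---\EachVote, \SomeLearn, \EachLearn, \SomeExec, and \EachExec---is a finite conjunction over slots \co{1..n} of the corresponding single-value assertion of Figure~\ref{fig-live-formal}. So I would take the extended single-value algorithm that witnesses Proposition~\ref{thm-live-act} and run one independent instance of it for each slot, threading a slot component \co{s} through \co{voted}, \co{learned}, and \co{executed}. This is exactly the per-slot structure of Multi-Paxos and in particular of Paxos-SB, which claims to solve multi-value consensus, executes each chosen value on its replicated state machine, and responds to clients; hence the witness ``claims to solve consensus'' as required.

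First I would make the reduction precise: each of the five assertions has the shape \co{\EVT \EACH s in 1..n \HAS} $P_s$, where $P_s$ is the single-value assertion of Figure~\ref{fig-live-formal} instantiated at slot \co{s}. Fixing a slot \co{s}, the instance handling that slot is literally the algorithm of Proposition~\ref{thm-live-act}, so under \Fair and \QAlw it satisfies every single-value assertion; that is, \co{\EVT} $P_s$ holds for each \co{s}. I would also check that a single application of \QAlw suffices for all slots simultaneously: \QAlw fixes one quorum that is eventually always non-faulty, and that same quorum can drive the single-value protocol to completion in every slot in parallel, so I need not invoke \co{n} independent copies of the assumption.

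The main obstacle is commuting the outer \co{\EVT} past the finite universal \co{\EACH s in 1..n}: from ``for each \co{s}, eventually $P_s$'' I must conclude ``eventually, for all \co{s}, $P_s$''. This uses two facts. The slot range \co{1..n} is finite, so there are only finitely many witnessing times; and the underlying history variables \co{voted}, \co{learned}, \co{executed}, and \co{received} are \emph{monotone}, so once $P_s$ holds it continues to hold. Stability lets me take the maximum of the finitely many per-slot witnessing times and conclude that all $P_s$ hold simultaneously from that time on, which is exactly \co{\EVT \EACH s in 1..n \HAS} $P_s$. This monotonicity-plus-finiteness argument is the one genuinely nontrivial step; everything else is bookkeeping.

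Finally, \Resp needs separate treatment because its multi-value form in Figure~\ref{fig-live-formal-multi} is an implication carrying the result \co{res(v)} rather than a slotwise conjunction. I would argue it as in Proposition~\ref{thm-live-act}: whenever a client requests \co{v}, that value is eventually executed on a non-faulty server (by \SomeExec at the relevant slot), the server then sends \co{(\msg{'resp'}, v, res(v))} back, and under \Fair this response is eventually delivered, giving \co{c.received (\msg{'resp'}, v, res(v))}. Since \EachExec and \Resp together dominate the entire hierarchy in Figure~\ref{fig-live-hierarchy}-right, establishing the five conjunction-form assertions and \Resp yields all six liveness properties of Figure~\ref{fig-live-formal-multi}.
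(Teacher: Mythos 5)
Your lifting of the five conjunction-form assertions is a genuinely different route from the paper's, and that part is essentially sound: the paper simply takes Paxos-SB as the witness, quotes its authors' stated \SomeExec and \EachExec, and obtains \EachVote, \SomeLearn, and \EachLearn from the hierarchy, so it never needs your per-slot reduction or the \EVT/\EACH commuting step (which is correct, by finiteness of \co{1..n} and stability of the history predicates \co{voted}, \co{learned}, \co{executed}). The genuine gap is in your treatment of \Resp. The multi-value \Resp of Figure~\ref{fig-live-formal-multi} demands that \emph{every} value \co{v} a client requests be executed and answered with \co{res(v)}; it is not the single-value \Resp of Figure~\ref{fig-live-formal}, which only asks that each client eventually receive \emph{some} response. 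Proposition~\ref{thm-live-act} establishes only that weak form (executing servers broadcast the one decided value to all clients), so ``argue it as in Proposition~\ref{thm-live-act}'' does not transfer. In a composition of independent per-slot instances, nothing guarantees that a given requested value ever wins any slot: \co{Some-Exec(n)} only says each slot executes some value, and your phrase ``by \SomeExec at the relevant slot'' presupposes exactly the missing fact that \co{v} has a slot in which it is chosen. A request can lose in every slot forever; this is precisely the starvation behavior that Proposition~\ref{thm-live-complex} uses to show Paxos-Complex cannot satisfy \Resp even under \Alw. Closing this gap requires an explicit request-sequencing or retry mechanism (e.g., a leader assigning each request a slot, or dedicated per-request slots) together with a proof that it works under \Fair and \QAlw---an idea absent from the proposal.

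Your identification of the construction with Paxos-SB also cuts against you, because the paper's own proof concentrates exactly where your proposal is thin. It shows that Paxos-SB as published does \emph{not} satisfy \Resp under \QAlw: only the server that received a client's request may respond, and an adversarial run can fail that server right after it executes the request while keeping every other server non-faulty; the remedy is to have clients re-send requests to different servers upon timeout. So if your witness really were Paxos-SB, your \Resp claim would be false as stated; and if it is the per-slot ACT composition, it is a new algorithm for which the \Resp argument is missing. Either way, the proposal establishes at most the five conjunction-form assertions of Figure~\ref{fig-live-formal-multi}, not all of the properties the proposition claims.
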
\disc{\Vex{-4}}
\begin{proof}[Proof sketch]
This is due to Paxos-SB. The authors state that it satisfies \SomeExec and \EachExec. By the hierarchy structure, it also satisfies \EachVote, \SomeLearn, and \EachLearn.

However, if clients do not re-send request messages, Paxos-SB does not satisfy \Resp, because only the server that received a client's request can respond to the client~\cite[Figure 10, Lines D3-D4]{kirsch2008paxostr}. Thus, we can construct an adversarial run in which the server that received the client's request executes the request but fails before replying to the client and all other servers are always non-faulty, thereby maintaining \QAlw. If the client does not re-send request message, it will never receive any response. As a remedy, we must have clients re-send their requests to different servers upon not receiving a response within some timeout. Then, \Resp can be satisfied.
\end{proof}

\notes{
Both algorithms that solve consensus under \QAlw and \Fair use rotating primary paradigm to do leader election - 
}

\begin{proposition}[\Resp from \PAlwQ]
\label{prop-live-paxos-vs}
There is an algorithm that claims to solve consensus and satisfy \Resp %
under \Fair and \PAlwQ.
\end{proposition}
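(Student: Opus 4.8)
The plan is to prove this the same way Propositions~\ref{thm-live-act} and~\ref{prop-live-sb} are proved: rather than argue abstractly, I would exhibit a concrete witnessing algorithm. The natural candidate is Paxos-VS~\cite{birman2010virtually}, which Table~\ref{tab-live-summary} records as claiming \Resp under exactly \Fair and \PAlwQ. So the first step is to cite that its authors claim it solves consensus and responds to every client request under these two assumptions, reducing the proposition to sketching why its normal-case operation delivers \Resp whenever \Fair and \PAlwQ hold. Since the authors give no proof (the ``-'' in the Proofs column), the substance of the argument is supplied here.

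For the liveness argument itself, I would first unfold \PAlwQ: after some time there is a fixed primary \m{P} that is always non-faulty, and at every moment thereafter some (possibly varying) quorum is non-faulty. Reading \PAlwQ as in the remark following its definition---that eventually some quorum \emph{containing} \m{P} is always non-faulty---the primary can at every instant find a non-faulty quorum from which to collect votes. The key steps are then a pipeline: (1)~a client's request, retransmitted and eventually delivered under \Fair, reaches \m{P}; (2)~\m{P} proposes the requested value, and because some non-faulty quorum is available, \m{P}'s proposal messages reach a full quorum and the quorum's votes reach \m{P}, both eventually under \Fair; (3)~\m{P} thereby learns and executes the value; and (4)~\m{P}'s response is delivered to the client under \Fair, establishing \Resp (and, along the way, \EachExec before \Resp, matching the paper's earlier remark about Paxos-VS).

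The main obstacle I anticipate is the \emph{varying} quorum. Unlike \PQAlw, the assumption \PAlwQ does not fix a single non-faulty quorum, so the set of servers able to vote may change between \m{P}'s proposal rounds, and a naive single round may stall if the quorum it targeted becomes faulty mid-round. The argument must therefore appeal to \m{P} retrying: since \m{P} is always non-faulty and at each instant some non-faulty quorum exists, repeated attempts under \Fair eventually align with a quorum that remains non-faulty long enough for the votes to be gathered. This is precisely where one leans on the algorithm's view-change and reconfiguration machinery rather than on a fixed-quorum hypothesis, and it is the step I would spell out most carefully---arguing that progress of \m{P}'s retry loop, combined with fair delivery, guarantees eventual completion despite the shifting quorum.
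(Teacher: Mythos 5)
Your core approach is exactly the paper's: the paper's entire proof sketch for this proposition is the single sentence ``This is due to Paxos-VS.'' Your first paragraph---exhibiting Paxos-VS as the witness and citing that its authors claim consensus and \Resp under \Fair and \PAlwQ---is therefore already the complete proof, because the proposition is deliberately worded as ``there is an algorithm that \emph{claims} to solve consensus and satisfy \Resp.'' Only the existence of the claim must be established, not the claim itself, so nothing beyond pointing at Paxos-VS (as recorded in Table~\ref{tab-live-summary}) is required.

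Where you go beyond the paper, however, you run into a genuine gap, and it is worth seeing why the paper stops where it does. Your pipeline argument rests on the step that the fixed primary's ``repeated attempts under \Fair eventually align with a quorum that remains non-faulty long enough for the votes to be gathered.'' That is precisely the assertion the paper declines to make: under \PAlwQ the non-faulty quorum may shift at every instant, and the concern that ``each such quorum might not stay non-faulty long enough to make progress'' is the very reason \AlwQ is too weak (Proposition~\ref{prop-live-SomeDec-imp}). The fixed, always non-faulty primary does rule out the leader-churn adversary used in that impossibility argument, but it does not by itself yield the alignment property you invoke, and you offer no argument for it---you only name it as the step you ``would spell out most carefully.'' This unproven step is exactly why the paper lists Paxos-VS with ``-'' in the Proofs column, phrases the proposition with ``claims,'' and in Section~\ref{sec-best} records that Paxos-VS and Derecho claim to achieve consensus under \Fair and \PAlwQ ``without proofs.'' So your additional material should either be dropped (it is not needed for the statement as given) or be presented as an open claim about Paxos-VS, not as the substance of the proof.
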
\disc{\Vex{-4}}
\begin{proof}[Proof sketch]
This is due to Paxos-VS.
\end{proof}

\begin{corollary}[Weak from \PQAlw]
  \label{prop-live-pqalw-too-strong}
  A liveness property that assumes \PQAlw is a weak property, because the
  assumption is stronger than necessary.
\end{corollary}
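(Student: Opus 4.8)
The plan is to treat this as a direct corollary of the server-assumption hierarchy together with the positive results already obtained under the weaker assumption \QAlw. By the convention fixed in the setup --- weaker assumptions make for better, i.e.\ stronger, liveness properties --- establishing that a property is ``weak'' amounts to showing that its assumption, \PQAlw, is stronger than necessary; and to show \emph{that} it suffices to exhibit a strictly weaker assumption under which the very same (indeed, all) liveness assertions are already attainable.

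First I would invoke the hierarchy of Figure~\ref{fig-live-hierarchy}-left, where \PQAlw sits strictly above \QAlw, so that \PQAlw implies \QAlw: fixing a single permanently non-faulty primary, on top of a single permanently non-faulty quorum, is a genuine strengthening. Consequently, any algorithm whose liveness is proved under \PQAlw is \emph{a fortiori} operating under an assumption at least as strong as \QAlw.

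Next I would appeal to Propositions~\ref{thm-live-act} and~\ref{prop-live-sb}, which already show that, under \Fair together with the weaker \QAlw, there exist algorithms --- ACT for single-value consensus and Paxos-SB for multi-value consensus --- satisfying every assertion in the hierarchy up to and including \Resp. Combining the two steps: whatever an algorithm establishes under \PQAlw --- whether \SomeLearn, \SomeExec, \EachExec, or \Resp --- is already obtainable under the strictly weaker \QAlw. Hence the additional strength of \PQAlw is never needed to reach any of these conclusions, and a property that assumes it is weak in the sense of the stated convention.

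The one point that actually needs care, and which I expect to be the crux, is strictness: that \QAlw is genuinely weaker than \PQAlw rather than equivalent to it, for otherwise ``stronger than necessary'' would be vacuous. This is read off the formal specifications in Figure~\ref{fig-server-formal}: \PQAlw demands a fixed server that is forever both non-faulty and primary, a clause absent from \QAlw. A run in which the always-non-faulty quorum persists but leadership rotates among servers --- exactly the rotating-primary behaviour underlying the algorithms of Propositions~\ref{thm-live-act} and~\ref{prop-live-sb} --- satisfies \QAlw yet violates \PQAlw, witnessing the separation.
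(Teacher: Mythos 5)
Your proof is correct, but it takes a genuinely different route through the server-assumption diamond than the paper does. The paper's own proof is a two-line argument off the \emph{other} branch: \PQAlw is stronger than \PAlwQ, and Paxos-VS (Proposition~\ref{prop-live-paxos-vs}) solves consensus and satisfies \Resp under \Fair and \PAlwQ; hence the part of \PQAlw's strength that fixes a permanently non-faulty quorum is unnecessary (the paper then adds the remark that \Alw is a fortiori too strong). You instead descend to \QAlw and invoke ACT and Paxos-SB (Propositions~\ref{thm-live-act} and~\ref{prop-live-sb}), concluding that the \emph{other} half of \PQAlw's strength---the fixed, eventually-always non-faulty primary---is unnecessary. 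Both arguments are sound and structurally identical (exhibit a strictly weaker assumption under which the assertions are already attainable), and since \QAlw and \PAlwQ are incomparable in Figure~\ref{fig-live-hierarchy}, neither argument subsumes the other. What your route buys: ACT's liveness has systematic proofs in the literature, whereas Paxos-VS's is only claimed, never proved (the ``-'' in the Proofs column of Table~\ref{tab-live-summary}), so your evidentiary chain is arguably firmer; you also make the strictness of the implication explicit (rotating leadership separates \QAlw from \PQAlw), a point the paper leaves implicit in the hierarchy. What the paper's route buys: brevity, and a comparison that stays closer to \PQAlw itself, showing that even retaining the stable-primary requirement one can drop the fixed quorum---which is why the paper states this as a corollary of Proposition~\ref{prop-live-paxos-vs} in particular.
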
\disc{\Vex{-4}}
\begin{proof}
  This is because \PQAlw is a stronger assumption than \PAlwQ, yet Paxos-VS
  solves consensus under \PAlwQ.
  ~Note that, as a result, \Alw is even more so---it is too strong making
  it trivial to achieve liveness.
\end{proof}

\begin{proposition}[Not \Resp]
\label{thm-live-complex}
Paxos-Complex cannot satisfy \Resp, even with \Alw.
\end{proposition}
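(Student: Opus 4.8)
The plan is to exhibit a single infinite execution of Paxos-Complex in which \Alw holds but no client ever receives a response, so that \Resp fails. The crucial observation is that \Alw only asserts that all servers are eventually always non-faulty; it asserts nothing about there being a unique, stable primary. Since Paxos-Complex provides no server assumption guaranteeing a single eventual leader (this is exactly the ``lacking assumptions'' entry for it in Table~\ref{tab-live-summary}), the classic \emph{dueling-leaders} livelock can persist forever even when every server is non-faulty.

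First I would recall the relevant structure of Paxos-Complex: a client request is answered only after some replica executes the corresponding command, which in turn requires that command to be decided, i.e., accepted by a quorum of acceptors at some ballot of a leader whose two-phase Synod attempt is not preempted. I would then construct an execution with two leaders, both non-faulty throughout, and schedule message deliveries so that each leader repeatedly completes phase~1 (collecting promises at its current ballot) but is always preempted during phase~2: just before a leader's accept requests reach a quorum of acceptors, the other leader completes a phase~1 at a strictly higher ballot, so those acceptors have promised the higher ballot and therefore reject the pending accept requests. The preempted leader, being non-faulty, adopts a yet higher ballot and retries, and the two leaders ping-pong in this fashion forever. Consequently no ballot ever attains a quorum of acceptances, so no command is ever decided, no replica ever executes, and no \msg{'resp'} is ever sent---hence \Resp is violated while \Alw (indeed, all servers non-faulty at all times) holds.

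The main obstacle is to verify that this adversarial schedule is consistent with the \Fair link assumption, which requires every sent message to be eventually delivered: I must show the livelock does not rely on dropping or indefinitely delaying messages. The key point is that every accept request is in fact delivered, but it is delivered to acceptors that have already promised a strictly higher ballot, so it is rejected rather than accepted; a superseded phase-2 attempt thus makes no progress even though all of its messages arrive. I would also lean on the fact that Paxos-Complex assumes only \Fair and not \Sure links, so the model is genuinely asynchronous and the adversary controls all relative timing---this is what lets it keep the two leaders in lockstep despite any local backoff a leader might use when choosing its next ballot. Dually, I must confirm that in Paxos a value becomes irrevocably chosen only once a quorum actually accepts it at a single ballot, so that keeping every ballot from ever attaining a full quorum of acceptances genuinely prevents any decision.

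Finally I would contrast this with Proposition~\ref{prop-live-paxos-vs}: Paxos-VS satisfies \Resp under \PAlwQ precisely because \PAlwQ supplies an eventually-always non-faulty primary, which breaks the dueling-leaders livelock, whereas \Alw supplies only non-faultiness and not primary uniqueness. This explains the form of the statement---that strengthening the server assumption all the way to \Alw still does not rescue \Resp for Paxos-Complex---since the obstruction is not server failure but the absence of a primary assumption.
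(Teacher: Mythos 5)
Your construction is not the paper's argument, and within this paper's framework it has a genuine problem. The paper's proof never touches leader election: it exhibits starvation at the replica/slot level. In Paxos-Complex, different servers may concurrently propose different client commands for the \emph{same} slot; only one command can be chosen per slot; a server whose command loses is required to re-propose it on a later slot, where it can lose again, and this loop can repeat forever under contention from other requests. So some client request is never decided in \emph{any} slot and never answered, even though consensus keeps succeeding on slot after slot. Crucially, this failure mode persists even if there is a unique, eventually-always non-faulty primary and even under \Sure links --- which is exactly what lets the paper say ``even with \Alw.'' Your dueling-leaders livelock, by contrast, requires that leadership \emph{never} stabilizes and that no value is ever decided at all.

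That requirement is where your proof conflicts with the rest of the paper. Your schedule is an execution in which \Alw holds but \PQAlw fails (no server is eventually-always the primary), whereas the hierarchy in Figure~\ref{fig-live-hierarchy} takes \Alw to be the strongest server assumption, implying \PQAlw; under that reading of ``even with \Alw,'' the proposition implicitly grants an eventually stable primary, and your adversarial schedule is excluded. Moreover, your argument proves too much: it is the generic asynchrony obstruction and is in no way specific to Paxos-Complex, so applied uniformly it would equally ``refute'' the paper's positive results for algorithms classified under \Fair links without primary-stability assumptions, e.g., Proposition~\ref{thm-live-act} (ACT under \Fair and \QAlw) and Proposition~\ref{prop-live-sb} (Paxos-SB under \Fair and \QAlw). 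Whether such schedules are admissible hinges on fine details of each protocol's timeout and leader-election machinery --- details the paper's proof deliberately avoids. The missing idea in your proposal is the one the proposition is actually about: Paxos-Complex's slot-reassignment-under-contention behavior, a flaw that no strengthening of failure or timing assumptions can repair.
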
\disc{\Vex{-4}}
\begin{proof}[Proof sketch]
In Paxos-Complex, servers may concurrently propose values sent by clients on the same slot. However, for each slot, only one value can be chosen. If the value proposed by a server is not chosen on the slot proposed by the server, the server re-proposes the value on a different slot. This can repeat forever. Thus, Paxos-Complex does not satisfy \Resp.
\end{proof}

\notes{
\begin{proposition}
No algorithm can satisfy \EachVote under \Fair.
\end{proposition}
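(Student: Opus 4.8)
The plan is to prove this exactly as Proposition~\ref{prop-live-raw} is proved: by exhibiting a single admissible execution that satisfies \Fair yet falsifies \EachVote. The guiding observation is that \Fair is purely a \emph{link} assumption---from Figure~\ref{fig-link-formal} it reads \co{each p1.sent m to p2 has evt p2.received m from p1}---so it constrains only the fate of messages that are actually sent. It imposes no lower bound on server activity and, crucially, does not assert that any server is non-faulty. Hence \Fair is entirely compatible with an execution in which no server ever takes a step, and the proposition must be read as concerning \Fair \emph{with no accompanying server assumption}.

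Concretely, I would consider the execution in which every server is faulty from time \co{0}, e.g.\ every server crashes before executing any action of the algorithm. In this execution no message is ever sent, so the universally quantified \Fair condition holds vacuously and the assumption is satisfied. On the other hand, no server ever casts a vote, so \co{p.voted (r,v)} is false for every server \co{p}, round \co{r}, and value \co{v} at every time. Because quorums are nonempty, the inner conjunction \co{each p in q has p.voted (r,v)} then fails for every quorum \co{q}, the existential witness demanded inside the \co{evt} never appears, and \EachVote is never satisfied. Since a crashed server never runs the algorithm's code, this execution is admissible no matter what algorithm is deployed, so no algorithm can guarantee \EachVote under \Fair alone.

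The one point requiring care---and the place where one might mistakenly think the claim is false---is the implicit scope of the assumption. One might object that Paxos-EPR satisfies \EachVote under \Fair (Table~\ref{tab-live-summary}); the resolution is that Paxos-EPR does so only in conjunction with the server assumption \PQAlw, whereas this proposition isolates \Fair. I would therefore emphasize that the weakness exploited here lies on the server side (all servers may be faulty), in contrast to Proposition~\ref{prop-live-raw}, where the weakness lies on the link side (messages may be dropped). This also explains why the result is not a mere corollary of Proposition~\ref{prop-live-raw}: \Fair is strictly stronger than \Raw, so the impossibility has to come from a genuinely different source---namely that a link assumption by itself can never force any progress when no server is assumed to participate.
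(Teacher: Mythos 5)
Your construction is valid for the statement as literally written, but it is not the paper's argument, and it proves substantially less. The paper's entire proof is a citation: ``This result is due to~\cite[Theorem 1]{fischer1985impossibility}'', i.e., the FLP impossibility theorem. The difference matters. Your execution---every server faulty from time $0$, so \Fair holds vacuously and no vote is ever cast---shows only that a link assumption with \emph{no} accompanying server assumption can never force \EachVote; that argument evaporates the moment any nontrivial server assumption is added, since even ``at most one server may crash'' excludes your all-crash run. The FLP route is what gives the proposition its real content: under \Fair (eventual delivery with no known delay bound), no deterministic algorithm can guarantee \EachVote \emph{even when all but one server are non-faulty}, i.e., even under server assumptions far stronger than \AlwQ or \QAlw. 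So your proof locates the obstacle on the server side (the same flavor of vacuity as Proposition~\ref{prop-live-raw}, moved from links to servers), whereas the paper locates it in asynchrony itself, which is the substantive point and what connects this proposition to the adversarial-delay argument of Proposition~\ref{prop-live-SomeDec-imp}. Two further notes. First, at FLP strength the claim must be restricted to algorithms without failure detectors or randomization---otherwise it would contradict the paper's own Proposition~\ref{thm-live-act}, where ACT (which uses failure detectors) satisfies \EachVote under \Fair and \QAlw; your weaker reading avoids this tension but only because it asserts so little. Second, your observation about Paxos-EPR being rescued by \PQAlw in Table~\ref{tab-live-summary} is correct under either reading. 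If you want your write-up to stand in for the paper's proof, you should at minimum add the FLP citation and state which of the two inequivalent claims is being made.
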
\disc{\Vex{-4}}
\begin{proof}[Proof sketch]
This result is due to~\cite[Theorem 1]{fischer1985impossibility}.
\end{proof}

}

The following impossibility result
has also been discussed by Kirsch and Amir~\cite[Strong
L1]{kirsch2008paxostr}.  It is proved by Keidar and
Shraer~\cite{keidar2006timeliness,keidar2006timelinessTR} in a general
round-based algorithm framework called GIRAF.
\begin{proposition}[None from \AlwQ]
\label{prop-live-SomeDec-imp}
No quorum-based consensus algorithm that executes in rounds can satisfy
\SomeLearn under \Fair and \AlwQ.
\end{proposition}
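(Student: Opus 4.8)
The plan is to prove this by contradiction through an adversarial scheduling argument. Assume some quorum-based, round-executing algorithm $A$ satisfies \SomeLearn whenever \Fair and \AlwQ hold. I would then exhibit a single infinite execution of $A$ in which both \Fair and \AlwQ hold yet no server ever learns a value, contradicting the assumption. The two levers the adversary gets to pull are exactly the gaps left open by the two assumptions: \AlwQ only promises that \emph{some} quorum is non-faulty at each time after some stabilization point, but never pins down which one, so the adversary may let the live quorum drift; and \Fair only forces a message between two processes that both remain correct to be delivered \emph{eventually}, i.e.\ after an arbitrary finite delay, and imposes nothing on a message whose sender has crashed.

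For the construction I would split the timeline (past the stabilization time) into consecutive epochs and, in each epoch, designate one quorum $Q_i$ to be kept entirely non-faulty while at least one server of the previously active quorum is crashed, choosing the $Q_i$ so that the live quorum keeps shifting rather than stabilizing. The invariant I would maintain throughout is: no server ever accumulates votes from a full quorum for the same value within a single round. This is enforced by withholding delivery---the adversary delays every vote message, and whenever delivering a pending vote would let some server complete a quorum for some round $r$, the adversary instead ends the epoch first, crashing a server of that quorum (so its in-flight votes may be dropped, since fair links place no obligation on a crashed sender) and activating the next non-faulty quorum $Q_{i+1}$. If this invariant holds forever, then the first learning route from the high-level structure in Section~\ref{sec-preliminaries}---receiving a quorum of votes in a round---never fires for any server; consequently no server ever holds a learned value, so the second route---receiving a \co{learned} message forwarded by a server that has already learned---can never fire either. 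Hence \SomeLearn never holds.

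I would then check that the two assumptions really are satisfied by this execution. \AlwQ holds because at every time the currently active quorum $Q_i$ is entirely non-faulty. \Fair holds because the only messages the adversary drops are those originating from (or destined for) a server that it crashes; every message between servers that stay correct across an epoch boundary is still delivered, merely late. It is worth noting that this construction still permits \EachVote: within an epoch the members of $Q_i$ may all vote for the same value in the same round, since that assertion only requires the votes to be sent, not collected. This pins the impossibility precisely at \SomeLearn and explains why \AlwQ is ``too weak'' exactly at the step of collecting a quorum.

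The hard part will be the fairness bookkeeping: I must argue that the adversary can shift the live quorum infinitely often while still discharging every delivery obligation that \Fair imposes, and never be forced to deliver a set of votes that completes a single-round quorum before the next crash. The delicate case is the servers lying in the intersection of two consecutive live quorums, which remain correct across the boundary and whose mutual messages \Fair does force to be delivered; I would handle them by observing that completing a quorum needs votes from \emph{all} of its members, so crashing one member of whichever quorum is currently making progress suffices to block completion even though the intersection servers keep communicating. This shifting-quorum adversary is in the same spirit as the classical FLP-style and GIRAF round-based impossibility arguments cited just above the statement, and in a pinch the result could instead be obtained by reducing to the Keidar--Shraer theorem rather than rebuilding the schedule from scratch.
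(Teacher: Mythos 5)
Your construction has a genuine gap: its core mechanism is that the adversary may drop the in-flight votes of a server it crashes (``fair links place no obligation on a crashed sender''). That caveat belongs to the informal, correct-process phrasing of fair links quoted from the literature, not to the \Fair that the proposition is actually about. The paper's formal definition (Figure~\ref{fig-link-formal}) is that \emph{every} sent message is eventually received, with no exception for crashed senders---the paper even notes that this coincides with assuming no communication failures at all, and Theorem~\ref{thm-link} and the paper's own proof of this proposition rely on exactly that reading (``Because every sent message is received, \Fair is satisfied''). Since the proposition is an impossibility claim, your adversarial execution must satisfy the assumption as formalized; an execution that only satisfies the weaker correct-process version proves nothing here. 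And under the formal \Fair your invariant collapses: once a full quorum has cast votes for the same value in the same round, those vote messages must all eventually be delivered no matter whom you crash, at which point some server receives a quorum of same-round votes and learns, by the learning rule of Section~\ref{sec-preliminaries}. For the same reason, your closing observation that the construction ``still permits \EachVote'' is fatal rather than reassuring: under the paper's \Fair, allowing \EachVote to occur is precisely what the adversary cannot afford.

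The paper's proof sketch avoids this trap by never dropping anything and by preventing the quorum of votes from ever being \emph{cast}, rather than from being \emph{collected}: the adversary merely delays the messages of the current primary (or, for primary-less algorithms, of the server that proposed the highest round seen by some quorum) long enough that a new primary and round supersede it before those messages arrive; the stale messages are then delivered but rejected, so no round ever accumulates a quorum of votes. Every sent message is eventually received, so \Fair holds, and at most one server is faulty at any time, so \AlwQ holds---no shifting-quorum bookkeeping or epoch construction is needed. If you wish to salvage your epoch structure, the repair is to re-aim the adversary at the round/proposal messages (delaying them past round supersession) instead of at vote deliveries; alternatively, your final suggestion of reducing to the Keidar--Shraer GIRAF result is indeed the route the paper credits, but as written your schedule does not establish the stated proposition.
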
\disc{\Vex{-4}}
\begin{proof}[Proof sketch]
If the algorithm does not use a primary, take the server that has proposed the highest round seen by some quorum as the primary. An adversarial run can be constructed where the messages sent by the primary are delayed long enough for a new primary to be elected before the messages are received. Because every sent message is received, \Fair is satisfied, and because at most one server is faulty at any point, \AlwQ is satisfied.
\notes{
\begin{itemize}
    \item Primary-ful. We assume that the algorithm uses an oracle that chooses a primary server and informs all non-faulty servers about it. Recall that only the primary server can propose values.
    \item Primary-less. During executions of these algorithms, the adversary fails the server which has proposed the highest ballot seen by any quorum.

    The invariant maintained by the adversary, $Adv\_Inv$ is,
    \begin{code}
        q.seen_ballots = \{b : \EACH p in q \HAS p.received m and m.ballot = b\}
        
        \p{Adv\_Inv} =
        quorums seen_ballots = \m{\emptyset} or
        \SOME p in servers \HAS \SOME p.sent m \HAS \SOME q in quorums \HAS
            m.ballot in q.seen_ballots and
            \EACH q2 in quorums \HAS m.ballot \m{\geq} max q2.seen_ballots and
            not p.nf
    \end{code}
    
    We prove $[]Adv\_Inv => []~\EachVote$:
    \begin{enumerate}
        \item First, we consider the case when no quorum has seen a ballot, that is, there is no quorum such that every server in the quorum has seen the same ballot. Because \EachVote needs a quorum of servers to vote on the same value 
        
        \EachVote cannot hold for any value because \EachVote needs 
        \begin{code}
        CASE quorums seen_ballots = \m{\emptyset}
        by
        \end{code}
    \end{enumerate}
\end{itemize}
}
\end{proof}
\notes{
\begin{itemize}
\item EPaxos does not satisfy.
\item .... are invalid.
\item Paxos-SB describes this.
\end{itemize}
}

\notes{
This is subsumed by theorem~\ref{thm-live-SomeDec-AlwQ-Fair}
\begin{proposition}
With any link assumption, including bounded delay, no algorithm without failure detection can satisfy \EachVote (and any of the 5 liveness properties). [Toueg]
\end{proposition}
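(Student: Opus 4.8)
The plan is to reduce the claim to the fundamental fact that failure detection is necessary for progress in asynchronous consensus, and to exploit the hierarchy of Figure~\ref{fig-live-hierarchy} so that only the weakest assertion, \EachVote, must be treated directly. Since every other liveness assertion (\SomeLearn, \EachLearn, \SomeExec, \EachExec, and \Resp) implies \EachVote, it suffices to show that \EachVote alone cannot be guaranteed; impossibility of all five stronger assertions then follows immediately by the hierarchy. The \Raw case is already settled by Proposition~\ref{prop-live-raw}, so the genuine content is the \Fair and \Sure (bounded-delay) cases.

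The key device is to read ``without failure detection'' as \emph{timing-oblivious}: a process's enabled actions depend only on its local state and the multiset of messages it has received, never on elapsed time, a clock, or the mere \emph{absence} of an expected message. First I would make this precise and then establish a rescheduling lemma: any (possibly infinite) execution of a timing-oblivious algorithm built in the asynchronous \Fair model can be assigned real-time stamps that respect the bound of \Sure($D$)---by compressing the real time between consecutive local steps---without altering a single local transition, precisely because the algorithm never consults the clock. Consequently any counterexample execution constructed in the weaker \Fair model is simultaneously a valid execution under the strongest link assumption \Sure, so it suffices to build the counterexample once, in the asynchronous model, and the conclusion lifts to all link assumptions. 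Then, working in the \Fair model with one possible crash, I would run an FLP-style adversary argument tailored to \EachVote rather than to decision: exploiting the round structure, at each critical configuration the adversary crashes the single server whose next step would complete a same-value quorum vote in its round. Because the algorithm is timing-oblivious it cannot detect this crash, and deciding that a round is dead is itself failure detection, which is forbidden; hence no surviving server is ever forced to initiate a fresh round, the servers wait on messages that never arrive, and no quorum ever votes for a common value in any round.

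The main obstacle is that \EachVote is strictly weaker than the consensus-decision property to which the classical FLP theorem~\cite{fischer1985impossibility} applies, so FLP cannot be invoked as a black box; the bivalence/valence construction must be redone with the event ``some quorum completes a same-value vote in some round'' playing the role that ``some process decides'' plays in FLP, and one must verify that the adversary can always find a pivotal crash that keeps this event from ever occurring while crashing at most one server---so that the run satisfies even the strong server assumptions \QAlw and \PQAlw, which makes the impossibility sharpest. A secondary, more definitional difficulty is pinning down ``without failure detection'' tightly enough that the \Fair--\Sure rescheduling genuinely holds, in particular ruling out implicit timers synthesized by counting rounds or messages; this is exactly the boundary that the failure-detector characterizations of Chandra and Toueg~\cite{chandra1996unreliable} are designed to formalize, and I would lean on that framework to close the argument and to justify the ``[Toueg]'' attribution.
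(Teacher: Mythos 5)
Your high-level architecture is reasonable and, in its skeleton, matches what the paper actually does: the paper never proves this proposition at all (it survives only as a draft note), supporting it instead by attribution---the ``[Toueg]'' pointer to failure-detector theory~\cite{chandra1996unreliable}, the FLP result~\cite{fischer1985impossibility}, and the remark that it is subsumed by the round-preemption adversary of Proposition~\ref{prop-live-SomeDec-imp} (following Keidar--Shraer~\cite{keidar2006timeliness} and Kirsch--Amir~\cite{kirsch2008paxostr}). Your reduction to the weakest assertion \EachVote via the hierarchy is the right direction, \Raw is correctly dispatched by Proposition~\ref{prop-live-raw}, and the timing-oblivious re-timestamping lemma lifting a \Fair counterexample to \Sure is a genuinely good idea that the paper only gestures at (it needs one extra care you omit: time must still diverge while every in-flight message's window of length $D$ contains only finitely many events, to avoid Zeno runs).

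The gaps are in the core construction. First, your sharpness claim is self-defeating: an adversary that ``at each critical configuration crashes the single server whose next step would complete a same-value quorum vote'' performs unboundedly many crashes hitting varying servers, so no \emph{fixed} quorum stays non-faulty---the run satisfies at best \AlwQ, not \QAlw, which is exactly where the paper locates its impossibility (Proposition~\ref{prop-live-SomeDec-imp}). Worse, a refuting run cannot satisfy \PQAlw at all: \PQAlw postulates an eventually-forever stable non-faulty primary, and a stable primary with \Fair links is precisely the regime in which these algorithms do drive a quorum vote (this is the content of Corollary~\ref{prop-live-pqalw-too-strong}, that \PQAlw is too strong); your adversary must keep destabilizing whoever is about to complete a quorum, so the run it builds violates \PQAlw by construction. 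Second, the step you flag as the ``main obstacle''---redoing the bivalence argument with ``some quorum completes a same-value vote'' in place of ``some process decides''---is left unresolved, and it is not clear it can be resolved: FLP's valency machinery is powered by the agreement/validity semantics of decisions, which the monotone vote event lacks, so pivotal configurations need not exist for it. The known route avoids valency entirely and uses the round structure: crash the unique initiator of the initial round before it proposes (one crash, and safety/validity forbids votes on unproposed values); then either the algorithm never starts another round and \EachVote never holds, or rounds can be started spontaneously and the adversary preempts each round with a higher one before the last quorum member votes, as in Figure~\ref{fig-RaftEachVote}. This also corrects your over-strong reading that ``deciding that a round is dead is itself failure detection'': Paxos-style proposers may initiate new rounds at any time without detecting anything, so your ``servers wait forever on missing messages'' branch does not by itself dispose of the \Fair case.
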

\begin{itemize}
\item 
Paxos-Synod is not live using PQ-Extra-Dur.  because delayed msg can trigger new leader election, so no decision is made.  
no processes joining leaving does not guarantee no leader election.
also, T1 is not used for anything.

\item line 3 of PQ-Extra-Dur is strange: it's not an assumption of the environment but the algorithm.
\end{itemize}
}

\notes{
\begin{proposition}
No algorithm can satisfy \EachVote without \Sure.
\end{proposition}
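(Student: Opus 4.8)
The plan is to split on the link assumption and reduce the only nontrivial case to an FLP-style indistinguishability argument. Under \Raw the claim is immediate: by Proposition~\ref{prop-live-raw} no liveness assertion at all---in particular \EachVote---can be satisfied when every message may be lost. So the entire content lies in the \Fair case, where messages are eventually delivered but with \emph{unbounded} delay. The governing observation is that under \Fair the delay bound is unknown, so no server can distinguish a crashed sender from one whose message is merely slow; this is exactly the asynchrony that FLP~\cite{fischer1985impossibility} exploits, and it is what \Sure removes by furnishing a known bound $D$.

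The construction I would build is an adversarial scheduler that prevents \emph{any} quorum from ever casting a joint vote in a single round. Following the critical-step method of FLP, I would define the \emph{joint-vote valence} of a configuration as the set of values $v$ for which some quorum $Q$ can still reach a state with each $p \in Q$ having voted $(r,v)$ for a common round $r$, and exhibit an initial configuration that is joint-vote-bivalent, i.e.\ valent for at least two distinct proposed values. At each step the scheduler either delivers one pending message or crashes one server; the key lemma is that from any bivalent configuration the scheduler can choose a delivery or crash that keeps two joint-vote values alive, so no quorum is ever completed in any fixed round. Because every delayed message is eventually delivered and at most one server is crashed at any time, both \Fair and \AlwQ remain satisfied along the run, yet \EachVote never holds.

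To see that the threshold is exactly \Sure, note how \Sure defeats this scheduler: with a known bound $D$ the algorithm may use timeouts strictly larger than $D$, so a non-faulty proposer's proposal is guaranteed to reach a non-faulty quorum \emph{within} $D$, before any member's timeout fires and drives it to a higher round; the quorum then votes together and \EachVote holds. Thus the impossibility is tight, and the unbounded-delay versus known-bound distinction is precisely the \Fair-versus-\Sure distinction isolated in Theorem~\ref{thm-link}.

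The hard part is the core indistinguishability lemma, and it is genuinely delicate because \EachVote is strictly weaker than consensus: by Proposition~\ref{prop-live-eachvote-weak} a quorum may vote jointly without any value ever being learned (the Raft run of Figure~\ref{fig-RaftEachVote}), so FLP cannot be invoked as a black box---its termination hypothesis is a \emph{local decision}, whereas ``a quorum voted'' is not a decision. The valence argument must therefore be run directly on the joint-vote event rather than imported from the decision-based theorem. A second subtlety, which dictates how the statement should ultimately be qualified, is the interaction with the server assumption: under a strong assumption such as \Alw a timeout-free algorithm that simply waits for a fixed proposer's message already satisfies \EachVote under \Fair, so the impossibility can only be expected to hold under a sufficiently weak server assumption such as the \AlwQ used in Proposition~\ref{prop-live-SomeDec-imp}. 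Making the joint-vote scheduler succeed against every round-advancement strategy while keeping that weak server assumption satisfied is where the real work, and the precise hypotheses of the statement, reside.
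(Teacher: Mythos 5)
The paper's own proof of this proposition is a single sentence: the claim ``is implied by the FLP impossibility result~\cite{fischer1985impossibility}.'' Your route is genuinely different---you refuse that black-box reduction and propose to rerun the bivalence argument directly on the joint-vote event---and your reason for refusing it is sound: FLP's valence is defined through local decision events, and as Figure~\ref{fig-RaftEachVote} shows, a completed joint vote need not fix the decision value, so a quorum can vote in a configuration that is still bivalent in FLP's sense. The one-line appeal to FLP is therefore only sound for algorithms, such as Paxos, in which a quorum voting in a common round makes the configuration univalent; on this point your attempt is more careful than the paper's sketch.

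The genuine gap is that your central lemma---that from any joint-vote-bivalent configuration the scheduler can preserve bivalence against \emph{every} algorithm---is only asserted, and it cannot be established at the stated level of generality, because the proposition is false without a server-assumption hypothesis; your own closing paragraph is in effect a refutation of the statement rather than a step toward proving it. Under \Fair and \Alw a trivial protocol (each server retransmits its proposal until it holds all of them, then votes in round 0 for the minimum) satisfies \EachVote with no known delay bound, and within the paper's own taxonomy \EachVote is formally verified in Ivy for Paxos-EPR under \Fair and \PQAlw (Table~\ref{tab-live-summary}), so ``without \Sure'' is not the operative hypothesis: the impossibility can only hold under a weak server assumption such as \AlwQ, which is precisely the axis isolated by Proposition~\ref{prop-live-SomeDec-imp}. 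Your tightness claim fails in the other direction as well: with \Sure but only \AlwQ, an adversary in the crash-recovery model can crash each newly elected primary just before it solicits votes while always keeping some quorum up, so \EachVote remains unattainable---\Sure is neither necessary nor sufficient on its own. A correct version of the statement must be a two-parameter one, fixing the link assumption \emph{and} a weak server assumption, and your plan becomes a proof only after that reformulation and after the preserved-bivalence lemma is actually proved against arbitrary round-advancement strategies.
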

\begin{proof}[Proof sketch]
This is implied by the FLP impossibility result~\cite{fischer1985impossibility}.
\end{proof}
}

\notes{
PQ-Extra-Dur and PR-Dur are not sufficient for paxos-synod.  same reason as for 2.

Even Evt-Alw and Q-Evt-Alw are not sufficient for synod.  same reason as for 4

PQ-Dur is not strong enough. need to say that a duration of at least certain length.
summarize different duration.
}

\notes{
\begin{proposition}
To satisfy \Alw, \PQAlw, and \QAlw, a single process making decisions is sufficient.
\end{proposition}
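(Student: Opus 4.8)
The plan is to argue this constructively, by exhibiting a trivial ``dictator'' algorithm and checking it against each assumption. In this algorithm a single global decider makes every decision---in the single-value case it simply chooses the first value it proposes, and in the multi-value case it imposes an arbitrary serialization on the client requests it receives---and broadcasts each decision to all servers, where a server learns a value precisely when it receives such a broadcast. Safety is immediate \emph{provided} the algorithm really has a single global decider, since then at most one value (or one sequence of values) can ever be chosen. The two remaining obligations are thus (i) to exhibit a single decider that is eventually and perpetually non-faulty, and (ii) to deliver its broadcasts; obligation (ii) follows from the fair-links assumption \Fair, which must be assumed since by Proposition~\ref{prop-live-raw} no liveness assertion holds under \Raw.

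First I would dispatch the two easy cases, in which obligation (i) is met by simply naming the decider in advance. Under \Alw every server is eventually always non-faulty, so I may fix \emph{any} server as the decider: past the stabilization time it is non-faulty, it broadcasts its decision, and by \Fair every server eventually receives it, giving \SomeLearn and, since every server eventually learns, the stronger assertions up through \Resp. Under \PQAlw a fixed primary $P$ and quorum $Q$ are eventually always non-faulty, so I take $P$ as the decider; its broadcasts reach all of $Q$ by \Fair, so the entire quorum $Q$ learns. In both cases the assumption directly pins down one perpetually non-faulty server, so both obligations hold at once.

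The hard part, which I expect to be the main obstacle, is \QAlw. Here the assumption guarantees only a fixed quorum $Q$ that is eventually always non-faulty, and \emph{no single server need be eventually always non-faulty}, so the decider cannot be named in advance. The plan is instead to elect one inside the (a priori unknown) stable quorum---for instance by deterministically choosing its least-identifier member---and let that representative be the decider. The delicate point is now safety rather than liveness: because $Q$ is not known ahead of time, distinct quorums may each elect a representative, and I must show that only the representative of the genuinely stable quorum can ever get a value learned by a quorum, so that no two distinct values are chosen and the ``single global decider'' premise of the first paragraph is recovered. This is precisely the threshold isolated by Proposition~\ref{prop-live-SomeDec-imp}: a \emph{fixed} stable quorum (\QAlw) suffices to anchor such an election, whereas a merely eventually-always-\emph{some} quorum (\AlwQ) does not. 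Since Proposition~\ref{thm-live-act} already establishes full liveness under \Fair and \QAlw for a genuine algorithm, the only residual content is that this far simpler single-decider scheme also works once a unique representative has stabilized, after which delivery to a quorum again follows from \Fair.
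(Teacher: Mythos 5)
First, a point about the comparison you asked for: the paper never actually proves this proposition --- it appears only as an unproven remark in a suppressed notes block, with no argument attached --- so your proposal has to stand entirely on its own merits. Your \Alw and \PQAlw cases do stand: in both, the assumption itself pins down (directly, or via the primary flag) a single server that is eventually forever non-faulty, and with \Fair its broadcasts reach everyone, so the dictator argument goes through, modulo the citation of Proposition~\ref{prop-live-raw} to justify assuming \Fair, which is consistent with the paper's framework.

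The genuine gap is the \QAlw case, and it begins with a factual slip: under \QAlw it is \emph{false} that ``no single server need be eventually always non-faulty.'' The assumption fixes one quorum $Q$ and keeps every member of $Q$ non-faulty from some time on, so each member of $Q$ \emph{is} eventually always non-faulty; what \QAlw withholds is the \emph{identity} of $Q$, so a pre-named decider may lie outside $Q$ and be faulty forever. Your remedy --- ``deterministically choosing the least-identifier member'' of $Q$ --- is not implementable: $Q$ is existentially quantified in the assumption, not an input to the algorithm, and under \Fair alone (no delay bound, no failure-detection accuracy) a server cannot distinguish a forever-crashed lower-identifier server from a slow one, so your ``election inside the stable quorum'' is exactly the eventual-leader-election problem, not a side step around it. Worse, the safety obligation you correctly flag as the delicate point is asserted, not discharged, and for a genuine single-decider scheme it fails: before stabilization, two servers can simultaneously consider themselves the representative, unilaterally decide different values, and broadcast them, and since learning is defined as receiving a broadcast, different servers then learn different values. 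The standard repair --- decisions take effect only after quorum acknowledgment, with rounds/ballots so a later representative adopts possibly-chosen earlier values --- is precisely the machinery of ACT and Paxos-SB, so deferring the ``residual'' to Proposition~\ref{thm-live-act} does not leave a ``far simpler single-decider scheme''; it collapses your \QAlw case into the very algorithms the proposition was meant to trivialize. (Relatedly, Proposition~\ref{prop-live-SomeDec-imp} is an impossibility result about \AlwQ for quorum-based round-executing algorithms; it establishes no ``threshold'' that anchors an election under \QAlw.) As written, your argument establishes the claim for \Alw and \PQAlw only, and it leaves open --- as, arguably, the paper's own unproven remark does --- whether the dictator reading of the proposition is even true under \QAlw.
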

}

\notes{
async round based

paxos-sb does not assume fair links.
sc - paxos-sb assumes no communication failures, so that is our fair links
}

\subsection{What is the right or best liveness property, or properties?}
\label{sec-best}

The link assumptions in Figure~\ref{fig-link-formal} form a 3-element
linear order,
but as Figure~\ref{fig-live-hierarchy} shows, the hierarchy of server
assumptions includes two diamonds, and the hierarchy of liveness assertions
includes a diamond in the middle and a branching at the bottom.
That is, different server assumptions and liveness assertions are not
merely a degree of strength but include a combination of different choices.

The main question remaining is, what is a right or best liveness property
that can be satisfied?  This property must give the best liveness assertion
and make the weakest assumptions among all choices.

\mypar{Responding to clients from reaching consensus}
It is clear, from the perspective of truly solving the consensus problem,
that \Resp is the desired and best liveness assertion in
Figure~\ref{fig-live-hierarchy}-right.  Anything weaker does not guarantee
the responsiveness of the servers to the requests of the clients.

Note that the partial order of liveness assertions could be made into a
lattice, by adding a bottom element, Each-Resp, that is stronger than both
\EachExec and \Resp, to close the branching at the bottom.  Each-Resp would
capture that clients have received not only the desired \msg{'resp'}
messages, but received them from each server in some quorum.  However, this
has no benefit but only incurs extra cost.
In fact, it is clear from Figure~\ref{fig-live-hierarchy}-right that
neither \EachExec nor \EachLearn is needed to ensure \Resp.

As discussed in proofs above, \EachVote is insufficient for liveness, but
once \SomeLearn is satisfied, \SomeExec and \Resp are easy to satisfy with
\Fair.  Indeed, it is known that consensus is reached when \SomeLearn is
satisfied.

\mypar{Weakest assumptions for consensus}
So what are the weakest link and server assumptions that can ensure
\SomeLearn?
As discussed in proofs above, \Raw and \AlwQ are both too weak, but ACT and
Paxos-SB achieve consensus and more with \Fair and \QAlw with systematic
proofs, and Paxos-VS and Derecho claim to achieve with \Fair and \PAlwQ
without proofs.

Additionally, Paxos-Time achieves consensus and more with \Sure and \PQDur,
noted as ``assuming primary'' in Table~\ref{tab-live-summary}.  Is it
possible to achieve consensus without assuming primary, that is, with a new
element, Q-Dur, in Figure~\ref{fig-live-hierarchy} that is weaker than both
\PQDur and \QAlw?  We tend to think so, likely by strengthening analysis
for ACT and Paxos-SB, but there are no existing proofs or even claims for
this.
Finally, with use of duration in Q-Dur, and with \Sure, liveness assertions
should also be strengthened to give upper bounds after which the assertions
would be satisfied, similar to what Paxos-Time achieves.  The ultimate goal
is, of course, the most efficient algorithm with the most precise bound,
and possibly with probabilistic guarantees.
Much is open for future studies.

\notes{
For Paxos-Basic. Since, Lamport never really talks about leader election, we have to assume our own assumptions for leader election. I assume a process considers itself leader after it receives a majority of replies. Following should hold if all processes start concurrently. It may hold otherwise as well\\
\begin{verbatim}
leader = the proposer who believes itself to be the leader
followers = proposers who know they are not the leader because
some other proposer has a higher ballot than them and they know him

FD = Failure Detector, drift ignored for now
FD timeout assuming heartbeat is led by followers (pong detector):
    2dmax

FD timeout assuming heartbeat is led by leader (heartbeat):
    If we assume only a single process considers itself to be the leader:
        If the leader keeps checking if it is still the leader every I time units,
        then timeout should be I time units. I >= 2dmax
    else I think it should be I = 2dmax.
\end{verbatim}
}

\section{Related work and conclusion}
\label{sec-live-related}

\notes{

\mypar{Failure Detectors and Consensus}
The problem of solving consensus with failure detectors has been extensively studied in two major categories - one where servers do not recover after crash~\cite{chandra1996unreliable,chandra1996weakest} and the other where servers may recover after crash~\cite{dolev1996failure,oliveira1997consensus,hurfin1998consensus,aguilera2000failure}.

other work on liveness verification:

Mechanically Verifying the Fundamental Liveness Property of the Chord
Protocol M Filali - Formal Methods 2019
} %

There is a large literature on distributed consensus algorithms and
variants. This work
systematically studies, formalizes, and compares liveness assumptions and
assertions in over 30 prominent consensus algorithms and variants under the
asynchrony caused by process and link failures.
Many of these algorithms do not discuss liveness properties, as described
in Section~\ref{sec-live-hierarchy} and summarized in
Table~\ref{tab-live-summary}. Among those that do, except
Amir~\cite{amir1995replication}, none of them specify liveness properties
formally in their papers, as shown in Appendices~\ref{appendix-live-vs},
\ref{appendix-live-vr}, \ref{appendix-live-paxos}, \ref{appendix-live-fd},
and~\ref{appendix-live-others}\footnote{Ironfleet and EPR contain formal
  liveness properties in web links
  but do not present them in the papers.}.  In~\cite{amir1995replication},
Amir formalizes a liveness property similar to \EachExec and provides a
systematic proof of it for EVS, a group communication based consensus
algorithm. Other such algorithms, including
Congruity~\cite{amir2001congruitytr,amir2002congruity} and
COReL~\cite{keidar1994msthesis,keidar1996efficient}, also study a similar
liveness property.

Kirsch and Amir~\cite{kirsch2008paxos,kirsch2008paxostr} discuss two
different liveness properties, called Weak L1 and Strong L1, and state that
Paxos-SB satisfies Weak L1. They further state that the assumptions of the
other liveness property, Strong L1, are too weak and could not guarantee
liveness. Server assumptions for these properties are shown in
Figure~\ref{fig-live-hierarchy}.

Keidar and Shraer~\cite{keidar2006timeliness,keidar2006timelinessTR}
develop GIRAF, a generalized framework that models consensus algorithms as
round-based. The framework is a generalization of Gafni’s Round-by-Round
Failure Detector (RRFD)~\cite{gafni1998rrfd}. The framework is formally
specified in IOA~\cite[Algorithm 1]{keidar2006timeliness}.
Computation in this model proceeds in rounds and, in each round, processes
may query an oracle. They give algorithms that satisfy \EachLearn under
different liveness assumptions in their model and also give an
impossibility result for another liveness assumption.

Van Renesse et al.~\cite{van2015vive} explore the similarities and
differences between VR, Paxos-Synod, and Zab, and discuss liveness of these
algorithms with particular emphasis on recovery after process failures both
with and without stable storage.
Van Renesse and Altinbuken~\cite{van2015paxos} describe \Resp as the
``ideal liveness condition'' for Paxos and, like~\cite{kirsch2008paxostr},
detail certain practical considerations that would affect liveness
including bounded clock drift, timeouts, and stable storage.

The language we introduced is powerful, easy to use, and highly effective.
It is more general than
temporal logic~\cite{pnueli1977temporal,huth2004logic} by supporting also
time intervals, similar to many temporal logic extensions,
e.g.,~\cite{chaochen1991calculus,chaochen2004duration}.  It captures
commonly-used predicate logic over time intervals in simple forms that are
easy to use for understanding and reasoning.  It allowed us to easily
express the wide range of liveness properties clearly and precisely as
needed.  The resulting specifications then allowed us to directly and
precisely relate all different combinations of assumptions and assertions
and to find weaknesses of many of them, as shown in Section~\ref{sec-anal}.

In conclusion, 
this paper is a first step in precise specification of the wide variety of
liveness assumptions and assertions used for distributed consensus
algorithms and variants under asynchrony.  Such precise specification is
essential for comparing the algorithms and variants, including new
algorithms as they are developed, and helps better understand all of them.
Much future work is needed to formally verify liveness properties of
consensus algorithms, especially the ones that give stronger liveness
guarantees with weaker assumptions.
Much remains open for precise time complexity analysis of these algorithms,
as discussed in Section~\ref{sec-best}, and of Byzantine consensus
algorithms, including precise probabilistic modeling and reasoning for
understanding the performance of these algorithms analytically.

\podc{
\begin{acks}
\fund
\end{acks}
}

\bibliography{mybib}
\arxiv{
\bibliographystyle{plainurl}
}
\disc{
\bibliographystyle{plainurl}
}
\podc{
\bibliographystyle{ACM-Reference-Format}
}

\appendix
\section{Liveness model checking}
\label{app-mc}

We developed TLA+ specifications of the liveness properties we formalized,
and used TLC, the model checker for TLA+, to model check execution steps to
illustrate liveness patterns for Paxos.
We use the TLA+ specifications of Paxos from~\cite{chand2018simpler}\footnote{Available online at \forlics{\url{https://github.com/lics2020whatslive/LICS2020}}\arxiv{\url{https://github.com/DistAlgo/proofs}% in basic-paxos/PaxosHistVarNFM18.tla
}};
note that 
% we do not need to model failures separately because
nondeterminism in the models already allows all failures considered to
happen.
We add a $tick$ variable in the models.  This variable is initialized to 0
and is incremented by 1 with each action execution.  Therefore, its value
corresponds to the number of actions executed by the system.

%Each run of a model is characterized by \textit{stable duration start},
%which is the minimum number of actions after which liveness assumptions
%hold.  In each model, we use TLC to find out the value of \textit{stable
%  duration length}, which is the maximum number of actions after which
%consensus is reached.  
We show trends and patterns about the liveness properties using two values
about the number of execution steps.
\begin{itemize}

\item Each run of a model takes a \defn{stable duration start}---the
  minimum number of actions after which liveness assumptions hold.
  For Paxos, the liveness assumptions used in the literature, as shown in
  Table~\ref{tab-live-summary}, are \PQExtraDur, \PQDur, and \PQAlw, for
  Paxos-Synod, Paxos-Time, and Paxos-Fast, respectively.
  We make these assumptions hold by (1) ensuring that a leader election
  starts, if not already started, when the value of $tick$ equals the
  stable duration start, and (2) checking that there are no new leader
  election from this point on.
%wrong:
%  We use the value of $tick$ when a leader is elected, capturing the {\bf
%    P} part in the assumptions above.  The quorum who elected the leader
%  was certainly non-faulty, but we cannot ensure that there is a quorum
%  {\bf Q} that is non-faulty for some duration or longer.  However, the
%  next measure is a bound on the duration needed.

\item In each model, we use TLC to find the value of \defn{stable duration
    length}---the maximum number of actions after which consensus is
  reached.
  For Paxos, the liveness assertions used in the literature, as shown in
  Table~\ref{tab-live-summary}, are \EachLearn, \EachExec, and \SomeLearn,
  for Paxos-Synod, Paxos-Time, and Paxos-Fast, respectively.
  We use the value of $tick$ after which \EachVote holds; \EachVote ensures
  that consensus is reached in Paxos~\cite{lamport2001paxos}, and Paxos
  does not have the infinite behavior of Raft shown in
  Figure~\ref{fig-RaftEachVote}.
%by trying increasing length values and checking if \EachVote holds.
%no longer making sense: 
%  We use the value of $tick$ for when \SomeLearn holds.  It is the common
%  assertion implied by all three assertions used.  It shows a bound on the
%  duration needed to achieve consensus.
  
\end{itemize}
We use TLC to find the value of stable duration length given the value of
stable duration start, and run TLC on models with different number of
servers.
Paxos uses two kinds of servers: proposers and acceptors.  A proposer may
become a leader and propose values, and an acceptor may vote for leaders
and values.  Any majority set of acceptors is a quorum.  To tolerate
failures of 1 proposer and 1 acceptor, there must be at least 2 proposers
and 3 acceptors.

We present experiments and results for single-value Paxos, which we call
Paxos-Basic, because liveness properties using $tick$ for multi-value Paxos
are too expensive to check using TLC.
We run 6 models of Paxos-Basic; for 2, 3, or 4 proposers, in combination
with 3 or 4 acceptors.  Each run finished in a few hours, but larger
numbers are also too expensive to check using TLC.  We name the model with
$i$ proposers and $j$ acceptors $i\text{P}j\text{A}$.

\pgfplotscreateplotcyclelist{mycyclelist}{
red,mark=diamond, dashed,mark options=solid, mark size=4pt\\%
blue,mark=x, dashed,mark options=solid, mark size=3pt\\%
green!50!black,mark=square, dashed, mark options=solid, mark size=4pt\\% (4)
red,mark=triangle, dashed,mark options=solid, mark size=4pt\\% (3) cyan
blue,mark=+, dashed,mark options=solid, mark size=3pt\\% (2)
green!50!black,mark=o, dashed,mark options=solid, mark size=4.5pt\\% (1) teal
}

\newcommand{\plotsscalefactor}{0.85}
% From https://tex.stackexchange.com/questions/54794/using-a-pgfplots-style-legend-in-a-plain-old-tikzpicture
\newenvironment{customlegend}[1][]{%
    \begingroup
    % inits/clears the lists (which might be populated from previous
    % axes):
    \csname pgfplots@init@cleared@structures\endcsname
    \pgfplotsset{#1}%
}{%
    % draws the legend:
    \csname pgfplots@createlegend\endcsname
    \endgroup
}%

% makes \addlegendimage available (typically only available within an
% axis environment):
\def\addlegendimage{\csname pgfplots@addlegendimage\endcsname}
\begin{figure*}[htp]\Vex{-2}
    \begin{subfigure}{0.45\textwidth}
        %\begin{figure}[ht!]
        \centering
        \begin{tikzpicture}[every mark/.append style={mark size=2pt},scale=\plotsscalefactor]
        \begin{axis}[
            cycle list name=mycyclelist,
            xlabel={Stable duration start},
            ylabel={Stable duration length},
            label style={font=\small},
            ylabel near ticks,
            tick label style={font=\footnotesize},
            legend pos=north west,
            legend cell align={left},
            legend style={font=\footnotesize},
            ymajorgrids=true,
            xtick={0, 1, 2, 3},
            width=50ex,
        ]
        
        \addplot+[line width=0.2mm, dash pattern=on 3pt off 6pt, dash phase=6pt] table[x=Start,y=4B4A,col sep=comma] {ModelCheckTicks/PaxosUsDur.csv};
        \addplot+[line width=0.2mm, dash pattern=on 3pt off 6pt, dash phase=3pt] table[x=Start,y=3B4A,col sep=comma] {ModelCheckTicks/PaxosUsDur.csv};
        \addplot+[line width=0.2mm, dash pattern=on 3pt off 6pt ] table[x=Start,y=2B4A,col sep=comma] {ModelCheckTicks/PaxosUsDur.csv};
        \addplot+[line width=0.2mm, dash pattern=on 3pt off 6pt, dash phase=6pt] table[x=Start,y=4B3A,col sep=comma] {ModelCheckTicks/PaxosUsDur.csv};
        \addplot+[line width=0.2mm, dash pattern=on 3pt off 6pt, dash phase=3pt] table[x=Start,y=3B3A,col sep=comma] {ModelCheckTicks/PaxosUsDur.csv};
        \addplot+[line width=0.2mm, dash pattern=on 3pt off 6pt] table[x=Start,y=2B3A,col sep=comma] {ModelCheckTicks/PaxosUsDur.csv};
        
        %\legend{(1) 4P4A,(2) 3P4A,(3) 2P4A,(4) 4P3A,(5) 3P3A, (6) 2P3A}
        \end{axis}
        \end{tikzpicture}
        \caption{\label{fig-live-stable-period}%
        %For Paxos-Basic, the stable duration length as measured by TLC using a tick variable in TLA+.
        % Commenting everything below because the text now has the definitions of stable duration start and length.
        %The numbers are the number of TLA actions. %annie: this doesn't do anything after clarifying the two below.
        %Stable duration start is the minimum index %annie: number?
        %of actions after which the liveness assumptions hold.  
        %Stable duration length is the maximum number of actions %annie: from when the assumptions hold to when consensus is reached?
        %after which consensus is reached.
        }
        %\end{figure}

        %\begin{figure}[ht!] %annie: figure lines order,
        %\centering
        \begin{tikzpicture}[every mark/.append style={mark size=2pt},scale=\plotsscalefactor]
        \begin{axis}[
            cycle list name=mycyclelist,
            xlabel={Stable duration start},
            ylabel={Number of states},
            label style={font=\small},
            ylabel near ticks,
            tick label style={font=\footnotesize},
            legend pos=north west,
            legend cell align={left},
            legend style={font=\footnotesize},
            ymajorgrids=true,
            xtick={0, 1, 2, 3},
            width=50ex,
            ymax=350000,
            %legend pos=outer north east,
        ]
        \addplot table[x=Start,y=4B4A,col sep=comma] {ModelCheckTicks/PaxosUsStates.csv};
        \addplot table[x=Start,y=3B4A,col sep=comma] {ModelCheckTicks/PaxosUsStates.csv};
        \addplot table[x=Start,y=2B4A,col sep=comma] {ModelCheckTicks/PaxosUsStates.csv};
        \addplot table[x=Start,y=4B3A,col sep=comma] {ModelCheckTicks/PaxosUsStates.csv};
        \addplot table[x=Start,y=3B3A,col sep=comma] {ModelCheckTicks/PaxosUsStates.csv};
        \addplot table[x=Start,y=2B3A,col sep=comma] {ModelCheckTicks/PaxosUsStates.csv};
        
        %\legend{(1) 4P4A,(2) 3P4A,(3) 2P4A,(4) 4P3A,(5) 3P3A, (6) 2P3A}
        
        \end{axis}
        \end{tikzpicture}
        \caption{\label{fig-live-states}
        %For Paxos-Basic, the number of states generated by TLC.
        }
        
        %\end{figure}
    \end{subfigure}%
    \hspace{0.05\linewidth}
    \begin{subfigure}{0.45\textwidth}
        %\begin{figure}[ht!]
        \centering

        %\begin{figure}[ht!]
        %\centering
        \begin{tikzpicture}[every mark/.append style={mark size=2pt},scale=\plotsscalefactor]
        \begin{axis}[
            cycle list name=mycyclelist,
            xlabel={Stable duration start},
            ylabel={Model checking time (s)},
            label style={font=\small},
            ylabel near ticks,
            tick label style={font=\footnotesize},
            legend pos=north west,
            legend cell align={left},
            legend style={font=\footnotesize},
            ymajorgrids=true,
            xtick={0, 1, 2, 3},
            width=50ex,
            ymax=200,
            %width=0.7\linewidth
            %legend pos=outer north east,
        ]
        \addplot table[x=Start,y=4B4A,col sep=comma] {ModelCheckTicks/PaxosUsTime.csv};
        \addplot table[x=Start,y=3B4A,col sep=comma] {ModelCheckTicks/PaxosUsTime.csv};
        \addplot table[x=Start,y=2B4A,col sep=comma] {ModelCheckTicks/PaxosUsTime.csv};
        \addplot table[x=Start,y=4B3A,col sep=comma] {ModelCheckTicks/PaxosUsTime.csv};
        \addplot table[x=Start,y=3B3A,col sep=comma] {ModelCheckTicks/PaxosUsTime.csv};
        \addplot table[x=Start,y=2B3A,col sep=comma] {ModelCheckTicks/PaxosUsTime.csv};
        
        %\legend{(1) 4P4A,(2) 3P4A,(3) 2P4A,(4) 4P3A,(5) 3P3A, (6) 2P3A}
        \end{axis}
        \end{tikzpicture}
        \caption{\label{fig-live-time}
        %For Paxos-Basic, the time taken by TLC to model check.
        }
        %\end{figure}

        \begin{tikzpicture}[every mark/.append style={mark size=2pt},scale=\plotsscalefactor]
        \begin{axis}[
            cycle list name=mycyclelist,
            xlabel={Stable duration start},
            ylabel={Number of distinct states},
            label style={font=\small},
            ylabel near ticks,
            tick label style={font=\footnotesize},
            legend pos=north west,
            legend cell align={left},
            legend style={font=\footnotesize},
            ymajorgrids=true,
            xtick={0, 1, 2, 3},
            width=50ex,
            ymax=90000,
            %legend pos=outer north east,
        ]
        \addplot table[x=Start,y=4B4A,col sep=comma] {ModelCheckTicks/PaxosUsDStates.csv};
        \addplot table[x=Start,y=3B4A,col sep=comma] {ModelCheckTicks/PaxosUsDStates.csv};
        \addplot table[x=Start,y=2B4A,col sep=comma] {ModelCheckTicks/PaxosUsDStates.csv};
        \addplot table[x=Start,y=4B3A,col sep=comma] {ModelCheckTicks/PaxosUsDStates.csv};
        \addplot table[x=Start,y=3B3A,col sep=comma] {ModelCheckTicks/PaxosUsDStates.csv};
        \addplot table[x=Start,y=2B3A,col sep=comma] {ModelCheckTicks/PaxosUsDStates.csv};
        
        %\legend{(1) 4P4A,(2) 3P4A,(3) 2P4A,(4) 4P3A,(5) 3P3A, (6) 2P3A}
        \end{axis}
        \end{tikzpicture}
        \caption{\label{fig-live-distinct-states}
        %For Paxos-Basic, the number of distinct states generated by TLC.
        }
        %\end{figure}
    \end{subfigure}

    \begin{subfigure}{0.95\textwidth}
    \centering
    
    \begin{tikzpicture}
        %\legend{(1) 4P4A,(2) 3P4A,(3) 2P4A,(4) 4P3A,(5) 3P3A, (6) 2P3A}
        \begin{customlegend}[legend entries={4P4A,3P4A,2P4A,4P3A,3P3A,2P3A}, legend columns=-1]
        \addlegendimage{red,mark=diamond, dashed,mark options=solid, mark size=4pt}
        \addlegendimage{blue,mark=x, dashed,mark options=solid, mark size=3pt}
        \addlegendimage{green!50!black,mark=square, dashed, mark options=solid, mark size=4pt}
        \addlegendimage{red,mark=triangle, dashed,mark options=solid, mark size=4pt}
        \addlegendimage{blue,mark=+, dashed,mark options=solid, mark size=3pt}
        \addlegendimage{green!50!black,mark=o, dashed,mark options=solid, mark size=4.5pt}
        \end{customlegend}
    \end{tikzpicture}
    \end{subfigure}\Vex{2}

    \caption{Model checking execution steps for Paxos-Basic.
      (\subref{fig-live-stable-period}) stable duration start and length as
      measured by TLC %using a tick variable in TLA+.
      (\subref{fig-live-states}) number of states generated by TLC.
      (\subref{fig-live-time}) model checking time (in seconds) taken by TLC.
      (\subref{fig-live-distinct-states}) number of distinct states
      generated by TLC.}
    \label{fig-live-mcplots}
\end{figure*}

\mypar{Results}
Figure~\ref{fig-live-mcplots} shows our model checking results using line
graphs, where each line represents a different combination of numbers of
proposers and acceptors:
\begin{itemize}
\item green for 2 proposers --- lower line (circle marks) for 3 acceptors,
  and higher line (square marks) for 4.
\item blue for 3 proposers --- lower line (plus marks) for 3 acceptors, and
  higher line (cross marks) for 4.
\item red for 4 proposers --- lower line (triangle marks) for 3 acceptors,
  and higher line (diamond marks) for 4.
\end{itemize}
In the graphs for numbers of states and distinct states and model checking
time, the line for 4P4A is cut in order to make the differences among the
other lines visible.
 %Figures~\ref{fig-live-stable-period},~\ref{fig-live-states},~\ref{fig-live-distinct-states}, and~
%\ref{fig-live-time} summarize our results where 
%line colors indicate the number of proposers in that model; red, blue, and green for 2, 3, and 4 proposers, respectively. 
%annie: replyng on color alone is not good as one could print b/w from the paper.
\begin{enumerate}

\item 

  Figure~\ref{fig-live-stable-period} shows the relationship between stable
  duration start and stable duration length.  We see that, for each model,
  stable duration length grows linearly with stable duration start up to an
  inflection point, and remains unchanged after this point.
  \begin{itemize}

  \item Runs with a later stable duration start allow more proposers in
    Paxos to compete to be the leader, and thus take a longer time to reach
    consensus, i.e., have a larger stable duration length.  This is up to
    an inflection point, because the number of proposers in a run is fixed.

  \item Models with fewer proposers reach the inflection point earlier,
    because fewer proposers have less competition and reach consensus
    faster.

  \item Additionally, models with fewer acceptors reaches consensus faster,
    i.e., have a shorter stable duration length, because with majority
    quorums for reaching consensus, 2 votes are needed for 3 acceptors, and
    3 votes are needed for 4 acceptors.

\end{itemize}
The rate at which stable duration length grows, when it does, equals the
number of acceptors. The precise formula is,
\DeclarePairedDelimiter\ceil{\lceil}{\rceil}
\[
  y = 
  \begin{dcases}
    jx + (j + 2 + \ceil*{\frac{j+1}{2}}),& \text{if } x \leq i\\
    ji + (j + 2 + \ceil*{\frac{j+1}{2}}),& \text{otherwise}
  \end{dcases}
\]
where $y$ is stable duration length and $x$ is stable duration start. The
term $(j + 2 + \ceil*{\frac{j+1}{2}})$ is the stable duration length when
stable duration starts at tick 0.
     
\item 

Figures~\ref{fig-live-states} and~\ref{fig-live-distinct-states} show the
number of states and the number of distinct states, respectively, generated
by TLC for each model and stable duration start value.  
For models with 4 proposers, we see that the number of states (and distinct
states) generated by TLC grows exponentially as stable duration start
increases.
For models with 3 proposers, we see a subexponential but superlinear
growth. 
For models with 2 proposers, the growth is approximately linear for number
of states, and is exactly linear for the number of distinct states.
\notes{ distinct states: 2P3A: y = 400x + 81 2P4A: y = 2268x + 253
    
  4P3A: y = -2167.512 + 1758.645e^(1.113082x)
  4P4A: y = -11603.03 + 6476.75 e^(1.615428x)
}
    
\item 

Figure~\ref{fig-live-time} shows the running time (in seconds) that TLC
took to finish checking each model.
For models with 4 proposers, we again see that the growth of model checking
time is exponential. However, for models with 3 proposers, unlike the
trends for number of states and number of distinct states, we see that the
growth of model checking time is superlinear and larger.  For models with 2
proposers, the growth appears to be linear.

\end{enumerate}
In summary, after the liveness assumptions hold, consensus is reached
within a fixed duration determined by the number of proposers and number of
acceptors.  Furthermore, consensus is reached faster when liveness
assumptions hold earlier.

\section{Virtual Synchrony and variants}
\label{appendix-live-vs}
\subsection[VS-ISIS]{VS-ISIS~\cite{birman1987reliable}}
\label{appendix-live-vs-isis}
This work~\cite{birman1987reliable} describes the fundamental communication and maintenance primitives used in the ISIS system. These primitives help build fault-tolerant process groups. The system ensures that the processes belonging to a fault-tolerant process group will observe consistent orderings of events affecting the group as a whole, including process failures, recoveries, migration, and dynamic changes to group properties like member rankings.

The system supports three modes of communication with varying ordering constraints:
\begin{itemize}
    \item \textbf{GBCAST.} The order in which GBCASTs are delivered relative to the delivery of all other sorts of broadcasts (including other GBCASTs) is the same at all overlapping destinations. Additionally, if a process is observed to fail, a failure GBCAST message notifying its failure must be delivered after any other messages
    
    \item \textbf{ABCAST.} These are labeled broadcasts used for maintaining a replicated queue. The order in which ABCASTs are delivered is the same for every intended recipient per label but is independent of other labels and broadcasts.
    
    \item \textbf{CBCAST.} These are similar to ABCAST with the added property that these broadcasts are not independent of other labels. For example, if a replicated variable $x$ is being maintained, and commands setting $x$ to $0$ and incrementing $x$ are to be broadcast in that order, not only is it important that all processes receive the same order of commands but there must be a way to ensure the order.
\end{itemize}

The authors then detail their implementation that guarantees these primitives and give informal correctness (safety) proof for each implementation. For each primitive, atomicity is proven, i.e., every message is either delivered to all or none of its recipients. Then, the particular ordering property for each broadcast is proven. The proofs are manual pen-and-paper proofs. However, the exact properties being proven are not clearly stated all the times. Liveness is neither proven nor clearly stated. Following is the excerpt from the paper on liveness:

\begin{tcolorbox}[title=On Liveness]
In the interest of brevity, we omit a formal proof that the protocols given above
are free of deadlock and livelock.
\end{tcolorbox}

\subsection[VS-ISIS2]{VS-ISIS2~\cite{birman1987exploiting}}
This work describes $ISIS_{2}$. Built upon $ISIS$, this system also supports the same broadcast primitives: GBCAST, ABCAST, and CBCAST. This paper essentially details some parts of the systems and explains some optimizations. While performance is discussed, safety and liveness are not. Following are the assumptions about the system:

\begin{tcolorbox}[title=Assumptions]
In this work, we assume that a distributed system consists of processes with disjoint address spaces
communicating over a conventional LAN using message passing. Processes are assumed to execute on
computing sites. Individual processes and entire sites
can crash; the former type of crash is assumed detectable by some monitoring mechanism at the site of the
process, while the latter can only be detected by
another site by means of a timeout. It is assumed
that failing processes send no incorrect messages.
Our system tolerates message loss, but not partitioning failures (wherein links that interconnect groups of
sites fail). Partitioning could cause parts of our system to hang until communication is restored.
\end{tcolorbox}

\notes{
So, the difference between this and Extended VS is the addition of operation under system partition.

\begin{tcolorbox}[title=Design goal]
It will appear to any observer -- any process
using the system -- that all processes observed
the \textit{same events in the same order}. This applies
not just to message delivery events, but also to
failures, recoveries, group membership changes,
and other events described below. As we will see
in the next section, this enables one to make a-priori assumptions about the actions other
processes will take, and simplifies algorithmic
design.
\end{tcolorbox}

Above is what the authors want a virtually synchronous execution to hold.

To achieve this, Isis toolkit supports many tools. We start with the most low-level primitives supported by the toolkit, the atomic multicast primitives:

\begin{tcolorbox}[title=Atomic Multicast Primitives]
\begin{description}
\item[CBCAST] A commonly occurring
situation involves a number of concurrently executing
processes that communicate with a shared distributed
resource, whose internal state is sensitive to the order
in which requests arrive at different components of
the resource. For example, concurrent operations on
a shared replicated FIFO queue must be received and
processed at all copies in the same order. This
ordering requirement corresponds to the primitive we
call \textit{ABCAST}, which delivers messages atomically and
in the same order everywhere. If all requests for
queue operations are transmitted using this primitive, the enqueuing operations would look synchronous relative to other such operations on the same
queue.

$$
m \rightarrow m' \Rightarrow \A p: deliver_p (m) \xrightarrow{p} deliver_p(m')
$$
\end{description}
\end{tcolorbox}
}

\subsection[EVS]{EVS~\cite{amir1995replication,moser1994extended}}
Extended Virtual Synchrony (EVS) extends Virtual Synchrony by adding more delivery primitives and includes mechanisms to handle system partition.

\begin{tcolorbox}[title=Notation]
\begin{itemize}
\item $S$ is the servers group.
\item $a_{s, i}$ , is the $i^{th}$ action performed by server $s$.
\item $D_{s, i}$ , is the state of the database at server $s$ after actions $1..i$ have been performed by server $s$.
\item $stable\_system(s, r)$ is a predicate that denotes the existence of a set of servers containing $s$ and $r$, and a time, from which on, that set does not face any communication or server failure. Note that this predicate is only defined to reason about the liveness of certain protocols. It does not imply any limitation on our practical protocol.
\end{itemize}
\end{tcolorbox}
\notes{
Using $stable\_system(s, r)$, we can write \textit{Eventual Replication} from~\cite{kirsch2008paxos} as
\begin{verbatim}
\A s, r \in Proc, v \in Values:
    /\ stable_system(s, r)
    /\ Execute(s, v)
    => <> Execute(r, v)
\end{verbatim}
}
\begin{tcolorbox}[title=Liveness property]
If server $s$ performs an action and there exists a set of servers containing $s$
and $r$, and a time, from which on, that set does not face any communication or
processes failures, then server $r$ eventually performs the action.
\end{tcolorbox}

\begin{tcolorbox}[title=Formalization]
$\Diamond (\exists a_{s, i} \land\, \Box stable\_system(s, r)) \Rightarrow \Diamond \exists a_{r, i}$
\end{tcolorbox}
\notes{
I think there is some mismatch between the English definition of $stable\_system(s, r)$ and the use of this predicate in the TLA formula because the definition itself contains the "eventually always" condition.

Events in EVS:
\begin{tcolorbox}[title=EVS Events]
\begin{itemize}
\item $deliver\_conf_p(c)$: the GC delivers to process $p$ a configuration change message initiating configuration $c$ where $p$ is a member of $c$.
\item $send_p(m, c)$: the GC sends message $m$ generated by $p$ while $p$ is a member of configuration $c$.
\item $deliver_p(m, c)$ the GC delivers message $m$ to $p$ while $p$ is a member of configuration $c$.
\item $crash_p(c)$: process $p$ crashes or the processor at which $p$ resides crashes while $p$ is a member of configuration $c$.
\end{itemize}
\end{tcolorbox}

Events in EVS are ordered using two relations:
\begin{tcolorbox}[title=Event ordering relations]
\begin{itemize}
\item The precedes relation, $->$, defines a global partial order on all events in the system.
\item The $ord$ function, from events to natural numbers defines a logical total order on those events. The $ord$ function is not one-to-one, because some events in different processes are required to occur at the same logical time.
\end{itemize}
\end{tcolorbox}

Next the thesis describes the semantics of these two relations. The descriptions are given in English(italicized in the boxes) and a semi-formal formalization is given(normal font in the boxes). The numbering is kept intact from the thesis~\cite{amir1995replication}:

\begin{tcolorbox}[title=Basic Delivery]
\begin{description}
\item [1.1]  \textit{$\rightarrow$ relation is an irreflexive, anti-symmetric and transitive partial order relation.}\\\\
    For any event $e$, $e \not-> e$.\\
    If there exist events $e$ and $e'$, such that $e -> e'$, it is not the case that $e' -> e$.\\
    If there exist events $e$, $e'$ and $e''$ such that $e -> e'$ and $e' -> e''$, then $e -> e''$.

\item [1.2]  \textit{events within a single process are totally ordered by the $\rightarrow$ relation.}\\\\
    If there exists an event $e$ that is $deliver\_conf_p(c)$ or $send_p(m, c)$ or $deliver_p(m, c)$ or $crash_p(c)$, and an event $e'$ that is $deliver\_conf_p(c')$ or $send_p(m', c')$ or $deliver_p(m', c')$ or $crash_p(c')$ , then $e -> e'$ or $e' -> e$.

\item [1.3]  \textit{the sending of a message precedes its delivery, and the delivery occurs in the configuration in which the message was sent or in an immediately following transitional configuration.}\\\\
    If there exists $deliver_p(m, c)$, then there exists $send_q(m, reg(c))$ such that $send_q(m, reg(c)) -> deliver_p(m, c)$.

\item [1.4] \textit{a given message is not sent more than once and is not delivered in two different configurations to the same process.}\\\\
    If there exists $send_p(m, c)$, then $c = reg(c)$ and there is neither $send_p(m, c')$ where $c \neq c'$, nor $send_q(m, c'')$ where $p \neq q$.\\
    Moreover, if there exists $deliver_p(m, c)$, then there does not exist $deliver_p(m, c')$ where $c \neq c'$.
\end{description}
\end{tcolorbox}

\begin{tcolorbox}[title=Delivery of Configuration Changes]
\begin{description}
\item [2.1] \textit{ if a process crashes or partitions, then the GC detects that and delivers a new configuration change message to other processes belonging to the old configuration.
}\\\\
    If there exists $deliver\_conf_p(c)$ and there does not exist $crash_p(c)$ and there does not exist $deliver\_conf_p(c')$ such that $deliver\_conf_p(c) -> deliver\_conf_p(c')$, and if $q$ is a member of $c$, then there exists $deliver\_conf_q(c)$, and there does not exist $crash_q(c)$, and there does not exists $deliver\_conf_q(c'')$ such that $deliver\_conf_q(c) -> deliver\_conf_q(c'')$.

\item [2.2] \textit{ at any moment a process is a member of a unique configuration whose events are delimited by the configuration change event(s) for that configuration.
}\\\\
    If there exists an event $e$ that is either $send_p(m, c)$ or $deliver_p(m, c)$ or $crash_p(c)$, then there exists $deliver\_conf_p(c)$ such that $deliver\_conf_p(c) -> e$, and there does not exist an event $e'$ such that $e'$ is $crash_p(c)$ or $deliver\_conf_p(c')$ and $deliver\_conf_p(c) -> e' -> e$.

\item [2.3] \textit{ an event that follows delivery of a configuration change to one process must also follow delivery of that configuration change to other processes.
}\\\\
    If there exist $deliver\_conf_p(c)$, $deliver\_conf_q(c)$ and $e$, such that $deliver\_conf_p(c) -> e$, then $deliver\_conf_q(c) -> e$.

\item [2.4] \textit{ an event that precedes delivery of a configuration change to one process must also precede delivery of that configuration change to other processes.
}\\\\
    If there exist $deliver\_conf_p(c)$, $deliver\_conf_q(c)$ and $e$, such that $e -> deliver\_conf_p(c)$, then $e -> deliver\_conf_q(c)$.
\end{description}
\end{tcolorbox}

\begin{tcolorbox}[title=Self Delivery]
\begin{description}
\item [3] \textit{ each message that is generated by a process is delivered to this process, provided that it does not crash. Moreover, the message is delivered in the same configuration it was sent, or in the transitional configuration which follows.
}\\\\
    If there exist $send_p(m, c)$ and $deliver\_conf_p(c')$ where $c' \neq trans_p(c)$, such that $send_p(m, c) -> deliver\_conf_p(c')$, and there does not exist $crash_p(com_p(c))$, then there exists $deliver_p(m, com_p(c))$.

\end{description}
\end{tcolorbox}

\begin{tcolorbox}[title=Failure Atomicity]
\begin{description}
\item [4] \textit{ if any two processes proceed together from one configuration to the next, the GC delivers the same set of messages to both processes in that configuration.
}\\\\
    If there exist $deliver\_conf_p(c)$, $deliver\_conf_p(c''')$, $deliver\_conf_q(c)$, $deliver\_conf_q(c''')$ and $deliver_p(m, c)$, such that $deliver\_conf_p(c) -> deliver\_conf_p(c''')$, and there does not exist $deliver\_conf_p(c')$ such that $deliver\_conf_p(c) -> deliver\_conf_p(c') -> deliver\_conf_p(c''')$ and there does not exist $deliver\_conf_q(c'')$ such that $deliver\_conf_q(c) -> deliver\_conf_q(c'') -> deliver\_conf_q(c''')$ then there exists $deliver_q(m, c)$.

\end{description}
\end{tcolorbox}

\begin{tcolorbox}[title=Causal Delivery]
\begin{description}
\item [5] \textit{ if one message is sent before another in the same configuration and if the GC delivers the second of those messages, then it also delivers the first.
}\\\\
    If there exist $send_p(m, c)$, $send_q(m', c)$ and $deliver_r(m', com_r(c))$ such that $send_p(m, c) -> send_q(m', c)$, and there exists $deliver_r(m, com_r(c))$, such that $deliver_r(m, com_r(c)) -> deliver_r(m', com_r(c))$.
\end{description}
\end{tcolorbox}

\begin{tcolorbox}[title=Agreed Delivery]
\begin{description}
\item [6.1] \textit{ total order should be consistent with the partial order
}\\\\
    If there exist events $e$ and $e'$ such that $e -> e'$, then $ord(e) < ord(e')$.

\item [6.2] \textit{the GC delivers configuration change messages for the same configuration, at the same logical time to each of the processes. Messages are also delivered at the same logical time to each of the processes, regardless of the configuration in which they are delivered.
}\\\\
    If there exist events $e$ and $e'$ that are either $deliver\_conf_p(c)$ and $deliver\_conf_q(c)$ or $deliver_p(m, c)$ and $deliver_q(m, c')$, then $ord(e) = ord(e')$.

\item [6.3] \textit{the GC delivers messages in order to all processes except that, in the transitional configuration there is no obligation to deliver messages generated by processes that are not members of that transitional configuration.
}\\\\
    If there exist $deliver_p(m, com_p(c))$, $deliver_p(m', com_p(c))$, $deliver_q(m', c')$ and $send_r(m, reg(c'))$ such that $ord(deliver_p(m, com_p(c))) < ord(deliver_p(m', com_p(c)))$ and $r$ is a member of $c'$, then there exists $deliver_q(m, com_q(c'))$.
\end{description}
\end{tcolorbox}

\begin{tcolorbox}[title=Safe Delivery]
\begin{description}
\item [7.1] \textit{ if the GC delivers a safe message to a process which is
in a configuration, then the GC delivers the message to each of the processes in that configuration unless the process crash. i.e. even if the network partitions at that point, the message is still delivered. 
}\\\\
    If there exists $deliver_p(m, c)$ for a safe message $m$, then for every process $q$ in $c$ there exists either $deliver_q(m, com_q(c))$ or $crash_q(com_q(c))$.

\item [7.2] \textit{ if the GC delivers a safe message
to any of the processes in a regular configuration, then the GC delivered the configuration change message for that configuration to all the members of that configuration.
}\\\\
    If there exists $deliver_p(m, reg(c))$ for a safe message $m$, then for every process $q$ in $reg(c)$ there exists either $deliver\_conf_q(reg(c))$.
\end{description}
\end{tcolorbox}
}

\subsection[Paxos-VS]{Paxos-VS~\cite{birman2010virtually}}
This work focuses on a dynamic reconfiguration model similar to the view model in VR (Appendix~\ref{appendix-live-vr}) and unifies two widely popular approaches to the problem of consensus---Virtual Synchrony (Appendix~\ref{appendix-live-vs-isis}) and state machine replication, particularly Paxos (Appendix~\ref{appendix-live-paxos}). Many algorithms and variants are proposed exhibiting availability conditions. We focus on one of these algorithms---Virtually Synchronous Paxos. The algorithm bases on the concept of \textit{memberships} similar to views in VR. A primary is elected in each membership. During steady-state operation, client requests received by non-primary servers are forwarded to the primary. Upon receiving a request, the primary assigns it to a slot and broadcasts the request and its slot number to the servers in current membership. If the primary fails, reconfiguration is triggered (instead of electing a new primary for the membership even if a majority of servers are non-faulty). Following is mentioned about liveness:

\begin{tcolorbox}[title = Assumptions]
Our aim is to provide services in asynchronous systems whose set of servers is changed through explicit reconfigurations. We assume an additional set, potentially overlapping, of clients. We do not assume any bounds on
message latencies or message processing times (i.e., the execution environment is asynchronous), and messages
may get lost, re-ordered, or duplicated on the network. However, we assume that a message that is delivered was
previously sent by some live member and that correct members can eventually communicate any message.
\end{tcolorbox}

\begin{tcolorbox}[title= Liveness Assumption]
Throughout the lifetime of a membership $M$, a majority of servers are correct
\end{tcolorbox}

\begin{tcolorbox}[title= Liveness Property]
Every client request is eventually replied to.
\end{tcolorbox}

\subsection[Derecho]{Derecho~\cite{jha2019derecho}}
Derecho is a fast state machine replication algorithm. Like VS-ISIS and Paxos-VS, this algorithm also executes on the principle of process groups and memberships. Unlike Paxos-VS though, where a non-primary server routes client requests to the primary, in Derecho each server appends incoming client requests in a local queue and informs other servers in the current group of this. If the group contains $N$ servers and server $i \in [0, N-1]$ inserts a request in index $k$ of its local queue, then the request is \textit{virtually} assigned the slot number $N*k + i$. For example, if the group has 3 servers, then server $0$ inserts requests in slots $\{0, 3, 6, \ldots\}$, server $1$ inserts requests in slots $\{1, 4, 7, \ldots\}$, and server $2$ inserts requests in slots $\{2, 5, 8, \ldots\}$, that is, servers simultaneously propose requests but always in disjoint slots. Liveness is discussed in~\cite[Appendix C]{jha2019derecho}. Following is the liveness assumption:

\begin{tcolorbox}[title= Liveness Assumption]
Derecho will make progress if (1) no more than a minority of the current view fails and (2) the failure detector only reports a failure if the process in question has actually failed \textit{or has become unreachable}.
\end{tcolorbox}

\section{Viewstamped Replication and variants}
\label{appendix-live-vr}
\subsection[VR]{VR~\cite{oki1988viewstamped}}
This work presents the Viewstamped Replication algorithm based on the primary-backup approach. One server is designated as the primary; it executes procedure calls, and participates in two-phase commit. The remaining servers are backups, which are essentially passive and merely receive state information from the primary. Servers run in \textit{views}. Multiple events can cause a server to trigger view change, for example, not receiving timely heartbeat messages from some other server in the view, receiving heartbeat messages from a new server, recovering after a crash, etc. This paper does not mention liveness.

\subsection[VR-Revisit]{VR-Revisit~\cite{liskov2012viewstamped}}
This work gives an updated description of Viewstamped Replication. The description of the core algorithm is simplified and nuances of the application using it are separated out. It explains various optimizations and reconfiguration in detail. Both safety and liveness of the View change, Recovery and Reconfiguration algorithms are discussed and informally proven in this paper.

\begin{tcolorbox}[title=Liveness]
\begin{itemize}
    \item \textit{View change.} The protocol executes client requests provided at least $f + 1$ non-failed replicas, including the current primary, are able to communicate.
    \item \textit{Recovery.} The protocol is live, assuming no more that $f$ replicas fail simultaneously.
\end{itemize}
\end{tcolorbox}

The exact liveness assumptions for Reconfiguration are not explicitly mentioned but it can be seen that the algorithm requires at least $f + 1$ non-failed replicas of the old configuration to be non-faulty for the new configuration to be installed.

\section{Original Paxos and variants}
\label{appendix-live-paxos}
\subsection[Paxos-Synod]{Paxos-Synod~\cite{lamport1998part}}
The seminal Part-time Parliament paper explains Paxos in full and also gives a liveness condition.
\begin{tcolorbox}[title=Presidential selection requirement]
If no one entered or left the Chamber, then after $T$ minutes exactly one priest in the Chamber would consider himself to be the president.
\end{tcolorbox}

\begin{tcolorbox}[title=Liveness Property]
If the presidential selection requirement were met, then the complete protocol would have the property that if a majority set of priests were in the chamber and no one entered or left the Chamber for $T + 99$ minutes, then at the end of that period every priest in the Chamber would have a decree written in his ledger.
\end{tcolorbox}

\begin{tcolorbox}[title=Formalization]
None
\end{tcolorbox}

\begin{tcolorbox}[title=Proof]
Still assuming that $p$ was the only priest initiating ballots, suppose that he were required to initiate a new ballot iff (i) he had not executed step 3 or step 5 within the previous 22 minutes, or (ii) he learned that another priest had initiated a higher-numbered ballot. If the Chamber doors were locked with $p$ and a majority set of priests inside, then a decree would be passed and recorded in the ledgers of all priests in the Chamber within 99 minutes. (It could take 22 minutes for $p$ to start the next ballot, 22 more minutes to learn that another priest had initiated a larger-numbered ballot, then 55 minutes to complete steps 1-6 for a successful ballot.) Thus, the progress condition would be met if only a single priest, who did
not leave the chamber, were initiating ballots.
\end{tcolorbox}

This work assumes bounded message-delay (4 minutes) and bounded action execution time (7 minutes):
\begin{tcolorbox}
They determined that a messenger who did not leave the Chamber would always deliver a message within 4 minutes, and a priest who remained in the Chamber would always perform an action within 7 minutes of the event that caused the action
\end{tcolorbox}

The discussions above mention a president. However, the paper does not specify how a president can be elected:
\begin{tcolorbox}[title=On presidential selection]
The Paxons chose as president the priest whose name was last in alphabetical
order among the names of all priests in the Chamber, though we don’t know exactly
how this was done.\\
\vdots\\
Given the sophistication of Paxon mathematicians, it is widely believed that
they must have found an optimal algorithm to satisfy the presidential selection
requirement. We can only hope that this algorithm will be discovered in future excavations on Paxos.
\end{tcolorbox}

\notes{
Also assumes bounded accuracy on local clocks

"""
The Paxons realized that any protocol to achieve the progress condition must involve measuring the passage of time. The protocols given above for selecting a president and initiating ballots are easily formulated as precise algorithms that set timers and perform actions when time-outs occur—assuming perfectly accurate timers. A closer analysis reveals that such protocols can be made to work with timers having a known bound on their accuracy. The skilled glass blowers of Paxos had no difficulty constructing suitable hourglass timers.
"""

This cannot be specified in TLA because TLA does not have the notion of duration. So, we can't specify natively that for $T$ minutes, no one entered or left the chamber.

A solution is to add the physical time as a dimension to all variables explicitly:

$Liveness(S, T)$: $S$ is the start of the duration. $T$ is the duration.
rule-body:
\begin{verbatim}
    \A t1, t2 \in S..S+T: Proc[t1] = Proc[t2]
\end{verbatim}

rule-head:
\begin{verbatim}
    \A p1, p2 \in leader[S+T]: p1 = p2
\end{verbatim}

The other thing to add the specs is the following operator invoked at every action:
\begin{verbatim}
    AdvanceClocks ==
        \E t \in Nat \ {0}:
            /\ GClock' = GClock + t
                \* /\ \A t2 \in GClock..GClock+t: Proc'[t2] = Proc[GClock]
            /\ \A p \in Proc[GClock]: LClock'[p] = LClock[p] + f(t, p)
    f(t, p) == CHOOSE t2 \in {t}: TRUE
\end{verbatim}
}

\subsection[Paxos-Basic]{Paxos-Basic~\cite{lamport2001paxos}}
The primary focus of this paper is to give a simpler explanation of Paxos and consistency proof, focusing on single-value consensus. Liveness however is not considered formally in this work. Lamport explicitly states in Section 2 of the paper: ``\textit{We won't try to specify precise liveness requirements}''. Following is the informal discussion from the paper:

\begin{tcolorbox}[title=Liveness Discussion]
To guarantee progress, a distinguished proposer must be selected as the
only one to try issuing proposals. If the distinguished proposer can communicate
successfully with a majority of acceptors, and if it uses a proposal
with number greater than any already used, then it will succeed in issuing a
proposal that is accepted. By abandoning a proposal and trying again if it
learns about some request with a higher proposal number, the distinguished
proposer will eventually choose a high enough proposal number.

If enough of the system (proposer, acceptors, and communication network)
is working properly, liveness can therefore be achieved by electing a
single distinguished proposer.
\end{tcolorbox}

\subsection[Paxos-Fast]{Paxos-Fast~\cite{lamport2006fast}}
This paper introduces Fast Paxos, which uses the concept of fast quorums to achieve consensus in one less round than Paxos-Basic~\cite{lamport2001paxos}. Incidentally, this paper discusses liveness of
Paxos-Basic~\cite{lamport2001paxos}.

\begin{tcolorbox}[title = Liveness Assumptions]
An agent is defined to be nonfaulty iff it eventually performs the actions
that it should, such as responding to messages. Define a set G of agents
to be good iff all the agents in G are nonfaulty and, if any one agent in G
repeatedly sends a message to any other agent in G, then that message is
eventually received—more precisely, the message is eventually delivered or
G is eventually not considered to be good. Being nonfaulty and being good
are temporal properties that depend on future behavior.

For any proposer $p$, learner $l$, coordinator $c$, and set $Q$ of acceptors,
define $LA(p, l, c, Q)$ to be the condition that asserts:

\begin{description}
\item[LA1.] $\{p, l, c\} \cup Q$ is a good set.
\item[LA2.] $p$ has proposed a value.
\item[LA3.] $c$ is the one and only coordinator that believes itself to be the
leader.
\end{description}
\end{tcolorbox}

\begin{tcolorbox}[title = Liveness Property]
For any learner $l$, if there ever exists proposer $p$,
coordinator $c$, and majority set $Q$ such that $LA(p, l, c, Q)$ holds
from that time on, then eventually $l$ learns a value.
\end{tcolorbox}

\subsection[Paxos-Vertical]{Paxos-Vertical~\cite{lamport2009vertical}}
This paper describes Vertical Paxos, which adds reconfiguration to Paxos. Liveness is considered beyond the scope of the paper:

\begin{tcolorbox}
The liveness property satisfied by a Vertical Paxos algorithm is similar to
that of ordinary Paxos algorithms, except with the added complication caused by reconfiguration---a complication that arises in any algorithm employing reconfiguration. A discussion of liveness is
beyond the scope of this paper.
\end{tcolorbox}

\section{Consensus with failure detection}
\label{appendix-live-fd}
\subsection[CT]{CT~\cite{chandra1996unreliable}}
This work introduces the concept of unreliable failure detectors and studies how they can be used to solve consensus in asynchronous systems with crash failures (no recovery). Two algorithms are presented which solve consensus using (1) any Strong failure detector and, (2) any Eventual strong failure detector. These failure detectors satisfy strong completeness and (eventual) weak accuracy. Lemma 6.2.2 of the paper defines the liveness property as:

\begin{tcolorbox}[title = Liveness Property]
Every correct process eventually decides some value.
\end{tcolorbox}

A sketch of the proof ($\frac{1}{3}$ page) is given for the algorithm that uses Strong failure detector and a systematic proof (1 page) is given for the algorithm that uses Eventual strong failure detector. Likewise, the liveness assumptions for the two algorithms are different. Algorithm that uses Strong failure detector tolerates at most $n-1$ process crashes where $n$ is the total number of processes. This is not a quorum-based algorithm and therefore not considered in this work. The second algorithm that uses Eventual strong failure detector is quorum-based and its liveness assumption is:

\begin{tcolorbox}[title = Liveness Assumptions]
\begin{itemize}
    \item If we assume that the maximum number of faulty processes is less than half then we can solve Consensus using $\Diamond\mathscr{S}$, a class of failure detectors that satisfy only eventual weak accuracy.
    \item Every nonfaulty process proposes some value.
\end{itemize}
\end{tcolorbox}

\subsection[ACT]{ACT~\cite{aguilera2000failure}}
This work studies the problems of failure detection and consensus in asynchronous systems in which processes may crash and recover (unlike the predecessor CT where crashed processes do not recover), and links may lose messages. New types of failure detectors are developed to suit the crash-recovery model as opposed to prior work that requires unstable processes to be eventually suspected forever. Using one of these failure detectors, the authors develop a consensus algorithm without requiring stable storage and two algorithms with stable storage, but using two different failure detectors. Liveness assumptions and property are as follows:

\begin{tcolorbox}[title = Liveness Assumption]
\begin{itemize}
    \item A majority of processes are good.
    \item Every good process proposes a value.
\end{itemize}
\end{tcolorbox}

\begin{tcolorbox}[title = Liveness Property]
Every good process eventually decides some value.
\end{tcolorbox}

A process is called good if it is either always up or eventually always up. Detailed systematic proofs are given for each of the algorithms.

\notes{
\section{Liveness in generalized frameworks}

\subsection[GIRAF]{GIRAF~\cite{keidar2006timeliness,keidar2006timelinessTR}}
This paper~\cite{keidar2006timeliness} and its full technical report~\cite{keidar2006timelinessTR} develop a generalized framework that models consensus algorithms as round-based. The framework, called GIRAF (General Round-based Algorithm
Framework), is a generalization of Gafni’s Round-by-Round Failure Detector (RRFD)~\cite{gafni1998rrfd}. The framework is formally specified in IOA~\cite[Algorithm 1]{keidar2006timeliness}. Computation in this model proceeds in rounds and in each round processes may query their \textit{failure detector oracle}. In the context of their framework, Keidar and Shraer give two consensus algorithms
}

\section{Other consensus algorithms and variants}
\label{appendix-live-others}
\subsection[Paxos-Time]{Paxos-Time~\cite{prisco00revisit}}
This paper performs a time analysis of a specification of Paxos-Basic ($S_{PAX}$) and Paxos in IOA. This work gives upper bounds on the running time and number of messages assuming upper bounds on message delay and action execution time.

\begin{tcolorbox}[title=Liveness]
Let $\alpha$ be a nice execution fragment of $S_{PAX}$ starting in a reachable state and lasting for more than $24l + 10nl + 13d$. Then the leader $i$ executes $Decide(v')_i$ by time $21l + 8nl + 11d$ from the beginning of $\alpha$ and at most $8n$ messages are sent. Moreover by time $24l + 10nl + 13d$ from the beginning of $\alpha$ any alive process $j$ executes $Decide(v')_j$ and at most $2n$ additional messages are sent.
\end{tcolorbox}

Informally, a nice execution fragment is one in which (1) at least a majority of processes are non-faulty, (2) all messages are delivered in a bounded time. The exact definitions, including original markings from the paper, are:

\begin{tcolorbox}[title=Definitions]
\textbf{Definition 2.1.} A step $(s_{k-1},v(t),s_k)$ of a Clock GTA is called \textit{regular} if $s_k.Clock - s_{k-1}.Clock = t$.  [GTA stands for General Timed Automaton. $v(t)$ is the time-passage action that, when executed by the automaton, increments the $Clock$ by an amount of time $t' \geq 0$, independent of the amount $t$. $s_i$ is a state of the automaton.]

\textbf{Definition 2.2.} A timed execution fragment $\alpha$ of a Clock GTA is called \textit{regular} if all the time-passage steps of $\alpha$ are regular.

\textbf{Definition 3.1.} Given a process automaton \texttt{PROCESS}$_i$, we say that an execution fragment $\alpha$ of \texttt{PROCESS}$_i$ is \textit{stable} if process $i$ is either stopped or alive in $\alpha$ and $\alpha$ is regular.

\textbf{Definition 3.2.} Given a channel \texttt{CHANNEL}$_{i,j}$, we say that an execution fragment $\alpha$ of \texttt{CHANNEL}$_{i,j}$ is \textit{stable} if no $Lose_{i,j}$ and $Duplicate_{i,j}$ actions occur in $\alpha$ and $\alpha$ is regular.

\textbf{Definition 3.5.} Given a distributed system $S$, we say that an execution fragment $\alpha$ of $S$ is \textit{stable} if: (i) for all automata \texttt{PROCESS}$_i$ modelling process $i$, $i \in S$ it holds that $\alpha|$\texttt{PROCESS}$_i$ is a stable execution fragment for process $i$; (ii) for all channels \texttt{CHANNEL}$_{i,j}$ with $i,j \in S$ it holds that $\alpha|$\texttt{CHANNEL}$_{i,j}$ is a stable execution fragment for \texttt{CHANNEL}$_{i,j}$.

\textbf{Definition 3.6.} Given a distributed system $S$, we say that an execution fragment $\alpha$ of $S$ is \textit{nice} if $\alpha$ is a stable execution fragment and a majority of the processes are alive in $\alpha$.
\end{tcolorbox}

\subsection[Paxos-PVS]{Paxos-PVS~\cite{kellomaki2004annotated}}
This paper presents a machine-checked safety proof of Paxos-Basic specified in the logic of the PVS theorem prover. Liveness is neither specified or proved.

\subsection[Chubby]{Chubby~\cite{burrows2006chubby}}
This paper describes the authors' experiences with the Chubby lock service, which is intended to provide coarse-grained locking as well as reliable (though low-volume) storage for a loosely-coupled distributed system. It describes the initial design and expected use of the service, and then the actual use and the modifications made to accommodate actual use. Safety and liveness are not discussed.

\subsection[Chubby-Live]{Chubby-Live~\cite{chandra2007paxos}}
This paper details some of authors' experiences with Chubby lock service and its deployment. The authors describe their testing apparatus---a tool that tests the system's safety and liveness:

\begin{tcolorbox}[title=Liveness Property]
\begin{enumerate}
\item \textbf{Safety mode}. In this mode, the test verifies that the system is consistent. However, the system is
not required to make any progress. For example, it is acceptable for an operation to fail to complete
or to report that the system is unavailable.

\item \textbf{Liveness mode}. In this mode, the test verifies that the system is consistent and is making progress. All operations are expected to complete and the system is required to be consistent.
\end{enumerate}

Our tests start in safety mode and inject random failures into the system. After running for a predetermined
period of time, we stop injecting failures and give the system time to fully recover. Then we switch
the test to liveness mode. The purpose for the liveness test is to verify that the system does not deadlock
after a sequence of failures.
\end{tcolorbox}

\begin{tcolorbox}[title=Formalization]
None
\end{tcolorbox}

\begin{tcolorbox}[title=Proof]
None
\end{tcolorbox}

\subsection[Paxos-SB]{Paxos-SB~\cite{kirsch2008paxos,kirsch2008paxostr}}
This work gives low-level pseudocode for Paxos that was %later 
implemented in C++ by the authors.

\begin{tcolorbox}[title=Stable Set]
A stable set is a set of processes that are eventually alive and
connected to each other, and which can eventually communicate with each other with some (unknown) bounded message delay.
\end{tcolorbox}
\notes{
Definition is not correct. The missing point is "always" or some "long enough duration".

\begin{verbatim}
StableSet(S, \Delta) ==
    /\ S \subseteq Proc
    /\ <> /\ \A p \in S: alive[p]
          /\ \A p1, p2 \in S: msgdelay[{p1, p2}] <= \Delta
    
StableSet_Proposed_Always(S, \Delta) ==
    /\ S \subseteq Proc
    /\ <>[] /\ \A p \in S: alive[p]
            /\ \A p1, p2 \in S: msgdelay[{p1, p2}] <= \Delta
\end{verbatim}
}

\notes{
\begin{tcolorbox}[title=Stable Component]
A stable component is a set of processes that are
eventually alive and connected to each other and for which all the channels to them from all
other processes (that are not in the stable component) are down.
\end{tcolorbox}

\begin{verbatim}
StableComponent(S) ==
    /\ S \subseteq Proc
    /\ <> /\ \A p \in S: alive[p]
          /\ \E \Delta \in Nat: \A p1, p2 \in S: msgdelay[{p1, p2}] <= \Delta
          /\ \A p \in S, q \notin S: msgdelay[{p, q}] = INFINITY
    
StableComponent_Proposed_Always(S) ==
    /\ S \subseteq Proc
    /\ <>[] /\ \A p \in S: alive[p]
            /\ \E \Delta \in Nat: \A p1, p2 \in S: msgdelay[{p1, p2}] <= \Delta
            /\ \A p \in S, q \notin S: msgdelay[{p, q}] = INFINITY
\end{verbatim}

\begin{tcolorbox}[title=Stable Set w/ Partial Crash/Partial Isolation]
Let $S$ be a stable set that can
communicate with bounded delay $\Delta$, and let $max\_stable\_id$ be the maximum server identifier
in $S$. Let $unstable\_crash$ be the set of servers that perpetually crash and recover, and
let $unstable\_partition$ be the set of servers that are eventually alive but whose communication
with at least one member of $S$ is not always bounded at $\Delta$. There is a time after
which (1) $S$ exists and (2) the members of $S$ do not receive any messages from servers in
$unstable\_crash \cup unstable\_partition$ whose identifiers are greater than $max\_stable\_id$.

\end{tcolorbox}

\begin{verbatim}
Unstable_Crash(P) ==
    /\ P \subseteq Proc
    /\ \A p \in P: /\ []<>(~alive[p])
                   /\ []<>(alive[p])

Unstable_Partition(P, S, \Delta) ==
    /\ P \subseteq Proc
    /\ <> \A p \in P: alive[p]
    /\ [] \E s \in S: \A p \in P: msgdelay[{p, s}] > \Delta

StableSet_PCPI(S, \Delta) ==
    <>[] /\ StableSet(S, \Delta)
         /\ \E UC, UP \subseteq Proc:
            /\ Unstable_Crash(UC)
            /\ Unstable_Partition(UP, S, \Delta)
            /\ \A s \in S, p \in UC \cup UP:
                (\A s2 \in S: p.id > s2.id) =>
                (msgdelay[{p, s}] = INFINITY)
\end{verbatim}

\begin{tcolorbox}[title=Stable Set w/ Partial Partition Isolation]
Let $S$ be a stable set that can
communicate with bounded delay $\Delta$, and let $max\_stable\_id$ be the maximum server identifier
in $S$. Let $unstable\_partition$ be the set of servers that are eventually alive but whose communication
with at least one member of $S$ is not always bounded at $\Delta$. There is a time after
which (1) $S$ exists and (2) the members of $S$ do not receive any messages from servers in
$unstable\_partition$ whose identifiers are greater than $max\_stable\_id$.
\end{tcolorbox}
}

\begin{tcolorbox}[title=Paxos-L1 (Progress)]
If there exists a stable set consisting of a majority of servers, then
if a server in the set initiates an update, some member of the set eventually executes the update.
\end{tcolorbox}
\notes{
\begin{verbatim}
Execution(S, v) ==
    \E s \in S: Init_Update(s, v)
    =>
    \E s \in S: <> Execute(s, v)

Paxos_L1_Progress ==
    \E S \subseteq Proc, \Delta \in Nat:
        /\ Cardinality(S) > Cardinality(Proc) \div 2
        /\ StableSet(S, \Delta)
        =>  \E v \in Values: Execution(S, v)
\end{verbatim}
}
The following property states that after the system becomes "nice", if some nice server initiates an update, then some nice server eventually executes it, i.e., consensus was reached and at least one server recognized that.

\begin{tcolorbox}[title=Eventual Replication]
If server $s$ executes an update and there exists a set
of servers containing $s$ and $r$, and a time after which the set does not experience any communication or process failures, then $r$ eventually executes the update.
\end{tcolorbox}
\notes{
\begin{verbatim}
Eventual_Replication ==
    \E s \in Proc, S \subseteq Proc, \Delta \in Nat, v \in Values:
        /\ Execute(s, v)
        /\ s \in S
        /\ Cardinality(S) >= 2
        /\ <>[] /\ \A p1, p2 \in S: ideal_channel(p1, p2)
                /\ \A p \in S: alive[p]
        => \A r \in S: <> Execute(r, v)
\end{verbatim}

I would personally write some/all instead of "r". I believe that they are writing like this because Yair defines $stable\_system(s, r)$ in his thesis to be a set that $s$ and $r$ are a part of and which after some time does not experience any failures.

This property is talking about propagating the results of consensus to a set of alive processes. So, process $s$ learned about the decided value (and executed it), and then it propagates this information to all other processes in some set of alive processes $s$ is a part of.

I think an easier way to write the same property is:

\begin{tcolorbox}[colback=blue!5!white, colframe=blue!75!black, title=Eventual Replication (Saksham)]
If there exists a set of servers, and a time after which the set does not experience any communication or process failures, then after that time, any update executed by a server in the set is eventually executed by all servers in the set.
\end{tcolorbox}
}

The authors claim that the progress property Paxos-L1 can be realized in the following two ways---Strong L1 and Weak L1. Weak L1 assumes that at least a single majority remains alive during the execution, whereas Strong L1 allows majorities to change during the execution such that no majority remains alive during the whole execution. The authors further claim that ensuring progress under Strong L1 is difficult in real world and doubt that any real Paxos-like implementation guarantees that.

\begin{tcolorbox}[title=Strong L1]
If there exists a time after which there is always a set of
running servers $S$, where $|S|$ is at least the majority, then if a server in the set initiates an update, some member of the set eventually executes the update.
\end{tcolorbox}
\notes{
\begin{verbatim}
AliveMajorties ==
    {S \in SUBSET Proc: /\ Cardinality(S) > Cardinality(Proc) \div 2
                        /\ \A p \in S: alive[p]}

Strong_L1_MajoritySet ==
    <>[] AliveMajorities # {}
    =>
    <>[] \E S \in AliveMajorities, v \in Values: Execution(S, v)
\end{verbatim}

Note that in this property the set $S$ appearing in rule-body is not the same as the set in the rule-head. In fact the two sets in the rule-head itself are also not the same. All the sets are one of the $AliveMajorites$ at some time.
}
\begin{tcolorbox}[title=Weak L1]
If there exists a stable component consisting of a majority of servers, and a time after which the set does not experience any communication or process failures, then if a server in the set initiates an update, some member of the set eventually executes the update
\end{tcolorbox}
\notes{
\begin{verbatim}
Weak_L1 ==
    \E S \subseteq Proc, v \in Values:
        /\ StableComponent_Proposed_Always(S)
        /\ Cardinality(S) > Cardinality(Proc) \div 2
        => Execution(S, v)
\end{verbatim}
}
\begin{tcolorbox}[title=Formalization]
None
\end{tcolorbox}

\begin{tcolorbox}[title=Proof]
None
\end{tcolorbox}

Note that Paxos-L1 implies Weak L1. And the authors claim that the leader election protocol in Section 5.3 meets Paxos-L1. The following excerpt from the paper sheds more light as to why they define Weak and Strong L1. It gives their conjecture that any Paxos-like algorithm cannot achieve Strong L1 whereas group communication based protocols can achieve Weak L1:

\begin{tcolorbox}[title=Liveness discussion]
Strong L1 requires that progress be made even in the face of a (rapidly) shifting majority.
We believe that no Paxos-like algorithm will be able to meet this requirement. If the
majority shifts too quickly, then it may never be stable long enough to complete the leader
election protocol. Weak L1, on the other hand, reflects the stability required by many group
communication based protocols, as described above. It requires a stable majority component,
which does not receive messages from servers outside the component. Since the leader
election protocol specified in Section 5.3 meets Paxos-L1, it also meets Weak L1. We note
that group communication based protocols can most likely be made to achieve Paxos-L1 by
passing information from the application level to the group communication level, indicating
when new membership should be permitted (i.e., after some progress has been made).
\end{tcolorbox}

\subsection[Mencius]{Mencius~\cite{mao2008mencius}}
Mencius is a protocol for general state machine replication that
has high performance in a wide-area network. Mencius has high throughput under high
client load and low latency under low client load even under changing wide-area network environment and client
load. Mencius was developed as a derivation from Paxos. Unlike Paxos, it is a multi-leader protocol where each leader proposes values on mutually exclusive slots. For example in a two server system, one server proposes for odd-numbered slots and the other for even-numbered slots. Even though liveness is not explicitly proven, the authors do mention that Mencius assumes eventual delivery for liveness:

\begin{tcolorbox}[title = Liveness Assumption]
Since $\mathcal{P}$ (Mencius is $\mathcal{P}$ combined with certain optimizations) only relies on the eventual delivery of messages for liveness, adding Optimization 2
and Accelerator 1 to protocol $\mathcal{P}$ still implements replicated state machines correctly.
\end{tcolorbox}

\subsection[Zab]{Zab~\cite{junqueira2011zab}}
Zab is an atomic broadcast protocol for primary-backup systems. Zab is shown to guarantee the following liveness property:

\begin{tcolorbox}
Suppose that:
\begin{itemize}
    \item a quorum $Q$ of followers is up;
    \item the followers in $Q$ elect the same process $l$ and $l$ is up;
    \item messages between a follower in $Q$ and $l$ are received in a timely fashion.
\end{itemize}

If $l$ proposes a transaction $\langle v, z \rangle$, then $\langle v, z \rangle$ is eventually committed.
\end{tcolorbox}

They provide an informal proof of liveness that is about 2 paragraphs long.

\subsection[Zab-FLE]{Zab-FLE~\cite{medeiros2012zookeeper}}
This work provides pseudocode of the three phases of Zab---Discovery, Synchronization, and Broadcast phases. It then explains a Fast Leader Election (FLE) algorithm that was adopted in Zab, and replaces the Discovery and Synchronization phases with a single Recovery phase. This optimization was later shown to produce a liveness issue where the system may get caught in an infinite loop~\cite[Section~4.2]{medeiros2012zookeeper}. This paper presents the fix adopted by Zab---re-introduce certain state variables that were removed in the (manual) optimization process. It is then informally argued that the re-introduction of previously maintained state variables solves the issue but no proof is given.

\subsection[EPaxos]{EPaxos~\cite{moraru2013there}}
EPaxos, short for Egalitarian Paxos, can be understood as a variant of Fast Paxos that (1) decides a sequence of commands instead of just one command, and (2) achieves consensus in a leaderless setting where commands may interfere with each other. Liveness is mentioned but not proved:

\begin{tcolorbox}[title = Liveness]
Finally, liveness is guaranteed with high probability as long as a majority of replicas are non-faulty:
clients and replicas use time-outs to resend messages, and a client keeps retrying a command until a replica
succeeds in committing that command.
\end{tcolorbox}

\subsection[Raft]{Raft~\cite{ongaro2014search}}
Raft is another consensus algorithm based on Paxos, inspired by the primary-backup approach of VR. This paper gives a liveness argument for the algorithm stating that the system will continue to make progress as long as the following timing condition is satisfied:

\begin{tcolorbox}[title=Timing condition]
$
broadcastTime \ll electionTimeout \ll MTBF
$\\

In this inequality $broadcastTime$ is the average time it
takes a server to send RPCs in parallel to every server
in the cluster and receive their responses; $electionTimeout$
is the election timeout described in Section 5.2; and
$MTBF$ is the average time between failures for a single
server.
\end{tcolorbox}

However, it is not clear what they exactly mean by progress.

\subsection[Paxos-Complex]{Paxos-Complex~\cite{van2015paxos}}

The paper discusses liveness in Section 3. But does not make clear what liveness means for them. They write the following as an "example of a liveness property", but do not explicitly say that they are invested in achieving this.

\begin{tcolorbox}[title=Liveness Property]
If a client broadcasts a new command to all replicas, it eventually receives at least one response.
\end{tcolorbox}

The paper however discusses liveness in the presence of the following assumptions and how to achieve these assumptions in practice but does not prove liveness:

\begin{tcolorbox}[title=Assumptions]
\begin{enumerate}
\item the clock drift of a process, that is, the rate of its clock is within some [bounded] factor of the
rate of real time
\item the time between when a nonfaulty process initiates sending a message and the message has been received and handled by a nonfaulty destination process [is bounded].
\end{enumerate}
\end{tcolorbox}

The authors also claim that their liveness assumption on messages (Assumption 2 above), can be weakened to fair links assumption. They informally argue that liveness would hold in either case.

However, there is the issue of starvation in the context of the liveness property discussed here. To counter, the authors suggest that \textit{in the absence of new requests}, liveness could be satisfied:

\begin{tcolorbox}[title=Assumptions]
\ldots Should a competing command get decided, the replica
will reassign the request to a new slot and retry. Although this may lead to starvation,
in the absence of new requests, any outstanding request will eventually get decided in
at least one slot.
\end{tcolorbox}
%\subsection{Properties in verification work}

\subsection[Raft-Verdi]{Raft-Verdi~\cite{wilcox2015verdi}} 
This paper presents Verdi---a verification framework for distributed
algorithms---and a mechine-checked proof of linearizability of
Raft as a case study. Liveness is left as future work:

\begin{tcolorbox}[title=Liveness]
  This paper focuses on safety properties for distributed systems; we leave
  proofs of liveness properties for future work.
\end{tcolorbox}

\subsection[IronRSL]{IronRSL~\cite{hawblitzel2015ironfleet}}
IronRSL is part of the IronFleet project of Microsoft. It gives machine-checked \textit{proofs} of one of Microsoft's State machine replication engines based on Paxos. The proofs are in the Dafny system. They prove many liveness properties, most notably:

\begin{tcolorbox}[title = Liveness Property]
  If the network is eventually synchronous for a live quorum of replicas,
  then a client repeatedly submitting a request eventually receives a
  reply.
\end{tcolorbox}
%It has the following until arxiv-v2, but I cannot find it in the paper:
%Every client request is responded to

Following are their liveness assumptions (from the github repository linked
from their project page given in the paper):
\begin{tcolorbox}[title = Liveness Assumptions]
\begin{description}
    \item[NoZenoBehavior.] Time goes forward without Zeno behaviors
    \item[ClockAmbiguityLimitedForHosts.] Live replicas have fairly accurate time sync [bound by $asp.max\_clock\_ambiguity$]
    \item[LiveQuorum.] The live quorum is a set of replica indices and is big enough to constitute a quorum
    \item[HostExecutesPeriodically.] Each host in the live quorum executes periodically, with period $asp.host\_period$
    \item[PersistentClientSendsRequestPeriodically.] The persistent client sends its request periodically, with period $asp.persistent\_period$
    \item[NetworkSynchronousForHosts.] The network delivers packets among the client and live quorum within time $asp.latency\_bound$
    \item[NetworkDeliveryRateBoundedForHosts.] The network doesn't deliver packets to any host in the live quorum faster than it can process them
\end{description}
\end{tcolorbox}

\notes{
6/17/21 at end of
https://github.com/microsoft/Ironclad/blob/main/ironfleet/src/Dafny/Distributed/Protocol/RSL/LivenessProof/Assumptions.i.dfy
(difference from notes below: denotation for comments, new ``The request is
valid'', new ``...but it never received a reply'')

///////////////////////////
// TOP-LEVEL ASSUMPTIONS
///////////////////////////

predicate LivenessAssumptions(
  b:Behavior<RslState>,
  asp:AssumptionParameters
  )
{
  && IsValidBehavior(b, asp.c)
  && var eb := RestrictBehaviorToEnvironment(b);
    var live_hosts := SetOfReplicaIndicesToSetOfHosts(asp.live_quorum, asp.c.config.replica_ids);
    && HostQueuesLive(eb)

    // The synchrony period start time is valid
    && 0 <= asp.synchrony_start

    // Time goes forward without Zeno behaviors
    && NoZenoBehavior(eb)

    // Live replicas have fairly accurate time sync
    && 0 <= asp.max_clock_ambiguity
    && ClockAmbiguityLimitedForHosts(eb, 0, asp.max_clock_ambiguity, live_hosts)

    // Overflow protection is never used
    && sat(0, always(OverflowProtectionNotUsedTemporal(b, asp.c.params, asp.max_clock_ambiguity)))

    // The live quorum is a set of replica indices and is big enough to constitute a quorum
    && (forall replica_index :: replica_index in asp.live_quorum ==> 0 <= replica_index < |asp.c.config.replica_ids|)
    && |asp.live_quorum| >= LMinQuorumSize(asp.c.config)

    // Each host in the live quorum executes periodically, with period asp.host_period
    && 0 < asp.host_period
    && (forall replica_index {:trigger HostExecutesPeriodically(b, asp, replica_index)} ::
         replica_index in asp.live_quorum ==> HostExecutesPeriodically(b, asp, replica_index))

    // The request is valid
    && |asp.persistent_request.request| <= MaxAppRequestSize()

    // The persistent client sends its request periodically, with period asp.persistent_period
    && 0 < asp.persistent_period
    && asp.persistent_request.Request?
    && (forall replica_index {:trigger PersistentClientSendsRequestPeriodically(b, asp, replica_index)} ::
         replica_index in asp.live_quorum ==> PersistentClientSendsRequestPeriodically(b, asp, replica_index))

    // and the persistent client never sends a request with a higher sequence number
    && sat(0, always(ClientNeverSentHigherSequenceNumberRequestTemporal(b, asp)))

    // ..but it never receives a reply.
    && sat(0, always(not(ReplyDeliveredTemporal(b, asp.persistent_request))))

    // The network delivers packets among the client and live quorum within time asp.latency_bound
    && 0 < asp.latency_bound
    && NetworkSynchronousForHosts(eb, asp.synchrony_start, asp.latency_bound,
                                  live_hosts + {asp.persistent_request.client}, live_hosts + {asp.persistent_request.client})

    // The network doesn't deliver packets to any host in the live quorum faster than it can process them
    && 0 < asp.burst_size
    && NetworkDeliveryRateBoundedForHosts(eb, asp.synchrony_start, asp.burst_size,
                                          asp.burst_size * asp.host_period * LReplicaNumActions() + 1, live_hosts)
}
}

\notes{
Below is the code from \url{https://github.com/Microsoft/Ironclad/blob/master/ironfleet/src/Dafny/Distributed/Protocol/RSL/LivenessProof/Assumptions.i.dfy}. The comments, colored red, give the English liveness assumptions.

\begin{lstlisting}[breaklines]
///////////////////////////
// TOP-LEVEL ASSUMPTIONS
///////////////////////////

predicate LivenessAssumptions(
    b:Behavior<RslState>,
    asp:AssumptionParameters
    )
{
       IsValidBehavior(b, asp.c)
    && var eb := RestrictBehaviorToEnvironment(b);
       var live_hosts := SetOfReplicaIndicesToSetOfHosts(asp.live_quorum, asp.c.config.replica_ids);
       HostQueuesLive(eb)

    <@\textcolor{red}{// The synchrony period start time is valid}@>
    && 0 <= asp.synchrony_start

    <@\textcolor{red}{// Time goes forward without Zeno behaviors}@>
    && NoZenoBehavior(eb)

    <@\textcolor{red}{// Live replicas have fairly accurate time sync}@>
    && 0 <= asp.max_clock_ambiguity
    && ClockAmbiguityLimitedForHosts(eb, 0, asp.max_clock_ambiguity, live_hosts)

    <@\textcolor{red}{// Overflow protection is never used}@>
    && sat(0, always(OverflowProtectionNotUsedTemporal(b, asp.c.params, asp.max_clock_ambiguity)))

    <@\textcolor{red}{// The live quorum is a set of replica indices and is big enough to constitute a quorum}@>
    && (forall replica_index :: replica_index in asp.live_quorum ==> 0 <= replica_index < |asp.c.config.replica_ids|)
    && |asp.live_quorum| >= LMinQuorumSize(asp.c.config)

    <@\textcolor{red}{// Each host in the live quorum executes periodically, with period asp.host\_period}@>
    && 0 < asp.host_period
    && (forall replica_index {:trigger HostExecutesPeriodically(b, asp, replica_index)} :: replica_index in asp.live_quorum ==> HostExecutesPeriodically(b, asp, replica_index))

    <@\textcolor{red}{// The persistent client sends its request periodically, with period asp.persistent\_period}@>
    && 0 < asp.persistent_period
    && asp.persistent_request.Request?
    && (forall replica_index {:trigger PersistentClientSendsRequestPeriodically(b, asp, replica_index)} :: replica_index in asp.live_quorum ==> PersistentClientSendsRequestPeriodically(b, asp, replica_index))

    <@\textcolor{red}{// and the persistent client never sends a request with a higher sequence number}@>
    && sat(0, always(ClientNeverSentHigherSequenceNumberRequestTemporal(b, asp)))

    <@\textcolor{red}{// ..but it never receives a reply.}@>
    && sat(0, always(not(ReplyDeliveredTemporal(b, asp.persistent_request))))

    <@\textcolor{red}{// The network delivers packets among the client and live quorum within time asp.latency\_bound}@>
    && 0 < asp.latency_bound
    && NetworkSynchronousForHosts(eb, asp.synchrony_start, asp.latency_bound,
                                  live_hosts + {asp.persistent_request.client}, live_hosts + {asp.persistent_request.client})

    <@\textcolor{red}{// The network doesn't deliver packets to any host in the live quorum faster than it can process them}@>
    && 0 < asp.burst_size
    && NetworkDeliveryRateBoundedForHosts(eb, asp.synchrony_start, asp.burst_size,
                                          asp.burst_size * asp.host_period * LReplicaNumActions() + 1, live_hosts)
}
\end{lstlisting}
}
\subsection[Paxos-TLA]{Paxos-TLA~\cite{chand2016formal}}
This work gives a machine-checked proof of safety of a specification of Paxos written in \tlaplus{}. This work does not discuss liveness formally.

\subsection[LastVoting-PSync]{LastVoting-PSync~\cite{druagoi2016psync}}
LastVoting is Paxos in Heard-Of (HO-) model~\cite{charron2009heard}. This protocol proceeds in synchronous rounds where messages from old rounds (except the last one) are discarded. Leader selection in this protocol is intrinsic, based on the current round number $r$, which every alive process has access to. PSync is an automatic verification tool based on HO-model and distributed algorithms written in a certain specification logic called $\mathds{CL}$:
\begin{tcolorbox}[title=Specification logic]
For fault-tolerant distributed algorithms, the specification, the
liveness assumptions, and the inductive invariants, require reasoning about set comprehensions, cardinality constraints, and process
quantification. To express these properties and check their validity
we use a fragment of first-order logic, called $\mathds{CL}$, and its semidecision procedure~\cite{druagoi2014logic}.
\end{tcolorbox}

\notes{Question: What is the relation between CL and EPR? EPR paper mentions CL but doesn't compare it with EPR. I've asked Cezara about this.}
The authors discuss liveness of LastVoting as:
\begin{tcolorbox}[title=Liveness]
In order to ensure termination $LastVoting$ assumes that eventually there exists a
sequence of four rounds, starting with Collect, where the coordinator is in the HO-set of every process, and during the Collect
and Quorum rounds of this sequence the HO-set of each process
contains at least $n/2$ processes.
\end{tcolorbox}

\subsection[Paxos-EPR]{Paxos-EPR~\cite{padon2017paxos,padon2017reducing}}
This work aims at specifying Paxos in a constrained logic, EPR (Effectively Propositional Logic), similar to PSync~\cite{druagoi2016psync,druagoi2014logic}, and then proving its safety~\cite{padon2017paxos} and liveness~\cite{padon2017reducing} properties automatically. The authors prove liveness of Stoppable Paxos~\cite{malkhi2008stoppable} under the following assumptions:

\begin{tcolorbox}[title = Liveness]
Paxos eventually reaches a decision provided that there is a node $l$ and a majority of nodes $Q$ such that
\begin{itemize}
    \item no action of $l$ or of any node in $Q$ can become forever enabled and never executed,
    \item every message sent between $l$ and the nodes in $Q$ is eventually delivered, and 
    \item eventually, no node different from $l$ tries to propose values
\end{itemize}
\end{tcolorbox}

\subsection[Paxos-Decon]{Paxos-Decon~\cite{garcia2018paxos}}
This work aims at specifying and verifying complex Paxos variants by (1) devising composable specifications for implementations of Paxos-Basic, and (2) engineering disciplines to reason about protocol-aware, semantics-preserving optimizations to Paxos-Basic. This work provides safety proof for Paxos-Basic and explains a method to derive Paxos from Paxos-Basic while preserving the safety property. Liveness is not discussed.

\subsection[Paxos-High]{Paxos-High~\cite{liu2019moderately}}
This work aims at giving precise high-level executable specifications of Paxos and its variants, based on lower level pseudo-code in Paxos-Complex~\cite{van2015paxos}. These specifications are written in DistAlgo~\cite{liu2017clarity}. The authors demonstrate how writing high-level specifications led them to find liveness violations, if messages can be lost, in Paxos-Complex---an observation confirmed by the authors of Paxos-Complex. They also demonstrate how they could easily fix the violation using smart retransmission because of the high-level nature of their specifications. Despite this, liveness is neither formally nor informally described or proven.

\notes{
\subsection{VCC}
To the best of my knowledge, Ernie's work is not public yet. But I don't remember any liveness conditions being proven there, only safety.

\subsection{Others}
A divide \& conquer approach to liveness model checking under fairness & anti-fairness assumptions
\url{https://link.springer.com/content/pdf/10.1007\%2Fs11704-017-7036-2.pdf}
\cite{Ogata2018}
}

\notes{
\subsection{Liveness for lamutex~\cite{lamport1978time}}

\begin{tcolorbox}[title=Liveness property]
III. If every process which is granted the resource eventually releases it, then every request is eventually granted.
\end{tcolorbox}

\begin{tcolorbox}[title=Formalization]
None
\end{tcolorbox}

\begin{tcolorbox}[title=Proof]
Rule 2 guarantees that after $P_i$ requests the resource,
condition (ii) of rule 5 will eventually hold. Rules 3 and
4 imply that if each process which is granted the resource
eventually releases it, then condition (i) of rule 5 will
eventually hold, thus proving condition III. 
\end{tcolorbox}

Let us define the following operators:
\begin{itemize}
    \item[rule-body]{This is the antecedent. So in any liveness property, it is the assumption. This defines what it means for a system to be nice}
    \item[rule-head]{This is the consequent. This is what liveness means for the system according to the specifier.}
\end{itemize}

"Ideal" way to specify this given the following operators. Since the processes in the system can make multiple requests without pending requests getting fulfilled, we need a way to say which request was granted and released.
\begin{itemize}
\item $request(p, t)$: Process $p$ requests resource at logical time $t$
\item $granted(p, t)$: Process $p$ granted resource for request made at logical time $t$
\item $release(p, t)$: Process $p$ releases resource for the grant for logical time $t$
\end{itemize}

rule-body: $\A p \in Proc: granted(p, t) ~> release(p, t)$.

$~>$ is leads to operator in \tlaplus{}. $P ~> Q$ implies that if $P$ ever becomes true, at some point afterwards $Q$ must be true. Its short for $[](P => <>Q)$~\cite{lamport1997operators}

rule-head: $\A p \in Proc: request(p, t) ~> granted(p, t)$.

This all can be specified in and for the fair version precisely. But for the exact version, the processes don't release with the timestamp of the request they are release-ing, rather with some timestamp greater than the request's. So if p1 requests at times 1 and 2, and issues a release(5), we can't know which request it was for. This can be fixed by adding a ghost variable, $rel$, that maintains a list of released requests for each process.

$request(p, t)$:
\begin{verbatim}
    \E req \in q[p]:
        /\ req.from = p
        /\ req.msg[1] = "request"
        /\ req.msg[2] = t
\end{verbatim}

$granted(p, t)$:
\begin{verbatim}
    /\ \E req \in q[p]:
            /\ req.from = p
            /\ req.msg[1] = "request"
            /\ req.msg[2] = t
            /\ \A req2 \in q[p]:
                req2[1] = "request" =>
                    \/ <<t, p>> = <<req2[2], req2[3]>>
                    \/ Less(<<t, p>>, <<req2[2], req2[3]>>)
            /\ \A p2 \in s[p]:
                \E m2 \in received[p]:
                    /\ m2.msg[1] = "ack"
                    /\ m2.msg[3] = p2
                    /\ m2.msg[2] > t
\end{verbatim}

$release(p, t)$:
\begin{verbatim}
    \E req \in rel[p]:
        /\ req.from = p
        /\ req.msg[1] = "request"
        /\ req.msg[2] = t
\end{verbatim}

The same problem arises in writing "If every \textit{requesting} process which is granted the resource eventually releases it, then every \textit{requesting process} is eventually granted", but can be remedied the same way.
QUESTION: can we prove that w/o adding the ghost var, it is not possible to even specify the liveness property?
}

\notes{
\subsection{Paxos time analysis in MIT MS thesis~\cite{de1997revisiting}}

\begin{tcolorbox}[title=Liveness property]
Let $\alpha$ be a nice execution fragment of $S_{PAX}$ starting in a reachable
state and lasting for more than $t^{i}_{\alpha} + 24l + 10nl + 13d$. Then the leader $i$ executes $Decide(v')_{i}$ by time $t^{i}_{\alpha} + 21l + 8nl + 11d$ from the beginning of $\alpha$ and at most $8n$ messages are sent. Moreover by time $t^{i}_{\alpha} + 24l + 10nl + 13d$ from the beginning of $\alpha$ any alive process $j$ executes $Decide(v')_j$ and at most $2n$ additional messages are sent.
\end{tcolorbox}

\begin{tcolorbox}[colback=blue!5!white, colframe=blue!75!black, title=Formalization]
semi-, I am counting the above definition as semi-formal
\end{tcolorbox}

\begin{tcolorbox}[colback=blue!5!white, colframe=blue!75!black, title=Proof]
Informal but rigorous proof. The proof breaks down the complete operation into intermediate milestones and measures time taken to reach each milestone. Finally all values are summed up.

Proof size: 3 paras ~ half page
Theorems used: 1
Lemmas used: 4
\end{tcolorbox}
}

\end{document}